%% file: droid_id.tex
\documentclass[12pt]{article}
\usepackage{setspace}
\usepackage[authoryear,round]{natbib}
\input{preamble}
\input{document_style}
\input{numbering}
\input{bibliography_setup}
\title{\Large\bfseries Partial Identification under Imperfect Measurement:\\
  A Distributionally Robust Approach}
\author{Isaac Meza\\[.3em]
  \normalsize Department of Economics, Harvard University}
\date{}
\hypersetup{
  pdftitle={Partial Identification under Imperfect Measurement: A Distributionally Robust Approach},
  pdfauthor={Isaac Meza}
}
\begin{document}
\spacingset{1}
\maketitle
\input{sections/abstract}
\clearpage
\begin{bibunit}
\previewcitationunit{main}
\previewmainspacing
\input{sections/introduction}
\input{sections/model}
\input{sections/main_results}

\clearpage
\input{sections/discussion_and_conclusion}
\clearpage
{\spacingset{1}
\phantomsection
\addcontentsline{toc}{section}{References}
\putbib
}
\end{bibunit}
\clearpage
\appendix
\begin{bibunit}
\previewcitationunit{appendix}
\previewappendixspacing
\input{appendix/supplementary_contents}
\clearpage
\input{appendix/foundations_and_core_proofs}
\input{appendix/coverage_and_benchmark_refinement}
\input{appendix/computing_parameter_sets}
\input{appendix/missing_outcomes}
\input{appendix/thresholded_networks}
\input{appendix/measurement_error_regression}

\clearpage
{\spacingset{1}
\renewcommand{\refname}{References}
\phantomsection
\label{app:references}
\addcontentsline{toc}{section}{References}
\putbib
}
\end{bibunit}
\end{document}

%% file: preamble.tex
\usepackage[margin=1in]{geometry}
\usepackage[T1]{fontenc}
\usepackage[utf8]{inputenc}
\usepackage{lmodern,microtype}
\usepackage{amsmath,amssymb,amsthm,mathtools}
\usepackage{graphicx,booktabs,tabularx,array,enumitem,float,placeins}
\usepackage{algorithm}
\usepackage[noend]{algpseudocode}
\usepackage{needspace}
\usepackage{xcolor,tikz}
\usetikzlibrary{arrows.meta,positioning}
\usepackage{xurl,hyperref}
\providecommand{\citeyearpar}[1]{\cite{#1}}
\definecolor{inkblue}{HTML}{234C60}
\definecolor{revisionpurple}{HTML}{7B2CBF}

\hypersetup{colorlinks=true,linkcolor=inkblue,citecolor=inkblue,urlcolor=inkblue}
\newtheorem{theorem}{Theorem}[section]
\newtheorem{proposition}[theorem]{Proposition}
\newtheorem{lemma}[theorem]{Lemma}
\newtheorem{corollary}[theorem]{Corollary}
\theoremstyle{definition}
\newtheorem{definition}[theorem]{Definition}
\newtheorem{assumption}{Assumption}[section]
\newtheorem{example}[theorem]{Example}
\theoremstyle{remark}
\newtheorem{remark}[theorem]{Remark}
\newcommand{\E}{\mathbb E}
\newcommand{\X}{\mathcal X}
\newcommand{\Qset}{\mathcal Q}

\DeclareMathOperator{\Lip}{Lip}

\numberwithin{equation}{section}
\setlist{itemsep=3pt,topsep=5pt}

%% file: document_style.tex
\usepackage{etoolbox}
\usepackage[font=small]{caption}
\AtBeginEnvironment{thebibliography}{\normalcolor}

\let\cite\citep
\setcitestyle{aysep={,},citesep={;}}
\newcommand{\spacingset}[1]{\setstretch{#1}}
\newcommand{\previewmainspacing}{\spacingset{1.9}}
\newcommand{\previewappendixspacing}{\spacingset{1.5}}

\newcommand{\previewtightdisplays}{%
  \setlength{\abovedisplayskip}{2pt}%
  \setlength{\belowdisplayskip}{2pt}%
  \setlength{\abovedisplayshortskip}{2pt}%
  \setlength{\belowdisplayshortskip}{2pt}%
}
\apptocmd{\normalsize}{\previewtightdisplays}{}{}
\apptocmd{\small}{\previewtightdisplays}{}{}
\apptocmd{\footnotesize}{\previewtightdisplays}{}{}
\AtBeginDocument{\previewtightdisplays}

%% file: numbering.tex
\usepackage{etoolbox}
\newcommand{\previewmainnumbering}{}
\newcommand{\previewappendixnumbering}{}
\makeatletter
\newcommand{\previewequationalias}[1]{%
  \begingroup
  \protected@edef\@currentlabel{\p@equation\theequation}%
  \label{#1}%
  \endgroup
}
\makeatother
\newcommand{\previewnumberedenvironment}[2]{%
  \csletcs{previewappendix#1}{#1}%
  \csletcs{endpreviewappendix#1}{end#1}%
  \newtheorem{previewmain#1}{#2}%
  \csdef{previewmain#1autorefname}{#2}%
  \appto\previewmainnumbering{%
    \csletcs{#1}{previewmain#1}%
    \csletcs{end#1}{endpreviewmain#1}%
  }%
  \appto\previewappendixnumbering{%
    \csletcs{#1}{previewappendix#1}%
    \csletcs{end#1}{endpreviewappendix#1}%
  }%
}
\theoremstyle{plain}
\previewnumberedenvironment{theorem}{Theorem}
\previewnumberedenvironment{proposition}{Proposition}
\previewnumberedenvironment{lemma}{Lemma}
\previewnumberedenvironment{corollary}{Corollary}
\theoremstyle{definition}
\previewnumberedenvironment{definition}{Definition}
\previewnumberedenvironment{assumption}{Assumption}
\previewnumberedenvironment{example}{Example}
\theoremstyle{remark}
\previewnumberedenvironment{remark}{Remark}

\AtBeginDocument{\previewmainnumbering}
\apptocmd{\appendix}{\previewappendixnumbering}{}{%
  \PackageError{numbering}{Could not set appendix numbering}{}%
}

%% file: bibliography_setup.tex
\usepackage{bibunits}
\defaultbibliographystyle{hapalike_mod}
\defaultbibliography{0_refs}

\makeatletter
\newcommand{\previewcitationunit}[1]{%
  \def\@extra@b@citeb{.#1}%
  \def\@extra@binfo{.#1}%
}
\makeatother

%% file: sections/abstract.tex
\begin{abstract}
Imperfect measurements allow different underlying distributions to generate the same observations.
Restrictions linking unobserved components can make it difficult to characterize all compatible distributions and compute the resulting range of parameter values.
We ask how to construct computable parameter bounds and tighten them without additional data or stronger assumptions.
We develop a general framework for partial identification under imperfect measurement based on distributionally robust optimization.
We use measurement restrictions to construct a neighborhood around a benchmark distribution, ensuring that every compatible distribution lies within it.
Optimizing over this neighborhood gives parameter bounds through a common procedure that can be adapted to different measurement problems.
The choice of benchmark affects how informative these bounds are.
We establish conditions under which changing it can tighten the bounds while preserving every parameter value consistent with the same data and assumptions.
Applications to networks, missing outcomes, and regression illustrate how the framework tightens bounds and can recover the full range of compatible values.
\end{abstract}

\vspace{1.5\baselineskip}
\noindent\textit{Keywords:} Partial identification; imperfect measurement; distributionally robust optimization; Wasserstein distance; benchmark selection.

%% file: sections/introduction.tex
\section{Introduction}
\label{sec:introduction}

Economic parameters often depend on information that the data reveal only partially.
Outcomes may be recorded only for selected individuals, or a reported network may conceal the strength of interactions.
Different explanations for this missing information can imply different population means or structural effects, even when the distribution of the observations is known.
The sharp identified set collects all values consistent with the observations and maintained assumptions \cite{Manski1989}.
Computing it can be difficult when structural restrictions link unobserved components.
We ask how to construct computable outer bounds from the measurement model and systematically tighten them by selecting a benchmark distribution, while preserving every parameter value allowed by the same observations and identifying assumptions.

Our primary contribution is a general framework for partial identification under imperfect measurement based on distributionally robust optimization (DRO).
Starting from a benchmark latent distribution, such as an imputation, we use measurement restrictions to bound its distance from every compatible distribution.
The benchmark serves as the center of a neighborhood containing all distributions consistent with the data and assumptions.
Intuitively, the Wasserstein distance measures the smallest average change in latent values needed to transform one distribution into another.
The neighborhood's radius sets the limit on this change.
This neighborhood gives computable parameter bounds.
Choosing its center carefully can tighten both the neighborhood and the resulting bounds while retaining every compatible distribution.
Researchers specify the measurement model, the parameter of interest, and methods for solving the required optimization problems.
The framework then follows the same steps to construct bounds, allowing the procedure to be automated and reused across applications.

Our second contribution establishes when changing the benchmark yields bounds that are at least as tight as the original ones and still contain every parameter value consistent with the data and assumptions.
Theorem~\ref{main:recentering} and Corollary~\ref{main:triangle-certificate} provide both a simple rule for ensuring this guarantee and a more general test that allows additional benchmark choices.
We use these results to formulate optimization programs for benchmark selection.

We study scalar parameters defined by population moments, taking the observed law as known.
The main results use a compact latent space, continuous moments, and a closed, convex class of retained laws.
The framework allows nonsmooth and noninvertible measurement rules.
The approach can save computation when compatibility restrictions are easier to use in calibrating the neighborhood than to impose directly in moment optimization.

Our network application shows how to refine bounds on a moment-defined network-exposure slope when restrictions on which hidden links can coexist make the sharp set difficult to compute.
Known limits on unreported link weights give valid initial bounds, and benchmark selection tightens them without additional data or stronger identifying assumptions.\footnote{%
Network-measurement research studies sampled networks \cite{ChandrasekharLewis2016}, censored nominations \cite{Griffith2022}, unobserved links \cite{LewbelQuTang2023}, and network identification from panel outcomes \cite{dePaulaRasulSouza2025}.}
The missing-outcome and regression examples show that the same procedure can also recover the sharp identified set.

\subsection{Related work}
\label{sec:related-work}

Our use of distributional neighborhoods builds on robust bounds for counterfactuals and treatment effects under uncertainty about latent distributions, marginals, or covariate noise \cite{ChristensenConnault2023,GuRussell2024,FanParkXu2025,GuoEtAl2022}.
The question here is how to choose such a neighborhood for a fixed measurement model.
We calibrate its radius to cover every compatible law, making the benchmark a computational choice that can change without altering the identifying assumptions.

This containment requirement connects our construction to random-set, latent-variable, and transport methods that characterize compatibility with incomplete observations \cite{BeresteanuMolchanovMolinari2011,BeresteanuMolchanovMolinari2012, EkelandGalichonHenry2010,Schennach2014}.
Imposing the full compatible class can remain computationally demanding.
We use measurement restrictions to ensure that a simpler class used for moment optimization still contains every compatible law.
The requirement to retain every compatible law also distinguishes our objective from transport-based moment fitting, which seeks a least-cost adjustment satisfying the moments \cite{SchennachStarck2026}.

Turning these compatibility conditions into bounds is the focus of transport methods and automated polynomial optimization, which provide sharp identified sets or certified approximations \cite{FanParkPassShi2025,Voronin2025,FranguridiLiu2026,DuarteEtAl2024}.
Existing approximation and refinement guarantees do not by themselves justify changing a distributional benchmark.
We certify such changes by requiring the new enclosing class to lie inside the original one.
The resulting guarantee of no wider bounds applies to every target moment covered by the framework.

Selecting that benchmark introduces a geometric problem studied by enclosing-ball methods, which choose centers by minimizing a covering radius \cite{MordukhovichNamVillalobos2013,WangEtAl2025}.
A smaller radius alone does not ensure tighter parameter bounds.
Our nesting conditions use the retained restrictions to determine when a change in center also refines the envelope.
They can therefore certify refinements that a comparison of the unrestricted balls would miss.

Our framework links these ideas through benchmark choice: measurement restrictions determine what the envelope must contain, and nesting conditions determine which changes can refine the moment bounds.
The benchmark therefore governs the computational approximation, while the observed law, identifying assumptions, and sharp identified set remain fixed.

\paragraph{Organization.}
Section~\ref{sec:model} presents the model, Section~\ref{sec:main-results} develops and applies the framework, and Section~\ref{sec:conclusion} concludes.
Appendices A--C provide proofs and computational details; D--F develop the missing-outcome, network, and regression applications.

%% file: sections/model.tex
\section{Model: Learning from imperfect measurements}
\label{sec:model}

\subsection{The parameter and the measurement rule}
We study what can be learned about a population parameter when the variables defining it are only partly observed.
Let $X^\star$ denote the complete, latent data and let $P_{X^\star}$ denote their true probability law.
A scalar parameter $\theta_0\in\Theta\subseteq\mathbb R$ is characterized by
\begin{equation}
 \E_{P_{X^\star}}[g_\theta(X^\star)]=0
 \quad\text{if and only if}\quad \theta=\theta_0.
 \label{eq:true-moment}
\end{equation}
The known function $g_\theta$ specifies a population relationship.
For a mean, for example, $g_\theta(X^\star)=\theta-Y$.
A regression slope is defined by an analogous moment involving both the outcome and the regressor.

We observe $X$, generated by a known measurable measurement rule
\begin{equation}
 X=f(X^\star).
 \label{eq:measurement-map}
\end{equation}
The map $f$ may combine a true value with measurement error, discard a missing outcome, or replace an interaction weight by a binary report.
We write $\X^\star$ and $\X$ for the latent and observed spaces, and $P_X$ for the observed probability law.
This paper takes $P_X$ as known and studies \emph{population identification}.
Estimation from a finite sample is a subsequent question.

With random measurement error, we include signal and noise in $X^\star=(\text{signal},\text{noise})$, so $f$ remains deterministic given the complete state, as illustrated in Appendix~\ref{app:measurement-error}.
Appendix~\ref{app:stochastic} gives the alternative formulation with stochastic measurement and specifies the required error restrictions.
A noninvertible $f$ can leave several latent explanations for the same observations.
Whether those explanations imply different parameter values depends on the target and the maintained assumptions.

\subsection{Compatible laws and the identified set}
Let $\mathcal P(\X^\star)$ denote the probability laws on the latent space.
Collect the maintained support and other restrictions in a class $\mathcal C\subseteq\mathcal P(\X^\star)$.
These may include bounds on network weights, restrictions on measurement error, or conditions ensuring that a regression slope is well-defined.
We assume the true law belongs to $\mathcal C$ and retain any conditions needed for the moment to define the target uniquely.
All moments below are assumed to exist.

\begin{definition}[Compatible latent laws]
\label{def:exact-law-set}
The laws consistent with the observed population are
\begin{equation}
 \Qset_f(P_X)=\{Q\in\mathcal C:f_\#Q=P_X\}.
 \label{eq:compatible-laws}
\end{equation}
Here $f_\#Q$ means the distribution obtained by applying $f$ to data with law $Q$.\footnote{Formally, $f_\#Q$ is the pushforward of $Q$ under $f$: $(f_\#Q)(A)=Q(f^{-1}(A))$ for every measurable $A\subseteq\X$.}
Thus a compatible law reproduces the \emph{entire} observed distribution, including the joint distribution of any variables measured without error.
\end{definition}

We assume $\Qset_f(P_X)$ is nonempty when describing its identified parameters.

\begin{definition}[Exact identified set]
\label{def:exact-id}
The exact population identified set is
\begin{equation}
 \Theta_{\mathrm{ID}}(P_X)
 =\{\theta\in\Theta:\exists Q\in\Qset_f(P_X)
                   \text{ such that }\E_Q[g_\theta(X^\star)]=0\}.
 \label{eq:identified-set}
\end{equation}
\end{definition}

Each value in this set has a compatible latent explanation.
Values outside it cannot satisfy the stated model.
This is the meaning of \emph{sharp identification}.
A singleton gives point identification; a larger set gives partial identification.
The set need not be an interval in general.
Figure~\ref{fig:model-geometry} summarizes the two steps: identify admissible laws, then retain the parameter values they can support.
Additional structural restrictions must also be imposed when claiming sharpness for a stronger model.
The network example illustrates this distinction.

\begin{figure}[htbp]
\centering
\makebox[\linewidth][c]{\scalebox{1.1}{%
\begin{tikzpicture}[
    >=Latex,
    line cap=round,
    line join=round,
    font=\Large,
    every node/.style={inner sep=1.5pt}
]
\definecolor{modelteal}{HTML}{157A78}

\draw[black!45,line width=1pt] (1.48,0.68) ellipse [x radius=1.48, y radius=1.10];
\node[text=black!75] at (0.10,1.92) {\large$\Theta$};

\begin{scope}[shift={(1.42,0.36)}, rotate=28]
    \fill[inkblue!21] (0,0) ellipse [x radius=0.84, y radius=0.48];
    \draw[inkblue,dashed, line width=0.7pt] (0,0) ellipse [x radius=0.84, y radius=0.48];
\end{scope}
\node[text=inkblue] at (0.88,0.97) {\normalsize$\Theta_{\mathrm{ID}}$};

\coordinate (theta) at (1.72,0.45);
\fill[inkblue] (theta) circle (1.7pt);
\node[text=inkblue,below left=1pt and 0.3pt of theta] {\small{$\theta_0$}};

\filldraw[fill=black!6,draw=black!45,line width=0.9pt] (4.55,-0.45) rectangle (9.35,1.95);
\node[text=black!75] at (8.82,2.22) {\normalsize$\mathcal{P}(\mathcal{X}^{\star})$};

\begin{scope}[shift={(6.80,0.65)},scale=1.30]
  \input{figures/compatible_class_outline}
\end{scope}
\node[text=inkblue] at (6.72,0.23) {\normalsize$\Qset_f(P_X)$};

\coordinate (pxs) at (7.05,0.82);
\fill[inkblue] (pxs) circle (1.7pt);
\node[text=inkblue,above=1pt of pxs] {\normalsize$P_{X^\star}$};

\filldraw[fill=black!6,draw=black!45,line width=0.9pt] (10.35,-0.45) rectangle (14.65,1.95);
\node[text=black!75] at (14.42,2.22) {\normalsize$\mathcal{P}(\mathcal{X})$};

\coordinate (px) at (11.45,0.70);
\fill[modelteal] (px) circle (1.7pt);
\node[text=modelteal,above right=2pt and 4pt of px] {\normalsize$P_X$};

\draw[black!65,line width=0.9pt]
    (theta) .. controls (3.10,1.55) and (5.10,0.55) .. (6.88,0.82);
\node[text=black!75] at (3.55,2.22) {\small{$\E_Q[g_\theta(X^\star)]=0$}};

\draw[modelteal,->, line width=0.9pt]
    (7.18,0.86) .. controls (8.75,1.18) and (9.95,1.10) .. (11.28,0.75);
\node[text=modelteal] at (9.80,1.3) {\small{$f_\#$}};

\draw[inkblue,dotted, ->, line width=0.9pt]
    (11.34,0.48) .. controls (9.80,0.02) and (8.55,0.04) .. (6.85,0.54);

\end{tikzpicture}
}}
\caption{The solid $f_\#$ arrow maps the true latent law $P_{X^\star}$ to $P_X$.
The dotted arrow denotes the compatible-law correspondence $P_X\mapsto\Qset_f(P_X)$.
The left curve denotes the moment relation $\E_Q[g_\theta(X^\star)]=0$.
The shaded, dashed region is the identified set $\Theta_{\mathrm{ID}}(P_X)$.}
\label{fig:model-geometry}
\end{figure}

\subsection{Two examples}
\begin{example}[Missing outcomes]
\label{ex:manski-missingness}
Let $D=1$ indicate participation in the workforce and let $Y\in\{0,1\}$ indicate income above a threshold.
Income is observed only for workers.
The target is the population mean $\theta_0=\E[Y]$, with $\Theta=[0,1]$ and
\[
 X^\star=(D,Y),\qquad X=f(D,Y)=(D,DY),\qquad
 g_\theta(D,Y)=\theta-Y.
\]
Both latent states $(0,0)$ and $(0,1)$ produce the observed state $(0,0)$.
Write $p_{ab}=P_X\{(a,b)\}$.
The observed law fixes $p_{10}$ and $p_{11}$, but leaves the success probability among nonworkers unrestricted.
Let $Q_t$ be a possible distribution of $(D,Y)$ that preserves the observed worker outcomes and group proportions, while allocating a fraction $t\in[0,1]$ of the nonworker group to $Y=1$.
Then
\begin{equation}
 \E_{Q_t}Y=p_{11}+p_{00}t,
 \qquad \Theta_{\mathrm{ID}}(P_X)=[p_{11},p_{11}+p_{00}].
 \label{eq:missing-sharp}
\end{equation}
For example, $p_{00}=0.2$ and $p_{11}=0.4$ give the interval $[0.4,0.6]$.
This familiar missing-outcome bound \cite{Manski1990,HorowitzManski2000} is directly computable because a single unknown probability $t$ describes all compatible laws.
It provides a transparent benchmark for evaluating the later robust construction.
The full law and sharpness argument are in Appendix~\ref{app:missingness}.
\end{example}

\Needspace{4\baselineskip}
\begin{example}[Network-exposure slope with a thresholded network]
\label{ex:lim-threshold}
There are $m\ge2$ treated core individuals, indexed $1,\ldots,m$, and two additional individuals: the treated contact $h$ and the untreated contact $s$.
Thus $n=m+2$.
Let $Y\in\mathbb R^n$ collect their outcomes and $T\in\{0,1\}^n$ their treatment indicators, so $T_1=\cdots=T_m=T_h=1$ and $T_s=0$.
The latent network $G^\star\in[0,1]^{n\times n}$ records peer weights: $G^\star_{ij}$ is the weight individual $i$ places on individual $j$'s treatment.

A known set $E$ specifies eligible pairs among the core individuals, oriented from the smaller to the larger index.
Each pair has a known weight $w_{ij}>0$ and capacity $\eta w_{ij}$, where $\eta>0$ is a known scale.
Individual $i$ has an unobserved role $z_i\in\{0,1\}$, distinct from treatment status.
An eligible link is active only between different roles.
Define
\[
 \begin{aligned}
 G^\star_{ij}
   &=\eta w_{ij}\mathbf1\{z_i\ne z_j\}
       &&((i,j)\in E),\\
 G^\star_{is}
   &=1-\sum_{j:(i,j)\in E}G^\star_{ij}
       &&(i=1,\ldots,m),\\
 G^\star_{h1}&=G^\star_{s1}=1,
 \end{aligned}
\]
with all other entries zero.
Each core individual allocates a unit of weight between eligible treated peers and the untreated contact $s$.
An active peer link therefore reduces the weight on $s$.
The last line specifies the rows of the two additional individuals.

Outcomes and treatments are observed without error.
The network is reported only through a known threshold $0<\xi<1$:
\[
 G=\mathbf1\{G^\star\ge\xi\}\quad\text{entrywise}.
\]
Choose $\eta$ so that $\eta\max_{i\le m}\sum_{j:(i,j)\in E}w_{ij}<\min\{\xi,1-\xi\}$.
Then every eligible peer weight is below $\xi$, every core weight on $s$ exceeds $\xi$, and each row sums to one.
For fixed contact labels, every hidden-role assignment therefore gives the same reported network.

A motivating outcome model is
\[
 Y=\beta_0T+\theta_0G^\star T+\varepsilon,
 \qquad \E[\varepsilon\mid G^\star,T]=0.
\]
We take $\beta_0$ as known to isolate network uncertainty.
We maintain the scalar moment below and graph restrictions, but not the conditional mean restriction.\footnote{See Appendix~\ref{app:network}.}

In the general notation, the latent state and measurement map are
\[
 X^\star=(Y,T,G^\star),\qquad
 X=f(X^\star)=(Y,T,\mathbf1\{G^\star\ge\xi\})=(Y,T,G),
\]
and the target $\theta_0\in\Theta=\mathbb R$ is defined by the moment
\begin{equation}
 g_\theta(Y,T,G^\star)
 =T^\top(Y-\beta_0T)-\theta T^\top G^\star T.
 \label{eq:network-moment}
\end{equation}
The maintained class $\mathcal C$ imposes the treatment pattern and graph construction above.
Hidden roles may depend arbitrarily on the observed record.
We take the joint law $P_X$ of $(Y,T,G)$ as known; compatible laws $Q\in\Qset_f(P_X)$ additionally reproduce this entire law.
Consequently,
\[
 \E_Q[g_\theta(X^\star)]
 =\E_{P_X}[T^\top(Y-\beta_0T)]
   -\theta\E_Q[T^\top G^\star T].
\]
The first expectation is known, although outcomes and contact labels may vary across observations.
Exposure $T^\top G^\star T$ is the total network weight between treated units.
The fixed link $G^\star_{h1}=1$ joins two treated individuals, so exposure is at least one.
Its positive expectation ensures that the moment defines a unique slope under each compatible law.

The shared roles restrict which hidden links can coexist.
Maximizing their total exposure can require a difficult global search, although the reported network is unchanged.
The later calculations replace this search with a coverage certificate and constrained moment optimization; explicit endpoint formulas validate the results.
Appendix~\ref{app:network} gives the compatible laws and the numerical specification.
\end{example}

\subsection{Informative bounds from imperfect measurements}
\label{sec:research-question}
Ideally, we would compute the sharp identified set directly.
It consists of all parameter values consistent with the observed law and the maintained measurement model.
This set describes exactly the uncertainty left by imperfect observation.
Our target is the parameter, so recovering this set does not require reconstructing every feature of the latent law.

The difficulty is that compatible laws may be hard to describe or optimize over.
A complicated measurement rule can leave many latent explanations that must reproduce the full observed distribution and satisfy the maintained restrictions.
Searching over these distributions can make exact parameter bounds computationally impractical, even when the target is a single mean or slope.

We propose a relaxation based on distributionally robust optimization (DRO) and transport geometry.
We allow enough variation around a benchmark latent law to include every compatible law, while retaining restrictions that are convenient to enforce.
Optimizing the moment over this structured class gives a \emph{DRO outer identified set} containing the sharp set.
The aim is to make computation more tractable while controlling the extra parameter uncertainty introduced by the relaxation.
Geometrically, this larger class of latent laws forms an envelope containing every compatible law.
A better choice of center can reduce the extra width it creates in parameter space, even if the envelope still contains additional laws.

The question is how to construct and refine computable parameter bounds from the measurement model while preserving every value consistent with the data and maintained assumptions.

%% file: figures/compatible_class_outline.tex
\path[fill=inkblue!21,draw=inkblue,line width=1.05pt]
  (-.80,-.10)
  .. controls (-.95,.22) and (-.67,.69) .. (-.33,.72)
  .. controls (-.05,.75) and (-.11,.44) .. (.16,.50)
  .. controls (.46,.66) and (.68,.66) .. (.75,.32)
  .. controls (.77,.14) and (.63,.06) .. (.69,-.21)
  .. controls (.87,-.57) and (.31,-.78) .. (.03,-.61)
  .. controls (-.27,-.41) and (-.52,-.83) .. (-.66,-.47)
  .. controls (-.74,-.28) and (-.70,-.24) .. (-.80,-.10)
  -- cycle;

%% file: sections/main_results.tex
\section{Main results: Computing the DRO identified set}
\label{sec:main-results}

We answer the question in Section~\ref{sec:research-question} by showing how to compute the parameter bounds implied by the proposed envelope and how to reduce their excess width.
Theorem~\ref{main:feasibility} computes the parameter values supported by laws in the envelope.

Theorem~\ref{main:tube} gives a simpler bound by controlling how far the expected moment can move from its benchmark value.
It also provides an exact alternative calculation when only support and distance restrictions define the envelope.
Theorem~\ref{main:recentering} gives conditions under which changing the benchmark keeps the new envelope inside the original one while retaining every compatible law.
This can tighten the parameter bounds without excluding values consistent with the data and maintained assumptions.

The examples illustrate each result and show when recentering tightens the bounds.
Proofs of the three main theorems and their supporting results are in Appendix~\ref{app:main-proofs}.

\subsection{The ambiguity model and its assumptions}
Let $Q_0$ be a benchmark law on the same space as the candidate latent laws.
For example, imputing values for missing outcomes defines such a law.
When the observed and latent spaces differ, using $P_X$ requires an explicit embedding or projection, as in the regression application in Appendix~\ref{app:measurement-error}.

The Wasserstein distance measures the least average movement between two laws:
\begin{equation}
 W_1(Q,Q_0)=\inf_{\pi\in\Pi(Q,Q_0)}\int d(x^\star,z^\star)
                                      \,\pi(dx^\star,dz^\star).
 \label{eq:wasserstein-distance}
\end{equation}
Here $\Pi(Q,Q_0)$ is the set of joint laws with marginals $Q$ and $Q_0$; each joint law describes how mass moves between latent states.
The distance $d$ specifies which changes are costly.

\Needspace{4\baselineskip}
\begin{assumption}[Compact space and continuous moments]
\label{ass:ambient-regularity}
The latent space $(\X^\star,d)$ is a compact metric space, and $g_\theta:\X^\star\to\mathbb R$ is continuous for each $\theta\in\Theta$.
\end{assumption}

\begin{assumption}[Valid convex relaxation]
\label{ass:retained-class}
The class $\mathcal H\subseteq\mathcal P(\X^\star)$ is weakly closed and convex, contains $\Qset_f(P_X)$, and includes $Q_0$.
It does not depend on $\theta$.
\end{assumption}

The class $\mathcal H$ retains restrictions we can conveniently enforce, such as network support or known observable marginals.
It may be all laws on the chosen space.
Closedness keeps limits admissible; convexity allows mixtures of admissible laws.
Together with compactness, these conditions ensure that the minimum and maximum moments are attained and that every intermediate value is attainable.
They do not require $f$ to be continuous or the exact compatible class to be convex.

For computation, we combine the retained restrictions with a transport budget $\delta\ge0$ around the benchmark:
\begin{align}
 \mathcal A_\delta(Q_0)
 &=\{Q\in\mathcal H:W_1(Q,Q_0)\le\delta\},
 \label{eq:ambiguity-class}\\
 \Theta_{\mathrm{DRO}}(\delta;Q_0)
 &=\{\theta\in\Theta:\exists Q\in\mathcal A_\delta(Q_0),
                               \ \E_Q[g_\theta(X^\star)]=0\}.
 \label{eq:dro-set}
\end{align}
We call $\mathcal A_\delta(Q_0)$ the \emph{envelope}: it is the class of latent laws over which we optimize.
The set $\Theta_{\mathrm{DRO}}(\delta;Q_0)$ collects the parameter values these laws can support.
To calibrate the budget and assess individual candidates, define the covering radius and the feasibility cost, respectively:\footnote{We use the convention $\inf\varnothing=+\infty$.}
\begin{align}
 r^\star(Q_0)&=\sup_{Q\in\Qset_f(P_X)}W_1(Q,Q_0),
 \label{eq:covering-radius}\\
 R(\theta;Q_0)&=\inf_{\substack{Q\in\mathcal H\\\E_Q[g_\theta(X^\star)]=0}}
                   W_1(Q,Q_0).
 \label{eq:feasibility-profile}
\end{align}
A budget $\delta\ge r^\star(Q_0)$ includes every compatible law, so the resulting parameter set contains the sharp identified set.
A computable upper bound on $r^\star$ suffices for this guarantee.
The cost $R(\theta;Q_0)$ measures the least movement from $Q_0$ to a law in $\mathcal H$ satisfying the candidate moment.
Theorem~\ref{main:feasibility} shows that comparing this cost with $\delta$ gives an alternative way to compute the parameter set.

The envelope intersects the benchmark ball with the retained class $\mathcal H$.
Covering all compatible laws preserves their zero-moment witnesses, while the additional laws can introduce additional parameter values.
Figure~\ref{fig:algorithm-recentering} shows this geometry together with benchmark refinement.

\subsection{Two ways to compute feasible parameters}
For a chosen benchmark and radius, the next two results turn the envelope into parameter bounds.
Theorem~\ref{main:feasibility} uses the smallest and largest expected moments allowed by the envelope to determine exactly which parameters belong to the DRO set.
Under an additional smoothness condition, Theorem~\ref{main:tube} gives a simpler outer bound using the benchmark moment and radius.
When support and distance are the only law restrictions, it also gives an exact optimization formula for the moment bounds.
These calculations supply the inputs to the parameter-set computation in Section~\ref{sec:parameter-computation}.

\Needspace{4\baselineskip}
\begin{theorem}[General computation of the DRO set]
\label{main:feasibility}
Under Assumptions~\ref{ass:ambient-regularity} and \ref{ass:retained-class}, the attainable values of $\E_Q[g_\theta(X^\star)]$ over $\mathcal A_\delta(Q_0)$ form a closed interval.
For every $\theta\in\Theta$, the following are equivalent:
\begin{equation}
 \begin{aligned}
 \textnormal{(i)}\quad&\theta\in\Theta_{\mathrm{DRO}}(\delta;Q_0);\\[2pt]
 \textnormal{(ii)}\quad&
 \min_{Q\in\mathcal A_\delta(Q_0)}\E_Q[g_\theta(X^\star)]
 \le0\le
 \max_{Q\in\mathcal A_\delta(Q_0)}\E_Q[g_\theta(X^\star)];\\[2pt]
 \textnormal{(iii)}\quad&R(\theta;Q_0)\le\delta.
 \end{aligned}
 \label{main:feasibility-test}
\end{equation}
If $\delta\ge r^\star(Q_0)$, then $\Theta_{\mathrm{ID}}(P_X)\subseteq \Theta_{\mathrm{DRO}}(\delta;Q_0)$.
If also $\mathcal H=\Qset_f(P_X)$, the two parameter sets are equal.
\end{theorem}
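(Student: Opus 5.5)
The plan is to first establish that the envelope $\mathcal A_\delta(Q_0)$ is nonempty, weakly compact and convex, and then read off each claim from standard facts about linear functionals on such sets.

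\emph{Structure of the envelope.} Nonemptiness is immediate because $Q_0\in\mathcal H$ and $W_1(Q_0,Q_0)=0$. Since $\X^\star$ is compact (Assumption~\ref{ass:ambient-regularity}), $\mathcal P(\X^\star)$ is weakly compact by Prokhorov's theorem. On a compact metric space $W_1$ metrizes the weak topology, so the ball $\{Q:W_1(Q,Q_0)\le\delta\}$ is weakly closed. The ball is convex because $W_1(\cdot,Q_0)$ is convex: mixing optimal couplings gives a coupling of the mixture whose cost is the mixed cost. Intersecting with $\mathcal H$, which is weakly closed and convex by Assumption~\ref{ass:retained-class}, shows that $\mathcal A_\delta(Q_0)$ is weakly compact and convex.

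\emph{The moment map and the equivalences.} Fix $\theta$. Because $g_\theta$ is continuous and bounded on compact $\X^\star$, the map $Q\mapsto\E_Q[g_\theta]$ is weakly continuous and affine. A continuous function on a compact set attains its minimum and maximum. The image of a convex set under an affine map is a convex subset of $\mathbb R$, that is, an interval. Together these show the image is exactly $[\min,\max]$, which proves the first claim.
\begin{itemize}
\item (i)$\Leftrightarrow$(ii): $\theta\in\Theta_{\mathrm{DRO}}(\delta;Q_0)$ exactly when $0$ lies in this image, which is the closed interval $[\min,\max]$.
\item (i)$\Rightarrow$(iii): a witness $Q$ satisfies $W_1(Q,Q_0)\le\delta$, $Q\in\mathcal H$ and $\E_Q[g_\theta]=0$, so it is feasible in the infimum defining $R$, giving $R(\theta;Q_0)\le\delta$.
\item (iii)$\Rightarrow$(i): this is the only delicate step, because we need the infimum in $R(\theta;Q_0)$ to be attained. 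The feasible set $\{Q\in\mathcal H:\E_Q[g_\theta]=0\}$ is weakly closed in the compact space $\mathcal P(\X^\star)$, hence compact. The map $Q\mapsto W_1(Q,Q_0)$ is weakly continuous, indeed lower semicontinuous suffices, so the infimum is attained whenever the feasible set is nonempty. If the set is empty, $R=+\infty>\delta$ by the convention $\inf\varnothing=+\infty$, so (iii) fails and nothing needs proving. Otherwise a minimizer $Q^\ast$ satisfies $W_1(Q^\ast,Q_0)=R\le\delta$, so $Q^\ast\in\mathcal A_\delta(Q_0)$ is a witness for (i).
\end{itemize}

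\emph{Outer bound and sharpness.} If $\delta\ge r^\star(Q_0)$, every $Q\in\Qset_f(P_X)$ satisfies $W_1(Q,Q_0)\le r^\star(Q_0)\le\delta$. Each such $Q$ also lies in $\mathcal H$ by Assumption~\ref{ass:retained-class}, so $\Qset_f(P_X)\subseteq\mathcal A_\delta(Q_0)$. Consequently any witness for $\theta\in\Theta_{\mathrm{ID}}(P_X)$ is also a witness in \eqref{eq:dro-set}, which gives $\Theta_{\mathrm{ID}}(P_X)\subseteq\Theta_{\mathrm{DRO}}(\delta;Q_0)$. If moreover $\mathcal H=\Qset_f(P_X)$, then $\mathcal A_\delta(Q_0)\subseteq\mathcal H=\Qset_f(P_X)$, and combined with the reverse inclusion just shown, $\mathcal A_\delta(Q_0)=\Qset_f(P_X)$. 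The two parameter sets are then defined by identical witness sets and so coincide.

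The main technical points are the weak closedness of the $W_1$-ball and the attainment of $R$. Both rest on $W_1$ metrizing weak convergence on a compact space, which I would cite as a standard fact (e.g.\ Villani) rather than reprove.
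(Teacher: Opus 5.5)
Your proposal is correct and follows essentially the same route as the paper's proof: you show the envelope is nonempty, weakly compact and convex, use the continuous affine moment map to obtain the closed interval and (i)$\Leftrightarrow$(ii), get (iii)$\Rightarrow$(i) from attainment of $R$ on the closed zero-moment subset of $\mathcal H$ (with $R=+\infty$ when that subset is empty), and derive both inclusions from coverage. The only differences are cosmetic: you phrase the interval property as ``the affine image of a convex set is convex'' rather than mixing a minimizer with a maximizer, and you note that lower semicontinuity of $W_1(\cdot,Q_0)$ would suffice.
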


The computation requires two moment optimizations for each candidate $\theta$, or one minimum-cost calculation.
If zero lies between the two bounds, including equality, mixing their laws yields a law with zero moment.
If both bounds are strictly positive or strictly negative, that parameter is ruled out.
This is exact for the chosen DRO model.
With a valid covering radius, any excess over the sharp set comes from additional laws admitted by the envelope.

\Needspace{4\baselineskip}
\begin{assumption}[Moment smoothness]
\label{ass:moment-sensitivity}
For each fixed $\theta\in\Theta$, the moment function is Lipschitz continuous in the latent state, with a known finite constant $L(\theta)\ge0$:
\begin{equation}
 |g_\theta(x^\star)-g_\theta(z^\star)|
 \le L(\theta)d(x^\star,z^\star)
 \quad\text{for all }x^\star,z^\star\in\X^\star.
 \label{eq:moment-lipschitz}
\end{equation}
\end{assumption}

The smoothness condition bounds how far expected moments can move from the benchmark, giving a simpler outer bound without solving the moment optimizations in Theorem~\ref{main:feasibility}.
When only support and distance restrictions are retained, the next result also gives a dual formula for the exact moment bounds.

\Needspace{4\baselineskip}
\begin{theorem}[A simpler bound under smoothness]
\label{main:tube}
Under Assumptions~\ref{ass:ambient-regularity}--\ref{ass:moment-sensitivity}, every $Q\in\mathcal A_\delta(Q_0)$ satisfies $|\E_Q[g_\theta(X^\star)]-\E_{Q_0}[g_\theta(X^\star)]| \le\delta L(\theta)$.
Consequently,
\begin{equation}
 \Theta_{\mathrm{DRO}}(\delta;Q_0)
 \subseteq\mathcal T_\delta(Q_0)
 :=\{\theta\in\Theta:|\E_{Q_0}[g_\theta(X^\star)]|\le\delta L(\theta)\}.
 \label{main:tube-set}
\end{equation}
The name \emph{tube} refers to a band extending $\delta L(\theta)$ above and below the benchmark moment as $\theta$ varies.
The set $\mathcal T_\delta(Q_0)$ consists of the parameter values where this band reaches zero.
For $\mathcal H=\mathcal P(\X^\star)$, the exact upper endpoint also has the representation
\begin{equation}
 \begin{aligned}
 &\max_{Q\in\mathcal A_\delta(Q_0)}\E_Q[g_\theta(X^\star)]\\
 &\quad=\inf_{\lambda\ge0}\left\{\lambda\delta+
  \E_{Z^\star\sim Q_0}\!\left[
    \max_{x^\star\in\X^\star}
     \{g_\theta(x^\star)-\lambda d(x^\star,Z^\star)\}
  \right]\right\}.
 \end{aligned}
 \label{main:dual}
\end{equation}
The lower endpoint is the negative of this expression with $g_\theta$ replaced by $-g_\theta$.
\end{theorem}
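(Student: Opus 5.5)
The plan has three parts: the tube inequality, the tube inclusion, and the dual formula, with the dual formula being the main work.

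\emph{Tube inequality.} Fix $\theta$ and $Q\in\mathcal A_\delta(Q_0)$. On the compact space of Assumption~\ref{ass:ambient-regularity}, $\Pi(Q,Q_0)$ is weakly compact. The cost $d$ is continuous and bounded, so the infimum in \eqref{eq:wasserstein-distance} is attained by some coupling $\pi$ with $\int d\,d\pi=W_1(Q,Q_0)\le\delta$. Because both marginals are matched,
\[
 \bigl|\E_Q[g_\theta]-\E_{Q_0}[g_\theta]\bigr|
 =\Bigl|\int \bigl(g_\theta(x^\star)-g_\theta(z^\star)\bigr)\,\pi(dx^\star,dz^\star)\Bigr|
 \le L(\theta)\int d\,d\pi\le \delta L(\theta),
\]
using \eqref{eq:moment-lipschitz}. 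Attainment is not actually needed: taking near-optimal couplings and letting the slack go to zero gives the same bound.

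\emph{Tube inclusion.} If $\theta\in\Theta_{\mathrm{DRO}}(\delta;Q_0)$, some $Q$ in the envelope has $\E_Q[g_\theta]=0$. The inequality above then gives $|\E_{Q_0}[g_\theta]|\le\delta L(\theta)$, which is exactly membership in $\mathcal T_\delta(Q_0)$.

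\emph{Dual formula.} For $\mathcal H=\mathcal P(\X^\star)$ I would proceed in two directions.

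Weak duality is elementary. For any $\lambda\ge0$ and any $Q$ with $W_1(Q,Q_0)\le\delta$, take an optimal coupling $\pi$. Then
\[
 \E_Q g_\theta=\int g_\theta(x)\,d\pi
 \le \int \bigl[\lambda d(x,z)+\phi_\lambda(z)\bigr]\,d\pi
 \le\lambda\delta+\E_{Q_0}\phi_\lambda,
\]
where $\phi_\lambda(z)=\max_x\{g_\theta(x)-\lambda d(x,z)\}$. This maximum exists by compactness and continuity. Moreover $\phi_\lambda$ is $\lambda$-Lipschitz, hence measurable.

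For the reverse inequality, I would write the primal as an optimization over couplings with first marginal $Q_0$:
\[
 \sup\Bigl\{\textstyle\int g_\theta(x)\,\pi(dz,dx):\pi\in\mathcal P(\X^\star\times\X^\star),\ \pi_1=Q_0,\ \int d\,d\pi\le\delta\Bigr\}.
\]
The steps are as follows.
\begin{itemize}
 \item This equals the left side of \eqref{main:dual}. Any feasible $\pi$ has second marginal $Q$ with $W_1(Q,Q_0)\le\delta$. Conversely, an optimal coupling of any $Q$ in the ball is feasible.
 \item The problem is a linear program over a weakly compact convex set with a single linear inequality constraint.
 \item Dualize only that constraint. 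By Sion's minimax theorem (linear in $\pi$, linear in $\lambda$, $\pi$ ranging over a compact convex set), the primal equals
 \[
 \inf_{\lambda\ge0}\sup_{\pi_1=Q_0}\int\bigl[g_\theta(x)-\lambda d(z,x)\bigr]\,d\pi+\lambda\delta .
 \]
 \item Evaluate the inner supremum as $\E_{Q_0}\phi_\lambda$. It is bounded above by this pointwise. It is attained by the kernel $\pi(dz,dx)=Q_0(dz)\,\delta_{x^\star(z)}(dx)$, where $x^\star(z)$ is a measurable selection of the argmax. A measurable maximizer exists by the measurable maximum theorem, since the objective is a Carathéodory function on a compact metric space.
\end{itemize}

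\emph{Lower endpoint.} This follows from $\min\E_Q g_\theta=-\max\E_Q(-g_\theta)$.

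\emph{Main obstacle.} The delicate part is justifying the minimax interchange with $\lambda$ ranging over the noncompact set $[0,\infty)$. Sion's theorem needs compactness on only one side, and the $\pi$ side is compact, so it applies. The degenerate case $\delta=0$ needs checking, where Slater's condition fails. There I would verify directly that the infimum is approached as $\lambda\to\infty$: $\phi_\lambda\downarrow g_\theta$ pointwise by continuity, and then monotone convergence gives $\E_{Q_0}\phi_\lambda\to\E_{Q_0}g_\theta$.
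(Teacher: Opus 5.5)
Your proposal is correct and follows essentially the same route as the paper: the coupling--Lipschitz bound gives the tube, and the dual comes from the coupling reformulation, a Lagrangian for the single budget constraint, Sion's minimax theorem with compactness on the coupling side, and pointwise evaluation of the inner maximum. The differences are minor. You obtain a measurable argmax from the measurable maximum theorem, where the paper uses a finite net and first-index selection; and your separate $\delta=0$ check is valid but unnecessary, because Sion's theorem needs no Slater condition and the Lagrangian identity already covers $\delta=0$.
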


The tube $\mathcal T_\delta(Q_0)$ needs only $\E_{Q_0}[g_\theta(X^\star)]$, $\delta$, and $L(\theta)$, so it can be evaluated without optimizing over laws.
It may be wider than the DRO set because some moment values within the band cannot be attained by laws in $\mathcal A_\delta(Q_0)$.

The dual representation in~\eqref{main:dual} replaces infinite-dimensional optimization over probability laws with a scalar convex minimization whose objective involves maximizations over latent states.
Appendix~\ref{app:dual-computation} gives a finite approximation with explicit error bounds.
This representation is an application of established Wasserstein duality \cite{BlanchetMurthy2019,GaoKleywegt2023}.

This dual procedure applies to $\mathcal H=\mathcal P(\X^\star)$.
Additional retained restrictions must remain in the optimization, as in the finite programs of Appendix~\ref{app:finite-program}.
These calculations are at a fixed $\theta$ and determine whether that parameter is feasible.
The parameter set itself need not be an interval.

\subsection{Calibrating the radius and computing the parameter set}
\label{sec:parameter-computation}
For a given benchmark, we first calibrate its radius and then compute the parameter set.
Algorithm~\ref{alg:main-dro-construction} in Section~\ref{sec:recentering} combines these calculations with benchmark choice through recentering.

\paragraph{Calibrate the radius.}
For a chosen $Q_0\in\mathcal H$, any $\delta\ge r^\star(Q_0)$ in \eqref{eq:covering-radius} preserves every compatible parameter value.
A practical way to obtain such a budget is to construct $Q_0$ by imputing unobserved values with a measurable rule $b_0:\X\to\X^\star$, so that $Q_0=(b_0)_\#P_X$.
Couple each compatible latent state with the imputed state from its own observation.
This gives
\begin{equation}
 r^\star(Q_0)\le\overline r(b_0)
 :=\sup_{Q\in\Qset_f(P_X)}
       \E_Q d\bigl(X^\star,b_0(f(X^\star))\bigr).
 \label{eq:calibration-certificate}
\end{equation}
The bound holds because Wasserstein distance is no greater than the cost of this particular coupling.
Computing the supremum can still be difficult.
The operational input is any verified upper bound, which need not equal either $\overline r(b_0)$ or $r^\star(Q_0)$.

For example, if a known, integrable function $h$ satisfies
\[
 d\bigl(x^\star,b_0(f(x^\star))\bigr)\le h(f(x^\star))
 \quad\text{for every admissible latent state},
\]
then $\delta=\E_{P_X}h(X)$ is valid without optimizing over compatible laws.
Maintained moment bounds can also supply a certificate.
Below, $(1-D)Y\le1-D$ gives $\delta=p_{00}$ for missing outcomes.
Appendix~\ref{app:recenter-coupling} proves the general rule, including projected states and moment bounds.

The relaxation saves computation only if we can readily obtain a valid covering radius and compute moment bounds over the resulting class.
The computational cost of a linear program also depends on its size and the work needed to construct it.
The network example below illustrates this distinction: summing known link capacities gives the radius, while obtaining the sharp parameter endpoint requires a global search over hidden roles.
Appendix~\ref{app:radius-network} explains why calibration is simpler in this construction.

\paragraph{Compute the parameter set.}
For a calibrated pair $(Q_0,\delta)$, Theorem~\ref{main:feasibility} characterizes the parameter set through the smallest and largest expected moments over $\mathcal A_\delta(Q_0)$.
The examples compute these moment bounds by optimization; analytical formulas provide independent checks of the calculations.
Appendix~\ref{app:computational-analysis} develops numerical methods when explicit calculations are unavailable.
It supplies affine endpoint search and an interval set-inversion guarantee for a fixed calibrated benchmark and radius.

\subsection{Applying the results to the examples}
\label{sec:dro-examples}
For each moment in Section~\ref{sec:model}, we specify the retained laws $\mathcal H$, choose the benchmark $Q_0$, and compute a radius $\delta$ covering the compatible laws.
These choices define $\mathcal A_\delta(Q_0)$.
Theorem~\ref{main:feasibility} then gives the DRO set from its smallest and largest expected moments.
Theorem~\ref{main:tube} gives the comparison bound using the benchmark moment and $L(\theta)$.
Appendices~\ref{app:missingness} and~\ref{app:network-calculation} supply complete derivations and numerical certificates.

\Needspace{7\baselineskip}
\paragraph{Missing outcomes.}
This example shows how the DRO interval can exceed the sharp bounds by allowing changes to outcomes that are already observed.
Section~\ref{sec:recentering-examples} shows how recentering recovers the sharp set.
Keep $g_\theta(D,Y)=\theta-Y$ from Example~\ref{ex:manski-missingness}.
Assign zero to every missing outcome to obtain $Q_0=Q_{t=0}$.
Take $\mathcal H=\mathcal P(\{0,1\}^2)$ and Hamming distance, which counts changed binary coordinates.
The imputation cost $(1-D)Y\le1-D$ directly certifies the radius $p_{00}$.
Here the certificate is exact: a compatible imputation law $Q_t$ changes a fraction $t$ of the missing outcomes from zero to one.
Moving this mass costs $p_{00}t$, so the radius is
\[
 \delta=r^\star(Q_0)
 =\sup_{t\in[0,1]}W_1(Q_t,Q_0)
 =\sup_{t\in[0,1]}p_{00}t=p_{00}.
\]
The moment optimizations are
\[
 \begin{aligned}
 \min_{Q\in\mathcal A_\delta(Q_0)}\E_Q[g_\theta(D,Y)]
   &=\theta-\min\{1,p_{11}+\delta\},\\
 \max_{Q\in\mathcal A_\delta(Q_0)}\E_Q[g_\theta(D,Y)]
   &=\theta-\max\{0,p_{11}-\delta\}.
 \end{aligned}
\]
Theorem~\ref{main:feasibility} retains the values for which zero lies between these bounds, giving
\begin{equation}
 \Theta_{\mathrm{DRO}}(\delta;Q_0)
 =[\max\{0,p_{11}-\delta\},\ \min\{1,p_{11}+\delta\}].
 \label{ex:missing-dro}
\end{equation}
For Theorem~\ref{main:tube}, the benchmark moment is $\E_{Q_0}[g_\theta(D,Y)]=\theta-p_{11}$ and $L(\theta)=1$.
Its inequality $|\theta-p_{11}|\le\delta$, restricted to $\Theta=[0,1]$, gives the same interval.
Substituting $\delta=p_{00}=0.2$ and $p_{11}=0.4$ gives $[0.2,0.6]$, compared with the sharp set $[0.4,0.6]$ from Section~\ref{sec:model}.
The interval extends below the sharp lower bound because the transport ball allows changes to workers' observed outcomes.
All calculations are verified in Appendix~\ref{app:missing-dro-calculation}.

\paragraph{Thresholded networks.}
This example gives computable outer bounds when obtaining the sharp set requires a difficult search over hidden network roles.
Continue Example~\ref{ex:lim-threshold} with bounded outcomes $Y$.
For the relaxation, include the observed record $X$ alongside $G^\star$, so candidate laws describe $(X,G^\star)$.
Retain the known law $P_X$ and the nonnegative, zero-diagonal graph support $\mathcal S=\{A\ge0:A_{ii}=0,\ A\mathbf1=\mathbf1\}$.
The class $\mathcal H$ preserves $P_X$ and allows any graph in $\mathcal S$.
It relaxes the common-role restriction and the requirement that the candidate graph reproduce the report.
The observed network $G$ has one unit-weight outgoing link per row, so imputing zero for every hidden link gives
\[
 G^0(G)=G,\qquad Q_0=\mathcal L_{P_X}(X,G).
\]
Here $\mathcal L_{P_X}(X,G)$ is the joint law of $X\sim P_X$ and its zero-imputed latent graph $G$.
Let $d_i=\sum_{j:(i,j)\in E}w_{ij}$ and $d_{\max}=\max_i d_i>0$.
We scale graph distances by known row capacities, so rows with less possible hidden weight receive a smaller allowance.
For $x=(y,t,g)$ and $x'=(y',t',g')$, set
\[
 \begin{gathered}
 \omega_i=\frac1{10}+\frac{9d_i}{10d_{\max}}
       \quad(i=1,\ldots,m),\qquad \omega_h=\omega_s=1,\\
 d_{\mathrm{cap}}(A,B)=\max_{i,j}\frac{|A_{ij}-B_{ij}|}{\omega_i},\\
 d((x,A),(x',B))=\|y-y'\|_\infty
 +10\mathbf1\{(t,g)\ne(t',g')\}+d_{\mathrm{cap}}(A,B).
 \end{gathered}
\]
The positive floor keeps the distance well-defined for rows with no eligible links.
The calibrated radius is
\[
 \delta=r^\star(Q_0)
 =\E_{P_X}\!\left[\sup_{A\in\Gamma(G)}d_{\mathrm{cap}}(A,G)\right]
 =\eta d_{\max}<\min\{\xi,1-\xi\}.
\]
The graph diameter under $d_{\mathrm{cap}}$ is at most $10$, making the same-record coupling optimal for $Q_0$; see Appendix~\ref{app:network-calculation}.
Here $\Gamma(G)$ is the set of latent networks that could have produced $G$ subject to the role and weight restrictions in Example~\ref{ex:lim-threshold}.
The radius requires only sums of known capacities.
It does not require finding the role assignment with the greatest total exposure.

For this benchmark, the exposure optimizations have explicit values:
\[
 \begin{aligned}
 \underline b_\delta
  &:=\min_{Q\in\mathcal A_\delta(Q_0)}\E_Q[T^\top G^\star T]
    =1-\delta,\\
 \overline b_\delta
  &:=\max_{Q\in\mathcal A_\delta(Q_0)}\E_Q[T^\top G^\star T]
    =1+\delta\sum_{i=1}^m\omega_i.
 \end{aligned}
\]
The lower bound allows weight to move from the baseline treated link to the untreated contact.
For the upper bound, core row $i$ can move weight $\delta\omega_i$ toward treated peers.
Since $\mathcal H$ retains the observed law, the moment extrema in Theorem~\ref{main:feasibility}, for $\theta\ge0$, are
\[
 \begin{aligned}
 \min_{Q\in\mathcal A_\delta(Q_0)}\E_Q[g_\theta]
  &=\E_{P_X}[T^\top(Y-\beta_0T)]
       -\theta\left(1+\delta\sum_{i=1}^m\omega_i\right),\\
 \max_{Q\in\mathcal A_\delta(Q_0)}\E_Q[g_\theta]
  &=\E_{P_X}[T^\top(Y-\beta_0T)]-\theta(1-\delta).
 \end{aligned}
\]
For a positive observed contribution, the theorem therefore gives
\begin{equation}
 \Theta_{\mathrm{DRO}}(\delta;Q_0)
 =\Theta\cap\left[
 \frac{\E_{P_X}[T^\top(Y-\beta_0T)]}{1+\delta\sum_{i=1}^m\omega_i},\,
 \frac{\E_{P_X}[T^\top(Y-\beta_0T)]}{1-\delta}\right].
 \label{ex:network-dro}
\end{equation}
Since $\mathcal H$ fixes the observed contribution, we apply Theorem~\ref{main:tube} to the equivalent moment $\E_{P_X}[T^\top(Y-\beta_0T)]-\theta T^\top G^\star T$.
Its expectation equals that of $g_\theta$ throughout $\mathcal H$.
Exposure is $(1+\sum_{i=1}^m\omega_i)$-Lipschitz, giving the \emph{exposure tube}\footnote{This is the residual tube in Appendix~\ref{app:recentering-network}.
It need not equal \eqref{main:tube-set} computed from the original moment $g_\theta$.}
\[
 \left|\E_{P_X}[T^\top(Y-\beta_0T)]-\theta\right|
 \le\delta\left(1+\sum_{i=1}^m\omega_i\right)|\theta|.
\]
The factor adds the scaled allowances of the core rows and the treated contact.

The computational gain is a reusable procedure: calibrate from capacities, optimize moments over the retained envelope, and compute parameter bounds.
Its implementation does not require the analytical endpoint formulas displayed here for validation.
The price is possible excess width, which the capacity comparators in Section~\ref{sec:recentering-examples} help diagnose.
Appendix~\ref{app:network-calculation} proves the formulas and connects sharp computation to maximum cut.
Its 152-node numerical illustration is shown in Figure~\ref{fig:dro-example-sets}.

The regression application in Appendix~\ref{app:regression-calculation} shows how a bound on measurement-error size supplies a valid radius without assuming independent errors.
Retaining the observed joint law and squared-error budget makes moment optimization tractable and recovers the sharp set, with no grid for the true regressor.

\Needspace{3\baselineskip}
The examples separate two sources of excess width: replacing exact compatibility by a transport class, and replacing exact transport optimization by the tube.
This distinction guides the choice of benchmark and radius developed next.

\FloatBarrier
\input{sections/recentering}

%% file: sections/recentering.tex
\subsection{Recentering to tighten the outer set}
\label{sec:recentering}
A benchmark obtained by imputation can lie near the edge of the compatible class.
Recentering seeks a benchmark with a smaller covering radius, while keeping the observed law and maintained model fixed.
Coverage and improvement require separate checks: the new envelope must still contain every compatible law, and must lie within the old envelope.
We first give the exact nesting criterion, then a less costly sufficient certificate.

For a candidate $B\in\mathcal H$ and the radius $\delta_B$ actually used, define
\begin{equation}
 \Phi(B,\delta_B;Q_0)
 :=\max_{\substack{Q\in\mathcal H\\W_1(Q,B)\le\delta_B}}W_1(Q,Q_0).
 \label{rec:phi}
\end{equation}
This is the largest distance from the old benchmark among laws allowed by the proposed envelope.
The retained restrictions enter the test and can permit nesting even when the full balls would not be nested.

Changing the benchmark can admit different latent laws even when the covering radius decreases.
The next theorem identifies when recentering preserves coverage and yields a DRO set contained in the initial one.

\begin{theorem}[Coverage and exact envelope nesting]
\label{main:recentering}
Under Assumptions~\ref{ass:ambient-regularity} and \ref{ass:retained-class}, $r^\star$ is convex under mixtures, 1-Lipschitz in its benchmark, and attains a minimum on $\mathcal H$.
For $B,Q_0\in\mathcal H$ and nonnegative radii,
\begin{equation}
 \Phi(B,\delta_B;Q_0)\le\delta_0
 \quad\Longleftrightarrow\quad
 \mathcal A_{\delta_B}(B)\subseteq\mathcal A_{\delta_0}(Q_0).
 \label{rec:exact-nesting}
\end{equation}
If these equivalent nesting conditions hold and $\delta_B\ge r^\star(B)$, then
\begin{equation}
 \Theta_{\mathrm{ID}}(P_X)
 \subseteq\Theta_{\mathrm{DRO}}(\delta_B;B)
 \subseteq\Theta_{\mathrm{DRO}}(\delta_0;Q_0).
 \label{rec:nested-dro}
\end{equation}
\end{theorem}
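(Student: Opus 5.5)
The plan has three parts: the properties of $r^\star$, the nesting equivalence, and the inclusions in \eqref{rec:nested-dro}.

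\textbf{Properties of $r^\star$.} All three follow from standard properties of $W_1$ on a compact metric space.

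\emph{Convexity under mixtures.} Fix $Q\in\Qset_f(P_X)$. The map $B\mapsto W_1(Q,B)$ is convex. To see this, take $\pi_0\in\Pi(Q,B_0)$ and $\pi_1\in\Pi(Q,B_1)$ optimal. The mixture $\lambda\pi_1+(1-\lambda)\pi_0$ is a coupling of $Q$ with $\lambda B_1+(1-\lambda)B_0$. Its cost is $\lambda W_1(Q,B_1)+(1-\lambda)W_1(Q,B_0)$. A supremum of convex functions is convex, so $r^\star$ is convex.

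\emph{1-Lipschitz.} The triangle inequality gives $W_1(Q,B)\le W_1(Q,B')+W_1(B',B)$. Taking suprema over $Q$ yields $r^\star(B)\le r^\star(B')+W_1(B,B')$, and by symmetry $r^\star$ is 1-Lipschitz in $W_1$. Since $\Qset_f(P_X)$ is nonempty and $\X^\star$ has finite diameter, $r^\star$ is finite.

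\emph{Minimum on $\mathcal H$.} $\mathcal P(\X^\star)$ is weakly compact by Prokhorov's theorem, and $W_1$ metrizes weak convergence on a compact space. So $r^\star$ is weakly continuous. $\mathcal H$ is weakly closed, hence compact, and the minimum is attained.

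\textbf{Nesting equivalence.} First check that the maximum in \eqref{rec:phi} is attained. The feasible set $\{Q\in\mathcal H:W_1(Q,B)\le\delta_B\}$ is weakly closed, because $W_1(\cdot,B)$ is continuous, and it is nonempty because it contains $B$. It is therefore compact, and the continuous function $W_1(\cdot,Q_0)$ attains its maximum on it.

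The equivalence is then definitional. $\Phi\le\delta_0$ holds exactly when every $Q\in\mathcal H$ with $W_1(Q,B)\le\delta_B$ satisfies $W_1(Q,Q_0)\le\delta_0$, which is exactly the inclusion $\mathcal A_{\delta_B}(B)\subseteq\mathcal A_{\delta_0}(Q_0)$. For the reverse direction, if the inclusion holds, apply it to the maximizer of \eqref{rec:phi} to get $\Phi\le\delta_0$.

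\textbf{Inclusions in \eqref{rec:nested-dro}.}

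\emph{First inclusion.} $B$ lies in $\mathcal H$, so Assumption~\ref{ass:retained-class} holds with $B$ as the benchmark. Since $\delta_B\ge r^\star(B)$, Theorem~\ref{main:feasibility} applied at benchmark $B$ gives $\Theta_{\mathrm{ID}}(P_X)\subseteq\Theta_{\mathrm{DRO}}(\delta_B;B)$.

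\emph{Second inclusion.} This is monotonicity of \eqref{eq:dro-set} in the envelope. A law $Q\in\mathcal A_{\delta_B}(B)$ witnessing $\E_Q[g_\theta]=0$ also lies in $\mathcal A_{\delta_0}(Q_0)$, so it witnesses $\theta\in\Theta_{\mathrm{DRO}}(\delta_0;Q_0)$.

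\textbf{Main obstacle.} The only delicate step is the topology. One must justify that $W_1$ is weakly continuous and that $\mathcal H$ and the transport balls are weakly compact, so that the maxima and minima are attained rather than only suprema. I would handle this by citing that $W_1$ metrizes weak convergence on compact metric spaces (Villani). Everything else is the triangle inequality and definitions.
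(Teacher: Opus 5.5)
Your proposal is correct and follows essentially the same route as the paper's proof. Convexity comes from mixing optimal couplings, the 1-Lipschitz bound from the triangle inequality, and attainment of both the minimum and the maximum defining $\Phi$ from compactness of $\mathcal H$ under the $W_1$-metrized weak topology; the equivalence holds by definition because every law in $\mathcal A_{\delta_B}(B)$ already lies in $\mathcal H$. The only difference is cosmetic: you get the first inclusion by invoking Theorem~\ref{main:feasibility} at benchmark $B$, whereas the paper writes $\Qset_f(P_X)\subseteq\mathcal A_{\delta_B}(B)\subseteq\mathcal A_{\delta_0}(Q_0)$ directly and notes that zero-moment witnesses survive in larger classes, which is the same argument.
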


\Needspace{5\baselineskip}
Evaluating the exact nesting criterion may require a difficult optimization.
The triangle inequality provides a simpler sufficient certificate, which also guarantees tube nesting under Assumption~\ref{ass:moment-sensitivity}.
The triangle test can reject every benchmark change when the initial radius is exact; Algorithm~\ref{alg:main-dro-construction} therefore also checks sharper nesting certificates (Appendix~\ref{app:three-point}).

\begin{corollary}[Triangle certificate for nested bounds]
\label{main:triangle-certificate}
Under Assumptions~\ref{ass:ambient-regularity} and \ref{ass:retained-class}, for $B,Q_0\in\mathcal H$ and nonnegative radii,
\begin{equation}
 \Phi(B,\delta_B;Q_0)\le\delta_B+W_1(B,Q_0)
 \label{rec:triangle-upper}
\end{equation}
gives an inexpensive sufficient certificate.
In particular,
\begin{equation}
 \delta_B+W_1(B,Q_0)\le\delta_0
 \label{rec:nesting-condition}
\end{equation}
together with $\delta_B\ge r^\star(B)$ guarantees \eqref{rec:nested-dro} and, under Assumption~\ref{ass:moment-sensitivity},
\begin{equation}
 \Theta_{\mathrm{ID}}(P_X)\subseteq\mathcal T_{\delta_B}(B)
 \subseteq\mathcal T_{\delta_0}(Q_0).
 \label{rec:nested-tubes}
\end{equation}
For $\delta_0\ge r^\star(Q_0)$, the triangle-certified benchmarks $\{B\in\mathcal H:r^\star(B)+W_1(B,Q_0)\le\delta_0\}$ form a nonempty compact convex set.
Minimizing $r^\star$ on this set is a convex program with an attained minimum, and an LP when the compatible class has finite generators on finite support with linear retained restrictions (Appendix~\ref{app:recenter-finite}).
\end{corollary}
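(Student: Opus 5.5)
The corollary has four parts: the triangle bound on $\Phi$, nesting of the DRO sets, nesting of the tubes, and the structure of the certified set. I take them in that order.

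\emph{Triangle bound.} For any $Q\in\mathcal H$ with $W_1(Q,B)\le\delta_B$, the triangle inequality for $W_1$ on the compact space $(\X^\star,d)$ gives
\[
W_1(Q,Q_0)\le W_1(Q,B)+W_1(B,Q_0)\le\delta_B+W_1(B,Q_0).
\]
Taking the maximum over the feasible set in \eqref{rec:phi} yields \eqref{rec:triangle-upper}. That maximum is attained because the feasible set is weakly compact and $W_1(\cdot,Q_0)$ is weakly continuous on a compact space.

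\emph{DRO nesting.} Under \eqref{rec:nesting-condition}, the triangle bound gives $\Phi(B,\delta_B;Q_0)\le\delta_0$. Theorem~\ref{main:recentering} then gives $\mathcal A_{\delta_B}(B)\subseteq\mathcal A_{\delta_0}(Q_0)$. Together with $\delta_B\ge r^\star(B)$, the same theorem delivers \eqref{rec:nested-dro}.

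\emph{Tube nesting.} The first inclusion in \eqref{rec:nested-tubes} follows from Theorem~\ref{main:tube} applied at $(B,\delta_B)$, combined with \eqref{rec:nested-dro}:
\[
\Theta_{\mathrm{ID}}\subseteq\Theta_{\mathrm{DRO}}(\delta_B;B)\subseteq\mathcal T_{\delta_B}(B).
\]
For the second inclusion, take $\theta\in\mathcal T_{\delta_B}(B)$. Kantorovich--Rubinstein duality with the Lipschitz constant $L(\theta)$ gives
\[
|\E_B g_\theta-\E_{Q_0}g_\theta|\le L(\theta)W_1(B,Q_0).
\]
Hence
\[
|\E_{Q_0}g_\theta|\le L(\theta)\bigl(\delta_B+W_1(B,Q_0)\bigr)\le L(\theta)\delta_0,
\]
so $\theta\in\mathcal T_{\delta_0}(Q_0)$. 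This step uses the triangle condition directly, not the envelope nesting, because the tube ignores $\mathcal H$.

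\emph{Structure of the certified set.} Let
\[
S=\{B\in\mathcal H:\ r^\star(B)+W_1(B,Q_0)\le\delta_0\}.
\]
\begin{itemize}
\item \emph{Nonempty.} $Q_0\in S$ since $W_1(Q_0,Q_0)=0$ and $r^\star(Q_0)\le\delta_0$.
\item \emph{Convex.} $r^\star$ is convex under mixtures by Theorem~\ref{main:recentering}. $W_1(\cdot,Q_0)$ is convex under mixtures: mixing optimal couplings of the two components with $Q_0$ gives a coupling of the mixture with cost equal to the mixture of costs. Hence the sublevel set of the sum, intersected with the convex $\mathcal H$, is convex.
\item \emph{Compact.} $\mathcal P(\X^\star)$ is weakly compact by Prokhorov, and $\mathcal H$ is weakly closed, hence compact. $r^\star$ is $1$-Lipschitz in $W_1$ by Theorem~\ref{main:recentering}. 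On a compact space $W_1$ metrizes weak convergence, so $r^\star$ and $W_1(\cdot,Q_0)$ are weakly continuous and $S$ is closed in the compact set $\mathcal H$.
\end{itemize}
Minimizing the continuous convex $r^\star$ over the nonempty compact convex $S$ is therefore a convex program, and its minimum is attained.

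\emph{LP case.} This is the main obstacle. With finite support, $W_1(B,Q_0)$ is itself an LP in a coupling variable. The difficulty is $r^\star(B)=\sup_Q W_1(Q,B)$, which is a supremum of infima. When $\Qset_f(P_X)$ is the convex hull of finitely many generators $Q^{(k)}$, the map $Q\mapsto W_1(Q,B)$ is convex, so the supremum over the polytope is attained at a generator: $r^\star(B)=\max_k W_1(Q^{(k)},B)$. I would introduce an epigraph variable $t$ and one coupling $\pi^{(k)}$ per generator, with the constraints
\[
t\ge\langle d,\pi^{(k)}\rangle,\quad \pi^{(k)}\in\Pi(Q^{(k)},B)\ \text{for each }k,
\]
and
\[
t+\langle d,\pi^{0}\rangle\le\delta_0,\quad \pi^{0}\in\Pi(B,Q_0).
\]
Together with the linear retained restrictions on $B$, minimizing $t$ is an LP. Its optimal value equals the minimum of $r^\star$ over $S$: at the optimum each coupling can be chosen optimal for its pair without changing feasibility, so the relaxation via feasible couplings is exact. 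I would defer the remaining details to Appendix~\ref{app:recenter-finite}.
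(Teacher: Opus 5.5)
Your proposal is correct and follows the paper's proof essentially step for step. You use the triangle inequality for $W_1$ to get \eqref{rec:triangle-upper} and hence exact nesting via Theorem~\ref{main:recentering}, the Lipschitz comparison of $\E_B g_\theta$ with $\E_{Q_0} g_\theta$ for tube nesting, and continuity and convexity of $r^\star$ and $W_1(\cdot,Q_0)$ for a compact convex certified set containing $Q_0$. Your generator reduction $r^\star(b)=\max_k W_1(q^{(k)},b)$, with one coupling per generator plus one to $Q_0$, gives the LP of Appendix~\ref{app:recenter-finite}, just as in the paper.
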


The exact criterion guarantees weak DRO-set inclusion.
A smaller radius need not strictly tighten the parameter set, nor does exact envelope nesting imply nesting of the separately constructed tubes.
For a candidate accepted only by the sharper test, report $\mathcal T_{\delta_0}(Q_0)\cap\mathcal T_{\delta_B}(B)$ when a nested tube refinement is desired.

\paragraph{Benchmark selection.}
Given a declared family $\mathcal B\subseteq\mathcal H$ and a valid coverage certificate $\overline r(B)\ge r^\star(B)$, the sharper ideal program is
\begin{equation}
 \min_{B\in\mathcal B}\overline r(B)
 \quad\text{subject to}\quad
 \Phi(B,\overline r(B);Q_0)\le\delta_0.
 \label{rec:center-rule}
\end{equation}
For the same family and radius rule, its feasible set contains every triangle-certified candidate, so its global optimum cannot have a larger radius and can have a strictly smaller one.
General convexity is not asserted.
Appendix~\ref{app:exact-nesting} gives conditions for attainment, and Appendix~\ref{app:three-point} illustrates a strict gain over the triangle rule.

The triangle rule remains useful: computable coverage and movement bounds $\overline r(\alpha)$ and $\overline s(\alpha)$ yield
\begin{equation}
 \min_{\alpha\in A}\{\overline r(\alpha):
             \overline r(\alpha)+\overline s(\alpha)\le\delta_0\}.
 \label{rec:certificate-rule}
\end{equation}
With compact convex $A$, continuous convex certificates, and a feasible baseline, \eqref{rec:certificate-rule} is a convex program with an attained minimum.
Its solution certifies coverage and nested tubes; exact nesting can also certify benchmarks that fail the triangle test.

Algorithm~\ref{alg:main-dro-construction} combines calibration, benchmark selection, and moment optimization into a single procedure.
It verifies the two proposals from the constrained and unrestricted radius programs before computing the final bounds.
Figure~\ref{fig:algorithm-recentering} summarizes the algorithm and the corresponding inclusions of law envelopes and parameter sets.

For a fixed proposal $(B,\delta_B)$, let $U_\Phi\ge\Phi(B,\delta_B;Q_0)$ denote a certified upper bound, including numerical error; use $U_\Phi=+\infty$ when no such certificate is available.
The nesting test evaluates the proposed pair without changing it.

\par\addvspace{\baselineskip}
\begingroup
\setstretch{1.05}
\Needspace{\textheight}
\hrule\vspace{2pt}
\captionsetup{font=normalsize,labelfont=bf,labelsep=space,
  justification=raggedright,singlelinecheck=false,skip=2pt,hypcap=false}
\captionof{algorithm}{Constructing and certifying recentered bounds}
\label{alg:main-dro-construction}
\hrule\vspace{3pt}
\textbf{Inputs.}
Measurement map $f:\X^\star\to\X$, maintained class $\mathcal C$, observed law $P_X$, moment $g_\theta$, parameter space $\Theta$, metric $d$, and retained class $\mathcal H$ satisfying Assumptions~\ref{ass:ambient-regularity}--\ref{ass:retained-class}; benchmarks $Q_\alpha\in\mathcal H$, $\alpha\in A$, and certified moment routines.
For tubes, also supply $L(\theta)$ as in Assumption~\ref{ass:moment-sensitivity}.
\begin{enumerate}[label=\textbf{\arabic*.},leftmargin=*,itemsep=4pt]
\item \textbf{Choose an initial benchmark and a radius ensuring coverage.}
Choose $Q_0\in\mathcal H$; calibrate by \eqref{eq:calibration-certificate} or \eqref{cal:covering-bound} and initialize
\[
 \delta_0\ge r^\star(Q_0),\qquad
 (Q_{\mathrm c},\delta_{\mathrm c})\gets(Q_0,\delta_0).
\]
The current pair $(Q_{\mathrm c},\delta_{\mathrm c})$ remains available if no proposal improves it.
\item \textbf{Propose new benchmarks and their covering radii.}
Bound each benchmark's covering radius and distance from $Q_0$:
\[
 r^\star(Q_\alpha)\le\overline r(\alpha),\qquad
 W_1(Q_\alpha,Q_0)\le\overline s(\alpha).
\]
Under the conditions of \eqref{rec:certificate-rule}, compute
\begin{equation}
 \begin{aligned}
 \alpha_\triangle&\in\operatorname*{arg\,min}_{\alpha\in A}\{\overline r(\alpha):\overline r(\alpha)+\overline s(\alpha)\le\delta_0\},\\
 \alpha_{\rm free}&\in\operatorname*{arg\,min}_{\alpha\in A}\overline r(\alpha).
 \end{aligned}
 \label{rec:unrestricted-proposal}
\end{equation}
The first program enforces nesting; the second seeks a smaller radius without imposing nesting, which Step~3 must verify.\footnote{%
In the finite-support setting of Appendix~\ref{app:recenter-finite}, solve \eqref{app:finite-center-lp} to obtain the unrestricted proposal; adding \eqref{app:finite-center-nest-lp} gives the triangle-certified proposal.}

For each $\alpha\in\{\alpha_\triangle,\alpha_{\rm free}\}$, propose the benchmark $B=Q_\alpha$ with covering radius $\delta_B=\overline r(\alpha)$.

\item \textbf{Check nesting and keep the smallest certified radius} (Theorem~\ref{main:recentering}, Corollary~\ref{main:triangle-certificate}).
Coverage follows from the Step~2 certificate $\delta_B=\overline r(\alpha)\ge r^\star(B)$.
Check nesting relative to the initial pair:
\begin{equation}
 \min\{\delta_B+\overline s(\alpha),U_\Phi\}\le\delta_0.
 \label{rec:acceptance-test}
\end{equation}
\begin{algorithmic}
\For{each $\alpha\in\{\alpha_\triangle,\alpha_{\rm free}\}$ and its pair $(B,\delta_B)$ from Step~2}
 \State $U_\Phi\gets+\infty$; first check the triangle certificate $\delta_B+\overline s(\alpha)\le\delta_0$.%
 \If{the triangle certificate does not establish nesting ($\delta_B+\overline s(\alpha)>\delta_0$)}
  \State Compute a sharper certificate $U_\Phi\ge\Phi(B,\delta_B;Q_0)$ analytically or using:\footnote{%
For deterministic network imputations satisfying the restrictions in Appendix~\ref{app:network-nesting}, use the exact LP \eqref{app:eq:network-nesting-lp}.}
  \Statex \hspace{\algorithmicindent}\hspace{\algorithmicindent}%
  (i) point-mass $Q_0$: moment program \eqref{eq:pointmass-phi};
  \Statex \hspace{\algorithmicindent}\hspace{\algorithmicindent}%
  (ii) finite support, linear restrictions: upper relaxation \eqref{eq:mccormick-phi}.
  \State Keep $U_\Phi=+\infty$ if no upper certificate is obtained.
 \EndIf
 \If{\eqref{rec:acceptance-test} holds \textbf{and} $\delta_B<\delta_{\mathrm c}$}
  \State $(Q_{\mathrm c},\delta_{\mathrm c})\gets(B,\delta_B)$
 \Else
  \State Keep $(Q_{\mathrm c},\delta_{\mathrm c})$ unchanged.
 \EndIf
\EndFor
\end{algorithmic}
\end{enumerate}
\par\vspace{3pt}\hrule

\newpage
\hrule\vspace{2pt}
\noindent\textbf{Algorithm~\ref*{alg:main-dro-construction}} (continued)\par
\vspace{2pt}\hrule\vspace{3pt}
\begin{enumerate}[label=\textbf{\arabic*.},leftmargin=*,itemsep=4pt,start=4]
\item \textbf{Compute the parameter set for the selected pair} (Theorem~\ref{main:feasibility}).
Minimize and maximize the moment over the selected envelope; retain $\theta$ when zero lies between these bounds, as in \eqref{main:feasibility-test}:
\[
 \Theta_{\mathrm{DRO}}(\delta_{\mathrm c};Q_{\mathrm c})
 =\left\{\theta\in\Theta:
 \min_{Q\in\mathcal A_{\delta_{\mathrm c}}(Q_{\mathrm c})}\E_Qg_\theta
 \le0\le
 \max_{Q\in\mathcal A_{\delta_{\mathrm c}}(Q_{\mathrm c})}\E_Qg_\theta\right\}.
\]
\item \textbf{Compute both tubes and report their intersection} (Theorem~\ref{main:tube}).
Using the initial pair from Step~1 and the selected pair from Step~3, compute
\[
 \begin{aligned}
 \mathcal T_{\delta_0}(Q_0)
 &\gets\{\theta\in\Theta:|\E_{Q_0}g_\theta|\le\delta_0L(\theta)\},\\
 \mathcal T_{\delta_{\mathrm c}}(Q_{\mathrm c})
 &\gets\{\theta\in\Theta:|\E_{Q_{\mathrm c}}g_\theta|\le\delta_{\mathrm c}L(\theta)\},\\
 \mathcal T_{\rm out}
 &\gets\mathcal T_{\delta_0}(Q_0)\cap\mathcal T_{\delta_{\mathrm c}}(Q_{\mathrm c}).
 \end{aligned}
\]
\end{enumerate}
\textbf{Output.}
Return $(Q_{\mathrm c},\delta_{\mathrm c})$, $\Theta_{\mathrm{DRO}}(\delta_{\mathrm c};Q_{\mathrm c})$, and, when computed, $\mathcal T_{\rm out}$.
Both sets contain $\Theta_{\mathrm{ID}}(P_X)$; the DRO set refines the initial set by \eqref{rec:nested-dro}.
\par\vspace{3pt}\hrule
\endgroup
\par\addvspace{\baselineskip}

Choose among Step~3's methods according to the problem's structure.
Exact verification on small supports is also available in Appendix~\ref{app:finite-nesting}.
An upper bound exceeding $\delta_0$ alone is inconclusive; a feasible envelope law with distance from $Q_0$ exceeding $\delta_0$ proves failure of envelope nesting.
The regression application in Appendix~\ref{app:recentering-eiv} illustrates exact acceptance when the triangle bound is inconclusive.
The strict improvement rule retains the current pair on ties.
No global solution of \eqref{rec:center-rule} is required.

Step~4 can use affine endpoint search \eqref{app:affine-computation-endpoints} or certified set inversion under the conditions of Proposition~\ref{app:parameter-computation-guarantee}; at finite tolerances, report the certified enclosures on the chosen search interval.
In Step~5, the intersection equals the new tube whenever the triangle certificate guarantees tube nesting.
A benchmark-specific bound on the moment change can further tighten Step~5's tubes, as in Appendix~\ref{app:recentering-eiv}.
Alternatively, intersect calibrated \emph{law} envelopes and solve Step~4's moment programs over that intersection (see the intersection fallback in Appendix~\ref{app:recenter-finite}).

\begin{figure}[htbp]
\centering
\resizebox{\linewidth}{!}{\input{figures/recentering_algorithm_schematic}}
\caption{From calibrated law envelopes to parameter bounds.
Coverage preserves the compatible class in blue; exact nesting keeps the teal envelope inside the gray one.
The triangle condition is a sufficient special case, drawn here schematically.
Moment optimization maps these law inclusions to the parameter inclusions on the left.}
\label{fig:algorithm-recentering}\label{fig:ambiguity-geometry}
\end{figure}
\FloatBarrier

\subsection{Recentering in the examples}
\label{sec:recentering-examples}
Building on Section~\ref{sec:dro-examples}, the examples apply Algorithm~\ref{alg:main-dro-construction} to choose a new benchmark and radius, then recompute the parameter bounds.

\paragraph{Missing outcomes.}
Here recentering removes all excess width and recovers the sharp identified interval.
Keep $g_\theta(D,Y)=\theta-Y$, the zero imputation $Q_0=Q_{t=0}$, and its radius $\delta_0=p_{00}$.
A candidate $Q_t$ leaves observed outcomes unchanged and assigns success probability $t$ to missing outcomes.
Its covering and movement certificates are exact:
\[
 \overline r(t)=r^\star(Q_t)=p_{00}\max\{t,1-t\},\qquad
 \overline s(t)=W_1(Q_t,Q_0)=p_{00}t.
\]
Equation~\eqref{rec:certificate-rule} becomes
\[
 \min_{0\le t\le1}
 \left\{p_{00}\max\{t,1-t\}:
        p_{00}\max\{t,1-t\}+p_{00}t\le p_{00}\right\}.
\]
The midpoint $t=1/2$ solves this problem, giving
\[
 (Q_{\mathrm c},\delta_{\mathrm c})=(Q_{1/2},p_{00}/2),\qquad
 \delta_{\mathrm c}+W_1(Q_{\mathrm c},Q_0)=p_{00}=\delta_0.
\]
This benchmark attains the globally smallest covering radius over $\mathcal H$, hence also solves \eqref{rec:center-rule} with exact radii.
The displayed equality verifies the nesting condition \eqref{rec:nesting-condition} in Corollary~\ref{main:triangle-certificate}.
For Theorem~\ref{main:feasibility}, the exact moment extrema are
\[
 \begin{aligned}
 \min_{Q\in\mathcal A_{\delta_{\mathrm c}}(Q_{\mathrm c})}
   \E_Q[\theta-Y]&=\theta-(p_{11}+p_{00}),\\
 \max_{Q\in\mathcal A_{\delta_{\mathrm c}}(Q_{\mathrm c})}
   \E_Q[\theta-Y]&=\theta-p_{11}.
 \end{aligned}
\]
They are attained by imputing one and zero for all missing outcomes, respectively.
Requiring zero to lie between these extrema gives
\[
 \Theta_{\mathrm{DRO}}(\delta_{\mathrm c};Q_{\mathrm c})
 =[p_{11},p_{11}+p_{00}].
\]
The inequality in Theorem~\ref{main:tube}, $|\theta-(p_{11}+p_{00}/2)|\le p_{00}/2$, gives the same interval for the tube.
In the numerical example, recentering raises the lower bound from $0.2$ to $0.4$, while the upper bound remains $0.6$.
Both the DRO interval and the tube therefore have half their original width.
Appendix~\ref{app:recentering-missing} provides details.

\paragraph{Thresholded networks.}
Here recentering tightens the computable bounds without requiring the sharp set to be computed.
Continue with the same observed law, capacity-weighted distance, and initial radius $\delta_0=\eta d_{\max}$.
Let $G_t(G)$ assign the fraction $t\in[0,1/2]$ of each eligible weak-link capacity and reduce its untreated-contact weight to keep the row sum equal to one.
Write $Q_t=\mathcal L_{P_X}(X,G_t(G))$.
The exact covering and movement certificates are
\[
 \overline r(t)=r^\star(Q_t)=(1-t)\delta_0,
 \qquad \overline s(t)=W_1(Q_t,Q_0)=t\delta_0.
\]
Consequently, \eqref{rec:certificate-rule} becomes
\[
 \min_{0\le t\le1/2}
 \bigl\{(1-t)\delta_0:
       (1-t)\delta_0+t\delta_0\le\delta_0\bigr\}.
\]
Its solution is the midpoint imputation, $Q_{\mathrm c}=Q_{1/2}$, with $\delta_{\mathrm c}=\delta_0/2$.
Corollary~\ref{main:triangle-certificate} applies because the new radius and movement sum to $\delta_0$.
Compatible laws separated by $\delta_0$ show that every benchmark covering them needs radius at least $\delta_0/2$.
Thus this midpoint is already globally optimal for covering radius; the sharper criterion cannot reduce that radius further.

The recentered exposure bounds are
\[
 \underline b_{\mathrm c}=1-\delta_0/2,\qquad
 \overline b_{\mathrm c}
 =1+\frac{\eta}{2}\sum_{(i,j)\in E}w_{ij}
     +\frac{\delta_0}{2}\sum_{i=1}^m\omega_i.
\]
The implementation computes these exposure extrema by constrained optimization, supplies them to the moment routine, and searches for the parameter endpoints.
The formulas validate those computations.
For the same positive observed contribution, the resulting interval is
\[
 \Theta_{\mathrm{DRO}}(\delta_{\mathrm c};Q_{\mathrm c})
 =\Theta\cap\left[
 \frac{\E_{P_X}[T^\top(Y-\beta_0T)]}{\overline b_{\mathrm c}},\,
 \frac{\E_{P_X}[T^\top(Y-\beta_0T)]}{\underline b_{\mathrm c}}
 \right].
\]
The positive floor in the metric weights yields $\delta_0\sum_{i=1}^m\omega_i-\eta\sum_e w_e =\eta(m d_{\max}-\sum_e w_e)/10>0$ in this instance.
Both interval endpoints move inward when $\delta_0>0$ and the observed contribution is positive, before intersection with $\Theta$.
Thus recentering strictly reduces the width of this unrestricted interval.
It need not recover the sharp set.

Holding the same observed contribution fixed, Theorem~\ref{main:tube} gives the nested exposure tube
\[
 \left|\E_{P_X}[T^\top(Y-\beta_0T)]
  -\theta\left(1+\frac{\eta}{2}\sum_{(i,j)\in E}w_{ij}\right)\right|
 \le\frac{\delta_0}{2}\left(1+\sum_{i=1}^m\omega_i\right)|\theta|.
\]
Appendix~\ref{app:recentering-network} proves these calculations and gives the numerical bounds in Figure~\ref{fig:recentering-sets}.

\begin{remark}
Known link capacities can further strengthen $\mathcal H$ within the same moment-optimization framework.
In this example, retaining the capacity relaxation gives tighter bounds than the displayed DRO intervals; both calibrated transport balls then contain the entire retained class, so recentering leaves that bound unchanged.
Inexpensive maximum-cut certificates can strengthen it further (Appendix~\ref{app:network-comparators}).
\end{remark}

\begin{figure}[htbp]
\centering
\input{figures/bounds_before_after_recentering}
\caption{Population bounds before and after recentering.
The network intervals bound the moment-defined exposure slope in \eqref{eq:network-moment}.
The network tubes hold the observed contribution fixed before applying Theorem~\ref{main:tube}.
The missing-outcome sharp interval is known exactly.
For networks, the solid sharp segment is a certified inner interval and the dashed segment brackets the unknown lower endpoint.}
\label{fig:recentering-sets}\label{fig:dro-example-sets}
\end{figure}
\FloatBarrier

Recentering can remove excess width from an outer approximation, but it cannot reduce the sharp identified set without additional information or assumptions.

\input{sections/unobserved_component_bounds}

%% file: figures/recentering_algorithm_schematic.tex
\begin{tikzpicture}[
    x=1cm,y=1cm,font=\footnotesize,
    line cap=round,line join=round,
    every node/.style={inner sep=2pt},
    stepbox/.style={draw=black!25,fill=black!4,rounded corners=2pt,
      text width=3.2cm,minimum height=.95cm,align=center,inner sep=4pt}
]
\definecolor{algorithmteal}{HTML}{157A78}

\node[stepbox] (calibrate) at (1.78,6.08)
  {\textbf{1. Calibrate}\\$Q_0,\ \delta_0$};
\node[stepbox] (certify) at (5.85,6.08)
  {\textbf{2. Propose}\\coverage-certified candidates};
\node[stepbox,draw=algorithmteal,fill=algorithmteal!9]
  (recenter) at (9.92,6.08)
  {\textbf{3. Verify}\\triangle bound or $U_\Phi$};
\node[stepbox] (compute) at (13.99,6.08)
  {\textbf{4--5. Compute}\\DRO and tube sets};
\foreach \from/\to in {calibrate/certify,certify/recenter,recenter/compute}
  \draw[black!55,-{Stealth[length=4pt]},line width=.7pt]
    (\from.east) -- (\to.west);

\node[font=\small\bfseries] at (2.6,5.0) {Parameter values};
\draw[black!40,line width=.7pt]
  (2.65,2.45) ellipse [x radius=2.70,y radius=2.10];
\node[text=black!70] at (4.82,4.00) {$\Theta$};
\path[fill=black!8,draw=black!55,line width=.9pt]
  plot[smooth cycle,tension=.85] coordinates {
    (.48,2.18) (.96,3.48) (2.50,3.93) (4.32,3.38)
    (4.75,2.19) (3.89,1.05) (2.15,.74) (.76,1.20)
  };
\node[text=black!70] at (2.75,3.65)
  {$\Theta_{\mathrm{DRO}}(\delta_0;Q_0)$};
\path[fill=algorithmteal!16,draw=algorithmteal,line width=1.05pt]
  plot[smooth cycle,tension=.85] coordinates {
    (.79,2.20) (1.04,3.05) (2.22,3.39) (3.77,2.98)
    (4.02,2.10) (3.24,1.15) (2.05,1.01) (1.05,1.45)
  };
\node[text=algorithmteal] at (2.55,3.04)
  {$\Theta_{\mathrm{DRO}}(\delta_{\mathrm c};Q_{\mathrm c})$};
\begin{scope}[shift={(2.05,2.13)},rotate=28]
  \path[fill=inkblue!21,draw=inkblue,dashed,line width=1.05pt]
    (0,0) ellipse [x radius=1.10,y radius=.62857];
\end{scope}
\node[text=inkblue] at (2.05,2.13) {$\Theta_{\mathrm{ID}}(P_X)$};

\draw[black!60,{Stealth[length=5pt]}-,line width=.9pt]
  (5.65,2.45) -- (9.45,2.45);
\node[text=black!75] at (7.55,3.22) {Theorem~\ref{main:feasibility}};
\node at (7.55,2.85) {$\E_Q[g_\theta(X^\star)]=0$};
\node[align=center,text=black!70] at (7.55,1.94)
  {for some $Q$\\in the envelope};

\node[font=\small\bfseries] at (12.80,5.0) {Latent laws};
\filldraw[fill=black!3,draw=black!40,line width=.7pt]
  (9.8,.10) rectangle (15.8,4.70);
\node[anchor=north west,text=black!70] at (10.0,4.48) {$\mathcal H$};
\path[fill=black!8,draw=black!55,line width=.9pt]
  (13.35,2.45) circle (2.05);
\node[text=black!70] at (14.20,3.62) {$\mathcal A_{\delta_0}(Q_0)$};
\path[fill=algorithmteal!16,draw=algorithmteal,line width=1.05pt]
  (12.60,2.45) circle (1.30);
\node[text=algorithmteal] at (12.60,3.48)
  {$\mathcal A_{\delta_{\mathrm c}}(Q_{\mathrm c})$};

\begin{scope}[shift={(12.00,2.45)},scale=.70]
  \input{figures/compatible_class_outline}
\end{scope}
\node[text=inkblue] (compatiblelabel) at (10.50,2.47) {$\Qset_f(P_X)$};
\draw[inkblue!75,line width=.55pt]
  (compatiblelabel.east) -- (11.47,2.47);

\draw[black!55,line width=.75pt] (13.35,2.45) -- (14.695, .903);
\node[fill=black!8,text=black!70,inner sep=1pt]
  at (14.20,1.54) {$\delta_0$};
\draw[algorithmteal,line width=.85pt] (12.60,2.45) -- (12.60,1.15);
\node[text=algorithmteal] at (12.33,1.50) {$\delta_{\mathrm c}$};
\draw[algorithmteal,-{Stealth[length=4pt]},line width=.9pt]
  (13.35,2.58) .. controls (13.30,3.01) and (12.67,3.01) .. (12.60,2.58);
\node[font=\scriptsize,text=algorithmteal] at (13.04,3.08) {recenter};
\fill[black!70] (13.35,2.45) circle (2pt);
\node[anchor=west,text=black!70] at (13.48,2.47) {$Q_0$};
\fill[algorithmteal] (12.60,2.45) circle (2pt);
\node[anchor=west,text=algorithmteal] at (12.69,2.18) {$Q_{\mathrm c}$};

\node at (12.80,-.32)
  {$\Phi(Q_{\mathrm c},\delta_{\mathrm c};Q_0)\le\delta_0$};
\end{tikzpicture}

%% file: figures/bounds_before_after_recentering.tex
\begin{tikzpicture}[x=1cm,y=1cm,font=\footnotesize,line cap=round]
\node[font=\small\bfseries] at (5.3,.65) {Missing outcomes};
\node[font=\small\bfseries] at (9.5,.65) {Thresholded network};
\node[anchor=east] at (3.25,-0.000) {Sharp};
\draw[black!65,line width=2pt] (5.300000,-0.000) -- (6.200000,-0.000);
\fill[black!65] (6.200000,-0.000) circle (1.6pt);
\fill[black!65] (5.300000,-0.000) circle (1.6pt);
\draw[black!65,line width=2pt] (8.441507,-0.000) -- (9.308511,-0.000);
\fill[black!65] (9.308511,-0.000) circle (1.6pt);
\fill[black!12] (8.199103,-0.090) rectangle (8.441507,0.090);
\draw[black!50,densely dashed,line width=1.3pt] (8.199103,-0.000) -- (8.441507,-0.000);
\node[anchor=east] at (3.25,-0.530) {Initial DRO};
\draw[black!45,line width=2pt] (4.400000,-0.530) -- (6.200000,-0.530);
\fill[black!45] (6.200000,-0.530) circle (1.6pt);
\fill[black!45] (4.400000,-0.530) circle (1.6pt);
\draw[black!45,line width=2pt] (7.860001,-0.530) -- (9.333916,-0.530);
\fill[black!45] (9.333916,-0.530) circle (1.6pt);
\fill[black!45] (7.860001,-0.530) circle (1.6pt);
\node[anchor=east] at (3.25,-1.060) {Recentered DRO};
\draw[inkblue,line width=2pt] (5.300000,-1.060) -- (6.200000,-1.060);
\fill[inkblue] (6.200000,-1.060) circle (1.6pt);
\fill[inkblue] (5.300000,-1.060) circle (1.6pt);
\draw[inkblue,line width=2pt] (7.961147,-1.060) -- (9.321203,-1.060);
\fill[inkblue] (9.321203,-1.060) circle (1.6pt);
\fill[inkblue] (7.961147,-1.060) circle (1.6pt);
\node[anchor=east] at (3.25,-1.590) {Initial tube};
\draw[orange!40!black,line width=2pt] (4.400000,-1.590) -- (6.200000,-1.590);
\fill[orange!40!black] (6.200000,-1.590) circle (1.6pt);
\fill[orange!40!black] (4.400000,-1.590) circle (1.6pt);
\draw[orange!40!black,line width=2pt] (7.839239,-1.590) -- (11.126514,-1.590);
\fill[orange!40!black] (11.126514,-1.590) circle (1.6pt);
\fill[orange!40!black] (7.839239,-1.590) circle (1.6pt);
\node[anchor=east] at (3.25,-2.120) {Recentered tube};
\draw[orange!85!black,line width=2pt] (5.300000,-2.120) -- (6.200000,-2.120);
\fill[orange!85!black] (6.200000,-2.120) circle (1.6pt);
\fill[orange!85!black] (5.300000,-2.120) circle (1.6pt);
\draw[orange!85!black,line width=2pt] (7.950606,-2.120) -- (9.444876,-2.120);
\fill[orange!85!black] (9.444876,-2.120) circle (1.6pt);
\fill[orange!85!black] (7.950606,-2.120) circle (1.6pt);
\draw[black!55] (3.5,-2.54) -- (7.1,-2.54);
\draw[black!55] (3.500000,-2.54) -- (3.500000,-2.61);
\node[anchor=north,font=\scriptsize] at (3.500000,-2.66) {0};
\draw[black!55] (4.400000,-2.54) -- (4.400000,-2.61);
\node[anchor=north,font=\scriptsize] at (4.400000,-2.66) {0.2};
\draw[black!55] (5.300000,-2.54) -- (5.300000,-2.61);
\node[anchor=north,font=\scriptsize] at (5.300000,-2.66) {0.4};
\draw[black!55] (6.200000,-2.54) -- (6.200000,-2.61);
\node[anchor=north,font=\scriptsize] at (6.200000,-2.66) {0.6};
\draw[black!55] (7.100000,-2.54) -- (7.100000,-2.61);
\node[anchor=north,font=\scriptsize] at (7.100000,-2.66) {0.8};
\draw[black!55] (7.7,-2.54) -- (11.3,-2.54);
\draw[black!55] (7.776596,-2.54) -- (7.776596,-2.61);
\node[anchor=north,font=\scriptsize] at (7.776596,-2.66) {0.9};
\draw[black!55] (8.542553,-2.54) -- (8.542553,-2.61);
\node[anchor=north,font=\scriptsize] at (8.542553,-2.66) {0.95};
\draw[black!55] (9.308511,-2.54) -- (9.308511,-2.61);
\node[anchor=north,font=\scriptsize] at (9.308511,-2.66) {1};
\draw[black!55] (10.074468,-2.54) -- (10.074468,-2.61);
\node[anchor=north,font=\scriptsize] at (10.074468,-2.66) {1.05};
\draw[black!55] (10.840426,-2.54) -- (10.840426,-2.61);
\node[anchor=north,font=\scriptsize] at (10.840426,-2.66) {1.1};
\node at (5.3,-3.36) {$\theta$};
\node at (9.5,-3.36) {$\theta$};
\end{tikzpicture}

%% file: sections/unobserved_component_bounds.tex
\paragraph{Bounding the unobserved component.}
\label{disc:benchmark-residual}
When part of a moment is determined by the observed law, we can hold its expectation fixed and bound only the remaining residual.
Step~4 of Algorithm~\ref{alg:main-dro-construction} can optimize the residual over its calibrated envelope, as in the regression application.
Further bounds from the maintained restrictions can also be intersected with the DRO set while preserving every compatible value.
Appendix~\ref{app:population-residuals} proves the residual identity; Appendices~\ref{app:network} and~\ref{app:measurement-error} give its applications.

%% file: sections/discussion_and_conclusion.tex
\section{Discussion and conclusion}
\label{sec:conclusion}

We asked how to compute and refine parameter bounds from imperfect measurements while preserving every value allowed by the data and assumptions.
Our framework combines measurement restrictions, moment optimization, and benchmark selection to refine outer bounds while holding the observed law and identifying assumptions fixed.

Algorithm~\ref{alg:main-dro-construction} implements our main contribution: a reusable methodology for calibration, benchmark selection, and parameter-bound computation.
The theory ensures that accepted benchmark changes preserve every compatible parameter value without widening the initial bounds.

The network application shows how known weight limits and benchmark selection yield tighter bounds on the moment-defined network-exposure slope without solving the difficult sharp identification problem or strengthening assumptions.
Recentering recovers the sharp interval for missing outcomes.
In regression, retaining the observed joint law and measurement-error budget makes the optimized bounds sharp.

Future work can develop methods for estimating the bounds and constructing confidence sets for the parameter and its identified set, accounting for sampling uncertainty and benchmark selection.
It can also clarify how imperfect measurement affects attainable estimation rates and the precision of inference.

%% file: appendix/supplementary_contents.tex
\begingroup
\setstretch{1}
\pdfbookmark[0]{Supplementary Material}{appendix-outline}
\begin{center}
{\Large Supplementary Material to\\[.2em]
``Partial Identification under Imperfect Measurement:\\
A Distributionally Robust Approach''\par}
\vspace{1.5em}
Isaac Meza\\[.3em]
Department of Economics, Harvard University
\end{center}

\vspace{2em}
\noindent{\large\bfseries Table of contents\par}
\vspace{5pt}
\small
\newcommand{\appendixoutlineentry}[2]{%
  \par\addvspace{8pt}\noindent
  \hyperref[#1]{\ref*{#1}\enspace #2}\hfill
  \hyperref[#1]{\textcolor{black}{\bfseries\pageref*{#1}}}\par
}
\appendixoutlineentry{app:main-proofs}{Foundations and core proofs}
\appendixoutlineentry{app:radius-calibration}{Coverage and benchmark refinement}
\appendixoutlineentry{app:computational-analysis}{Computing parameter sets}
\appendixoutlineentry{app:missing-outcomes}{Missing outcomes}
\appendixoutlineentry{app:network}{Thresholded networks}
\appendixoutlineentry{app:measurement-error}{Measurement-error regression}
\par\addvspace{8pt}\noindent
\hyperref[app:references]{References}\hfill
\hyperref[app:references]{\textcolor{black}{\bfseries\pageref*{app:references}}}
\par
\endgroup

%% file: appendix/foundations_and_core_proofs.tex
\section{Foundations and core proofs}
\label{app:main-proofs}
Throughout, $\X^\star$ is compact, $\mathcal H\subseteq\mathcal P(\X^\star)$ is nonempty, closed and convex, and benchmarks belong to $\mathcal H$.
The compatible class $\mathcal F=\Qset_f(P_X)$ is nonempty; it need not be closed or convex.
On $\X^\star$, $W_1$ metrizes weak convergence.

\subsection{Accommodating random measurement error}
\label{app:stochastic}\label{app:stoch:known-kernel}
Including measurement error in the latent state makes measurement deterministic and preserves the identified set and error restrictions.
Here $X^\star$ denotes the latent variables before adding $\epsilon$, so $X=f(X^\star,\epsilon)$.
Let $f$ be measurable between standard Borel spaces and $\nu(de\mid x^\star)$ the known law of $\epsilon$ given $X^\star=x^\star$.
For measurable $A\subseteq\X$, let $\mathbf1_A$ be its indicator.
Then
\[
 \mathsf K_f(A\mid x^\star)=\int\mathbf1_A\{f(x^\star,e)\}\nu(de\mid x^\star),
 \qquad \mathcal Q_f(P_X)=\{Q\in\mathcal C:\mathsf K_fQ=P_X\}.
\]
The induced observed law satisfies $(\mathsf K_fQ)(A)=\int\mathsf K_f(A\mid x^\star)Q(dx^\star)$; $\mathcal C$ imposes the other maintained restrictions.
\label{app:stoch:kernel-equivalence}
Indeed, conditional expectation proves necessity, and $Q(dx^\star)\nu(de\mid x^\star)\delta_{f(x^\star,e)}(dx)$ constructs a joint witness for sufficiency, where $\delta_z$ is the point mass at $z$.
A directly specified observation kernel has the same interpretation.
Deterministic measurement is the special case $\mathsf K_f(A\mid x^\star)=\mathbf1_A\{f(x^\star)\}$.

\label{app:stoch:partial-noise}\label{app:stoch:augmentation}
If the noise law is only partially specified, let $\mathcal R$ be the class of joint laws of $(X^\star,\epsilon)$ satisfying all maintained restrictions, including error-moment and dependence restrictions.
Write $R_{X^\star}$ for the marginal law of $X^\star$ under $R$.
Define
\[
 \mathcal R_f(P_X)=\{R\in\mathcal R:f_\#R=P_X\},\qquad
 \mathcal Q_f(P_X)=\{R_{X^\star}:R\in\mathcal R_f(P_X)\}.
\]
Augmenting the latent state by $\epsilon$ makes measurement deterministic.
It preserves the identified set because $\int g_\theta(x^\star)\,dR=\int g_\theta\,dR_{X^\star}$, and every projected compatible law has such a joint witness.
Projection does not remove the restrictions defining $\mathcal R$.
A transport relaxation on the projected space needs its own metric and coverage certificate, as in Appendix~\ref{app:measurement-error}.

\subsection{Proofs of Theorems~\ref*{main:feasibility} and~\ref*{main:tube}}
\label{app:feasibility-proof}
\label{app:duality-proof}
\begin{proof}[Proof of Theorem~\ref{main:feasibility}]
Since $\X^\star$ is compact, $\mathcal P(\X^\star)$ is weakly compact, as is its closed subset $\mathcal H$.
Continuity of $W_1$ then makes $\mathcal A_\delta(Q_0)$ compact; it is nonempty because it contains $Q_0$.
To see convexity, let $Q_1,Q_2$ belong to the envelope and let $\pi_1,\pi_2$ be optimal couplings to $Q_0$.
For $t\in[0,1]$, $t\pi_1+(1-t)\pi_2$ couples $tQ_1+(1-t)Q_2$ to $Q_0$, so
\[
 W_1(tQ_1+(1-t)Q_2,Q_0)
 \le tW_1(Q_1,Q_0)+(1-t)W_1(Q_2,Q_0)\le\delta.
\]
Convexity of $\mathcal H$ keeps this mixture admissible.

For continuous $g_\theta$, the map $Q\mapsto\psi_\theta(Q)=\int g_\theta\,dQ$ is continuous and affine.
It therefore attains a minimum $m_\theta$ and maximum $M_\theta$ on the envelope.
Mixtures of minimizing and maximizing laws attain every value in $[m_\theta,M_\theta]$, and no law has a moment outside that interval.
In particular, a zero-moment law exists exactly when $m_\theta\le0\le M_\theta$.

The set $\{Q\in\mathcal H:\psi_\theta(Q)=0\}$ is closed in compact $\mathcal H$.
When nonempty, it contains a law minimizing $W_1(Q,Q_0)$, so $R(\theta;Q_0)\le\delta$ is equivalent to a zero-moment law in the envelope, including at equality.
When it is empty, $R=+\infty$ and neither condition holds.
This proves \eqref{main:feasibility-test}.
Finally, $\delta\ge r^\star(Q_0)$ puts every $Q\in\mathcal F$ inside the envelope.
Each compatible zero-moment witness remains available, proving identification inclusion.
If $\mathcal H=\mathcal F$, coverage gives $\mathcal A_\delta(Q_0)=\mathcal F$, hence equality of the parameter sets.
\end{proof}

\begin{proof}[Proof of Theorem~\ref{main:tube}]
For every coupling $\pi$ of $Q,Q_0$,
\[
 |\psi_\theta(Q)-\psi_\theta(Q_0)|
 \le L(\theta)\int d(x,z)\,d\pi.
\]
Minimizing coupling cost gives the tube, also when $L(\theta)=0$.
The full symmetric moment interval is attained only if both directed moment deviations equal $\delta L(\theta)$.

For the dual, assume $\mathcal H=\mathcal P(\X^\star)$ and write $g=g_\theta$.
Let $\mathfrak P$ be the compact convex set of laws on $\X^\star\times\X^\star$ with second marginal $Q_0$.
Optimal couplings give
\[
 \max_{\substack{\pi\in\mathfrak P\\\int d\,d\pi\le\delta}}\int g(x)\,d\pi
 =\max_{\pi\in\mathfrak P}\inf_{\lambda\ge0}
   \left\{\lambda\delta+\int[g(x)-\lambda d(x,z)]\,d\pi\right\}.
\]
Sion's minimax theorem~\citeyearpar{Sion1958} interchanges the maximum and infimum.
For fixed $\lambda$, the maximum equals $\int\max_x\{g(x)-\lambda d(x,z)\}\,Q_0(dz)$.
To justify this equality without a selection assumption, maximize over a finite net, choose the first maximizing index measurably, and pass to the limit using uniform continuity on $\X^\star\times\X^\star$.
This proves \eqref{main:dual}, including $\delta=0$; apply it to $-g$ for the lower moment.
It is the compact-space specialization of Wasserstein duality in \citet{BlanchetMurthy2019,GaoKleywegt2023}.
For $\delta>0$ the dual objective is continuous and at least $\lambda\delta+\int g\,dQ_0$, hence attains its minimum.
At zero radius, a Lipschitz $g$ permits any $\lambda\ge\Lip(g)$, where $\Lip(g)$ is its least Lipschitz constant; merely continuous $g$ need not permit finite attainment (take $g(x)=\sqrt{x}$, $\X^\star=[0,1]$, $Q_0=\delta_0$).
For retained restrictions, the coupling's first marginal must remain in $\mathcal H$; the unrestricted formula cannot omit them.
\end{proof}

\subsection{Proof of Theorem~\ref*{main:recentering} and Corollary~\ref*{main:triangle-certificate}}
\label{app:recentering-main-proof}
\begin{proof}[Proof of Theorem~\ref{main:recentering}]
For a fixed $Q$ and $B_t=tB+(1-t)B'$ with $t\in[0,1]$, mixing optimal couplings from $Q$ to $B$ and $B'$ gives
\[
 W_1(Q,B_t)\le tW_1(Q,B)+(1-t)W_1(Q,B').
\]
Taking the supremum over $Q\in\mathcal F$ proves $r^\star(B_t)\le t r^\star(B)+(1-t)r^\star(B')$.
Also, $W_1(Q,B)\le W_1(Q,B')+W_1(B',B)$ for every compatible $Q$.
Taking suprema and interchanging $B,B'$ give
\begin{equation}
 |r^\star(B)-r^\star(B')|\le W_1(B,B').
 \label{app:rec:radius-lipschitz}
\end{equation}
Thus $r^\star$ is continuous and attains a minimum on compact $\mathcal H$.
These arguments do not require a maximizing law in $\mathcal F$, which need not itself be compact.

The new envelope is nonempty and compact, and $Q\mapsto W_1(Q,Q_0)$ is continuous, so the maximum defining $\Phi$ is attained.
Its value is at most $\delta_0$ exactly when every $Q\in\mathcal A_{\delta_B}(B)$ satisfies $W_1(Q,Q_0)\le\delta_0$.
All such laws already belong to $\mathcal H$, so this is precisely the envelope inclusion in \eqref{rec:exact-nesting}.
If $\delta_B\ge r^\star(B)$, then
\[
 \mathcal F\subseteq\mathcal A_{\delta_B}(B)
 \subseteq\mathcal A_{\delta_0}(Q_0).
\]
A law witnessing a zero moment in either smaller class remains a witness in the larger class.
This proves both inclusions in \eqref{rec:nested-dro}.
\end{proof}

\begin{proof}[Proof of Corollary~\ref{main:triangle-certificate}]
For every law in the new envelope, the triangle inequality gives
\[
 W_1(Q,Q_0)\le W_1(Q,B)+W_1(B,Q_0)
 \le\delta_B+W_1(B,Q_0),
\]
which proves \eqref{rec:triangle-upper}.
Condition \eqref{rec:nesting-condition} therefore implies exact nesting.
For $\theta\in\mathcal T_{\delta_B}(B)$, the Lipschitz moment bound from Theorem~\ref{main:tube} yields
\[
 \begin{aligned}
 |\psi_\theta(Q_0)|
 &\le |\psi_\theta(B)|+L(\theta)W_1(B,Q_0)\\
 &\le L(\theta)\{\delta_B+W_1(B,Q_0)\}
 \le L(\theta)\delta_0.
 \end{aligned}
\]
Hence the tubes are nested, also when $L(\theta)=0$.
Coverage and Theorem~\ref{main:tube} give the identification inclusion in \eqref{rec:nested-tubes}.

Both $r^\star(B)$ and $W_1(B,Q_0)$ are continuous and convex in $B$.
The $\delta_0$-sublevel set of their sum in $\mathcal H$ is therefore compact and convex.
It contains $Q_0$ when $\delta_0\ge r^\star(Q_0)$, so minimizing the continuous convex function $r^\star$ there is an attained convex program.
On finite support, each Wasserstein distance has a transport LP.
With finitely many compatible generators, convexity reduces the coverage supremum to the maximum of their distances.
The joint epigraph formulation with linear retained restrictions is thus an LP, written explicitly in Appendix~\ref{app:recenter-finite}.
\end{proof}

\subsection{Observable contributions and residual bounds}
\label{app:population-residuals}\label{app:residual-proof}
Section~\ref{disc:benchmark-residual} proposes holding observed moment contributions fixed and bounding only the residual.
The result below justifies the resulting outer parameter bounds and, under Lipschitz regularity, the network example's exposure tube.

Choose an integrable observable function $h_\theta$ and set
\begin{equation}
 e_\theta=g_\theta-h_\theta\circ f,\qquad
 \widetilde g_\theta=\E_{P_X}h_\theta+e_\theta.
 \label{eq:pop-residual-center}
\end{equation}
\begin{proposition}[Residual identity and bounds]\label{pop:residual}
If $\mathcal U\supseteq\mathcal F$ and the residual is integrable under every $Q\in\mathcal U$, a valid outer parameter set is
\begin{equation}
 \mathcal R_h(\mathcal U)=\left\{\theta\in\Theta:
 \E_{P_X}h_\theta+\inf_{Q\in\mathcal U}\E_Qe_\theta\le0\le
 \E_{P_X}h_\theta+\sup_{Q\in\mathcal U}\E_Qe_\theta\right\}.
 \label{eq:pop-residual-set}
\end{equation}
For $\mathcal U=\mathcal A_\delta(Q_0)$ and an $L_{\rm res}(\theta)$-Lipschitz residual, this set is contained in $\{\theta:|\E_{P_X}h_\theta+\E_{Q_0}e_\theta| \le\delta L_{\rm res}(\theta)\}$.
\end{proposition}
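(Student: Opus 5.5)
The plan is to rewrite the moment under every compatible law as an observed constant plus a latent residual, and then use only the containment $\mathcal F\subseteq\mathcal U$.

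\textbf{Residual identity.} Fix $\theta$ and $Q\in\mathcal F=\Qset_f(P_X)$. Since $f_\#Q=P_X$, the change-of-variables formula for pushforwards gives $\E_Q[h_\theta(f(X^\star))]=\E_{P_X}h_\theta$. Integrability of $h_\theta$ under $P_X$ is exactly what this step needs. By linearity,
\[
 \E_Q g_\theta=\E_{P_X}h_\theta+\E_Q e_\theta=\E_Q\widetilde g_\theta
 \quad\text{for every }Q\in\mathcal F.
\]
The identity uses compatibility and need not hold on all of $\mathcal U$. That is harmless, because the outer set in \eqref{eq:pop-residual-set} is built from the residual, not from $g_\theta$.

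\textbf{Outer-set inclusion.} Take $\theta\in\Theta_{\mathrm{ID}}(P_X)$. By Definition~\ref{def:exact-id} there is a witness $Q\in\mathcal F$ with $\E_Qg_\theta=0$, so $\E_{P_X}h_\theta+\E_Qe_\theta=0$. Since $Q\in\mathcal U$, we have
\[
 \inf_{Q'\in\mathcal U}\E_{Q'}e_\theta\le\E_Qe_\theta\le\sup_{Q'\in\mathcal U}\E_{Q'}e_\theta .
\]
Adding $\E_{P_X}h_\theta$ to each term places $0$ between the two bounds, so $\theta\in\mathcal R_h(\mathcal U)$. Extended-real infima and suprema cause no trouble here. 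I will stress that validity means $\Theta_{\mathrm{ID}}\subseteq\mathcal R_h(\mathcal U)$; no intermediate-value argument is needed, so neither convexity nor compactness of $\mathcal U$ is required.

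\textbf{Tube containment.} Let $\mathcal U=\mathcal A_\delta(Q_0)$ and $\theta\in\mathcal R_h(\mathcal U)$. Write $c=\E_{P_X}h_\theta$ and $\mu_0=\E_{Q_0}e_\theta$. Repeat the coupling argument in the proof of Theorem~\ref{main:tube}, with $e_\theta$ in place of $g_\theta$. For every $Q$ in the envelope, integrating $|e_\theta(x)-e_\theta(z)|\le L_{\rm res}(\theta)d(x,z)$ against an optimal coupling gives $|\E_Qe_\theta-\mu_0|\le\delta L_{\rm res}(\theta)$. Hence
\[
 \inf_{\mathcal U}\E e_\theta\ge\mu_0-\delta L_{\rm res}(\theta),
 \qquad
 \sup_{\mathcal U}\E e_\theta\le\mu_0+\delta L_{\rm res}(\theta).
\]
Combining these with $c+\inf\le0\le c+\sup$ yields $c+\mu_0-\delta L_{\rm res}(\theta)\le0$ and $c+\mu_0+\delta L_{\rm res}(\theta)\ge0$, which is the claimed band.

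\textbf{Main obstacle.} The delicate step is the last one: the residual $e_\theta=g_\theta-h_\theta\circ f$ must itself be Lipschitz in the latent metric. This can fail when $f$ is discontinuous, as with thresholding, so I take it as the stated hypothesis rather than deriving it. Integrability of $e_\theta$ under $Q_0$ and under every envelope law is also assumed. It follows from Lipschitzness on a compact space, which makes the coupling integrals well defined.
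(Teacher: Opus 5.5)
Your proposal is correct and follows the paper's own argument. Compatibility gives $\E_Q(h_\theta\circ f)=\E_{P_X}h_\theta$ on $\mathcal F$; a compatible zero-moment witness lying in $\mathcal U$ yields validity of $\mathcal R_h(\mathcal U)$; and the coupling Lipschitz inequality applied to $e_\theta$ gives the band, with your write-up simply expanding the paper's one-line steps.
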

\begin{proof}
Compatibility gives $\E_Q(h_\theta\circ f)=\E_{P_X}h_\theta$, hence $\E_Q\widetilde g_\theta=\E_Qg_\theta$.
A compatible zero-moment witness proves the first claim; the coupling Lipschitz inequality proves the second.
For a known measurement kernel replace $h_\theta\circ f$ by $\mathsf K_f^*h_\theta(x^\star) =\int h_\theta(x)\mathsf K_f(dx\mid x^\star)$, with the same proof.
\end{proof}
The corrected and original moment optimizations agree whenever $\mathcal U$ retains the observable expectation.
Otherwise they need not be nested; intersect their valid outer sets.
With compact convex $\mathcal U$ and continuous residual, its moment interval is attained.

%% file: appendix/coverage_and_benchmark_refinement.tex
\clearpage
\section{Coverage and benchmark refinement}
\label{app:radius-calibration}\label{app:recentering}

\subsection{Coverage certificates}
\label{app:recenter-coupling}
\label{rc:coupling-certificate}\label{app:radius-certificate-lp}
Algorithm~\ref{alg:main-dro-construction} requires a radius that covers every compatible distribution.
This subsection shows how to obtain such a radius for benchmarks constructed by imputing from observed data, using the maintained measurement restrictions.
It provides both moment-based bounds and a finite-state linear program.

Let $Z$ be the full state, $f(Z)$ its observation, and $\tau(Z)$ the state on which transport is measured.
For compatible joint distributions $R\in\mathfrak F$ of the full state $Z$, $f_\#R=P_X$ and $\mathcal F=\{\tau_\#R:R\in\mathfrak F\}$.
An imputation kernel $K_B(du\mid x)$ defines $B(du)=\int K_B(du\mid x)P_X(dx)$.
Under the joint law $R(dz)K_B(du\mid f(z))$, the pair $(\tau(Z),U)$ has marginals $\tau_\#R$ and $B$, so
\begin{equation}
 r^\star(B)\le\sup_{R\in\mathfrak F}\E_R c_B(Z),\qquad
 c_B(z)=\int d(\tau(z),u)K_B(du\mid f(z)).
 \label{cal:covering-bound}
\end{equation}
This imposes no independence restriction among the components of $Z$.
For a deterministic imputation $b$, if $p\ge1$, $\lambda_j\ge0$ and
\[
 d(\tau(z),b(f(z)))^p\le h(f(z))+\sum_j\lambda_jm_j(z),
 \qquad \E_Rm_j\le M_j,
\]
then Jensen's inequality gives the certificate $r^\star(b_\#P_X)\le(\E_{P_X}h+\sum_j\lambda_jM_j)^{1/p}$.
All terms must be integrable and the inequality must hold on the maintained support.
Nonemptiness ensures a nonnegative radicand.
Thus a uniform bound $M$ on expected squared imputation distance gives a $W_1$ radius $\sqrt M$ without changing the transport metric to $W_2$.

When $Z$ takes finitely many values, write compatibility of a probability vector $q$ as $q\ge0$, $Aq=a$, $Cq\le m$, including normalization and the observed law, with known constraint matrices $A,C$ and vectors $a,m$.
For imputation costs $c$, the certificate is
\begin{equation}
 \max_{q\ge0,Aq=a,Cq\le m}c^\top q
 =\min_{A^\top v+C^\top\lambda\ge c,\lambda\ge0}
       (a^\top v+m^\top\lambda).
 \label{cal:finite-certificate-lp}
\end{equation}
Any dual-feasible solution gives an upper certificate; $v$ is unrestricted.
When the compatible polytope is explicitly available, sharp moment optimization is also an LP.
Computational savings therefore depend on whether a covering-radius certificate is easier to obtain than the required sharp moment bounds.

\subsection{Continuity and benchmark selection}
\label{app:operational-centering}\label{app:exact-nesting}
This subsection gives conditions under which the benchmark-selection programs in Section~\ref{sec:recentering} have an optimal solution.
It also explains when the triangle certificate yields a convex problem and shows that the exact nesting constraint can be nonconvex.

\begin{lemma}[Continuity of retained envelopes]\label{lem:envelope-continuity}
Let $d_H^{W_1}$ denote the Hausdorff distance induced by $W_1$.
For $B,B'\in\mathcal H$ and $\delta,\delta'\ge0$,
\begin{align}
 d_H^{W_1}(\mathcal A_\delta(B),\mathcal A_{\delta'}(B'))
 &\le W_1(B,B')+|\delta-\delta'|,\label{eq:envelope-continuity}\\
 |\Phi(B,\delta;Q_0)-\Phi(B',\delta';Q_0)|
 &\le W_1(B,B')+|\delta-\delta'|.\label{eq:phi-continuity}
\end{align}
\end{lemma}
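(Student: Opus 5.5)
The plan is to prove \eqref{eq:envelope-continuity} directly by a mixing construction inside $\mathcal H$, and then deduce \eqref{eq:phi-continuity} from it. Write $s=W_1(B,B')$. By symmetry of the Hausdorff distance it suffices to show one direction: every $Q\in\mathcal A_\delta(B)$ lies within $W_1$-distance $s+|\delta-\delta'|$ of some $Q'\in\mathcal A_{\delta'}(B')$. Swapping the roles of $(B,\delta)$ and $(B',\delta')$ then gives the other direction.

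Fix $Q\in\mathcal A_\delta(B)$. The triangle inequality gives $W_1(Q,B')\le\delta+s$. If $\delta+s\le\delta'$, I take $Q'=Q$; it lies in $\mathcal H$ and in the $\delta'$-ball around $B'$, so the distance is zero. Otherwise, $Q$ may fall outside the new ball. I then pull $Q$ toward the new center along the segment
\[
 Q_t=(1-t)Q+tB',\qquad t\in[0,1].
\]
This segment stays in $\mathcal H$ by convexity (Assumption~\ref{ass:retained-class}), because $B'\in\mathcal H$. I will use the mixing-of-couplings argument from the proofs of Theorems~\ref{main:feasibility} and~\ref{main:recentering}. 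Mix an optimal coupling of $(Q,B')$ with the diagonal coupling of $B'$ with itself. Likewise, mix the diagonal coupling of $Q$ with an optimal coupling of $(B',Q)$. These give
\[
 W_1(Q_t,B')\le(1-t)W_1(Q,B')\le(1-t)(\delta+s),\qquad
 W_1(Q_t,Q)\le tW_1(Q,B')\le t(\delta+s).
\]
I choose $t=1-\delta'/(\delta+s)\in[0,1]$. Then $W_1(Q_t,B')\le\delta'$, so $Q_t\in\mathcal A_{\delta'}(B')$. Moreover $W_1(Q_t,Q)\le\delta+s-\delta'\le s+|\delta-\delta'|$. This proves \eqref{eq:envelope-continuity}.

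For \eqref{eq:phi-continuity}, recall that $\Phi(B,\delta;Q_0)$ is the maximum of the $1$-Lipschitz function $Q\mapsto W_1(Q,Q_0)$ over the compact set $\mathcal A_\delta(B)$; attainment was shown in the proof of Theorem~\ref{main:recentering}. Let $Q$ attain $\Phi(B,\delta;Q_0)$, and take $Q'\in\mathcal A_{\delta'}(B')$ from the construction above, with $W_1(Q,Q')\le s+|\delta-\delta'|$. Then
\[
 \Phi(B,\delta;Q_0)=W_1(Q,Q_0)\le W_1(Q',Q_0)+s+|\delta-\delta'|\le\Phi(B',\delta';Q_0)+s+|\delta-\delta'|.
\]
Exchanging the roles of the two pairs gives the absolute-value bound.

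The only step that needs care is the first one. The naive choice $Q'=Q$ fails when $\delta+s>\delta'$, and a generic projection onto the new ball need not stay in $\mathcal H$. The convex segment toward $B'$ resolves both issues at once: it stays in $\mathcal H$, and both transport distances along it scale linearly in $t$. Everything else is the triangle inequality.
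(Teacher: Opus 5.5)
Your proposal is correct and follows the paper's argument: you mix $Q$ toward $B'$ inside the convex class $\mathcal H$, use convexity of transport cost under mixtures to control both $W_1(Q_t,B')$ and $W_1(Q_t,Q)$, swap the two pairs to get the Hausdorff bound, and deduce the $\Phi$ bound from the $1$-Lipschitz property of $Q\mapsto W_1(Q,Q_0)$. The only difference is cosmetic: you normalize the mixing weight by the upper bound $\delta+W_1(B,B')$ rather than by the exact distance $a=W_1(Q,B')$, as the paper does, and this gives the same final estimate.
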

\Needspace{4\baselineskip}
\begin{proof}
Given $Q\in\mathcal A_\delta(B)$, put $a=W_1(Q,B')$.
If $a\le\delta'$, keep $Q$.
Otherwise mix it toward $B'$: $Q'=(\delta'/a)Q+(1-\delta'/a)B'\in\mathcal H$.
Convexity of transport cost gives $W_1(Q',B')\le\delta'$ and $W_1(Q,Q')\le a-\delta'\le\delta+W_1(B,B')-\delta'$.
Interchange the pairs to obtain the Hausdorff bound.
The objective $Q\mapsto W_1(Q,Q_0)$ is 1-Lipschitz, proving the second bound.
\end{proof}

Let $\mathcal B\subseteq\mathcal H$ be compact and $\overline r:\mathcal B\to[0,\infty)$ continuous with $\overline r(B)\ge r^\star(B)$.
If a baseline is feasible, \eqref{rec:center-rule} attains its optimum: the lemma makes its feasible set closed.
For a parameterization $B=Q_\alpha$, compact indices alone do not suffice; require $\alpha\mapsto Q_\alpha$ to be $W_1$-continuous and the radius rule continuous.
For the same benchmarks and radius rule, every triangle-certified candidate is feasible for the exact criterion.
Consequently, the global optimum under the exact criterion cannot have a larger radius.

For the inexpensive rule \eqref{rec:certificate-rule}, compact convex indices and continuous convex certificates $\overline r(\alpha)$ and $\overline s(\alpha)$ give an attained convex program whenever a feasible baseline exists.
Such certificates can use \eqref{cal:covering-bound} and a same-observation coupling between imputations to bound movement.
This convexity is a property of the certificates on the specified index set; an arbitrary parameterization need not preserve it.

\paragraph{Exact feasibility need not be convex.}
On $\X^\star=\{-1,0,2\}$ with ordinary distance, let $\mathcal H=\mathcal P(\X^\star)$, $\mathcal F=\{\delta_{-1}\}$, $Q_0=\delta_0$, and $\delta_0^{\rm old}=4/3$.
Use the constant candidate radius 1 on $\operatorname{conv}\{\delta_{-1},\delta_0\}$; it covers $\mathcal F$ throughout.
For $B_1=\delta_{-1}$, $B_2=\delta_0$, and $B_m=(B_1+B_2)/2$, direct transport optimization gives
\[
 \Phi(B_1,1;Q_0)=4/3,\qquad
 \Phi(B_2,1;Q_0)=1,\qquad \Phi(B_m,1;Q_0)=3/2.
\]
For $B_1$ the budget permits mass $1/3$ at 2, maximizing $\E|X^\star|$ at $4/3$; for $B_2$ the objective is the budget itself.
For $B_m$, move the mass $1/2$ at 0 to 2, at cost 1, giving $3/2$; no higher value is possible because each unit of cost increases $|X^\star|$ by at most one from its benchmark expectation $1/2$.
Hence both endpoints are accepted while their midpoint is rejected.

\subsection{A strict gain beyond the triangle certificate}
\label{app:three-point}
This example shows how exact nesting can accept a smaller covering radius and tighter parameter bounds when the triangle certificate permits no benchmark change.
It also shows that the corresponding Lipschitz tubes need not be nested.

Take $u=(-1,0)$, $v=(1,0)$, $w=(0,1)$ with Euclidean distance, $\mathcal H=\mathcal P(\{u,v,w\})$, and $\mathcal F=\operatorname{conv}\{\delta_u,\delta_v\}$.
The initial benchmark $Q_0=\delta_w$ has exact covering radius $\delta_0=\sqrt2$.
For a candidate $B=(p_u,p_v,p_w)$,
\[
 r^\star(B)=2\max(p_u,p_v)+\sqrt2 p_w,\qquad
 W_1(B,Q_0)=\sqrt2(1-p_w).
\]
The triangle certificate requires their sum, $\sqrt2+2\max(p_u,p_v)$, to be at most $\delta_0=\sqrt2$.
This forces $p_u=p_v=0$, so only $Q_0$ is accepted.

The midpoint $B=(1/2,1/2,0)$ has $r^\star(B)=1$, attaining the lower bound $W_1(\delta_u,\delta_v)/2$ from the triangle inequality.
Exact nesting accepts it: $\Phi(B,1;Q_0)=\sqrt2=\delta_0$, since all laws have distance at most $\sqrt2$ from $Q_0$, and $\delta_u$ belongs to the new envelope and attains this distance.

For $g_\theta(x)=\theta-d(x,w)$ and $\Theta=\mathbb R$, $\E_Qd(X^\star,w)=\sqrt2(1-Q\{w\})$.
The sharp set is therefore $\{\sqrt2\}$.
The old envelope contains every law on the three points, giving $[0,\sqrt2]$.
In the new envelope, mass at $w$ is at most $1/\sqrt2$: each unit moved there from $u$ or $v$ costs $\sqrt2$.
Moving that amount from $B$ attains the bound; $B$ itself has zero mass at $w$.
Thus
\begin{equation}
 \Theta_{\rm DRO}(\sqrt2;Q_0)=[0,\sqrt2],\qquad
 \Theta_{\rm DRO}(1;B)=[\sqrt2-1,\sqrt2].
 \label{eq:three-point-gain}
\end{equation}
Both the radius and the parameter set are strictly smaller.
The moment calculation establishes strict parameter tightening; a smaller radius alone does not guarantee it.

The individual 1-Lipschitz tubes are $[-\sqrt2,\sqrt2]$ and $[\sqrt2-1,\sqrt2+1]$, respectively.
They are not nested, but intersecting them gives the valid nested refinement $[\sqrt2-1,\sqrt2]$.

\subsection{One finite program for enclosing centers}
\label{app:recenter-finite}\label{app:radius-finite}
\label{op:finite-center}\label{app:finite-center-exact}
This subsection provides a finite linear-program implementation of the two benchmark-selection problems in Step~2 of Algorithm~\ref{alg:main-dro-construction}.

On finite $\X^\star=\{x_1^\star,\ldots,x_N^\star\}$, let $D_{ij}=d(x_i^\star,x_j^\star)$, let $\mathsf H$ be the probability vectors satisfying the retained linear restrictions, and suppose a complete generating list is known: $\mathcal F=\operatorname{conv}\{q^{(1)},\ldots,q^{(K)}\}$.
Mixtures of generator couplings show $r^\star(b)=\max_k W_1(q^{(k)},b)$; the reverse inequality holds because every generator is compatible.
Thus
\begin{equation}
 \begin{aligned}
 \min_{b,r,\Pi^{(1)},\ldots,\Pi^{(K)}}\quad&r\\
 \text{subject to}\quad &b\in\mathsf H,\quad r\ge0,\quad\Pi^{(k)}\ge0,\\
 &\Pi^{(k)}\mathbf1=q^{(k)},\quad (\Pi^{(k)})^\top\mathbf1=b,
 \quad\langle D,\Pi^{(k)}\rangle\le r\quad(\text{all }k)
 \end{aligned}\label{app:finite-center-lp}
\end{equation}
is the exact attained enclosing-center LP.
Here $\mathbf1$ is the all-ones vector and $\langle D,\Pi\rangle=\sum_{i,j}D_{ij}\Pi_{ij}$.
To impose triangle-certified nesting at an old mass vector $b^0$, add
\begin{equation}
 \Gamma\ge0,\quad\Gamma\mathbf1=b,\quad\Gamma^\top\mathbf1=b^0,
 \qquad r+\langle D,\Gamma\rangle\le\delta_0.
 \label{app:finite-center-nest-lp}
\end{equation}
Feasible plans certify both coverage and movement.
Conversely, optimal transport plans realize those distances, proving exactness for the triangle rule.
Compactness and the feasible baseline give attainment.
Enumerating all generators may be costly: a sampled subset does not certify coverage of the full class without an additional approximation bound.
This construction supplies calibration, unrestricted center proposals, and the convex incumbent in Algorithm~\ref{alg:main-dro-construction}.

\paragraph{Intersection fallback.}
\label{app:intersection-refinement}
When nesting is inconclusive, intersecting calibrated envelopes preserves $\mathcal F$.
The intersection is compact and convex, so Theorem~\ref{main:feasibility} applies.
On finite support, use one common mass vector and one coupling per benchmark.
Requiring a common law gives bounds at least as tight as intersecting the separate parameter sets; tube intersections need not give the same bounds.

%% file: appendix/computing_parameter_sets.tex
\clearpage
\section{Computing parameter sets}
\label{app:computational-analysis}
Step~4 of Algorithm~\ref{alg:main-dro-construction} combines moment bounds at individual parameter values with a search over parameters.
Subsection~\ref{app:finite-program} gives finite programs for the bounds; Subsection~\ref{app:parameter-computation} uses these or other moment routines for the search.
Subsection~\ref{app:dual-computation} bounds discretization error for the ambient envelope; Subsection~\ref{app:finite-nesting} supplies nesting certificates for Step~3.

\input{appendix/finite_support_parameter_feasibility}

\subsection{Affine endpoints and certified set inversion}
\label{app:affine-computation}\label{app:parameter-computation}
Here the search accounts for numerical uncertainty, whether locating the endpoints of an interval or approximating a possibly disconnected parameter set.

For $g_\theta=a-\theta b$, assume $\kappa\le\inf_{Q\in\mathcal A}\E_Qb$ for some $\kappa>0$, with continuous $a,b$ and compact convex envelope $\mathcal A$.
The positive denominator bound must hold over the whole envelope.
Continuity and mixtures give
\begin{equation}
 \Theta_{\rm DRO}=\Theta\cap[\theta_-,\theta_+],\qquad
 \theta_- =\min_{Q\in\mathcal A}\frac{\E_Qa}{\E_Qb},\quad
 \theta_+ =\max_{Q\in\mathcal A}\frac{\E_Qa}{\E_Qb}.
 \label{app:affine-computation-endpoints}
\end{equation}
Writing $m(\theta)=\min_Q\E_Qg_\theta$ and $M(\theta)=\max_Q\E_Qg_\theta$, the endpoints are their respective unique zeros.
For $\theta'>\theta$ and $f=m,M$,
\[
 \kappa(\theta'-\theta)\le f(\theta)-f(\theta')
 \le\bigl(\max_{Q\in\mathcal A}\E_Qb\bigr)(\theta'-\theta).
\]
This follows by comparing each law's affine moment and taking extrema.
Certified enclosures $\ell\le f(c)\le u$ permit bisection: positive $\ell$ raises the bracket's lower endpoint, negative $u$ lowers its upper endpoint, and otherwise the root lies in $[c+\ell/\kappa,c+u/\kappa]$.
An oracle width at most $\kappa\eta$ therefore yields endpoint brackets of width $\eta$.

For general scalar moments, interval set inversion is a standard alternative~\cite{JaulinWalter1993}.
Fix a nondegenerate compact search interval $J\subseteq\Theta$, a certified moment routine, and a known $K$ with $\sup_{Q\in\mathcal A}|\E_Q(g_\theta-g_{\theta'})| \le K|\theta-\theta'|$ on $J$.
Then $h(\theta)=\max\{m(\theta),-M(\theta)\}$ is $K$-Lipschitz and $\mathcal S_J=\Theta_{\rm DRO}\cap J=\{h\le0\}$.
Moment enclosures of width at most $\omega$ give $\ell(c)\le h(c)\le u(c)$ with $u(c)-\ell(c)\le\omega$.
For an interval with midpoint $c$ and half-width $r$, discard it if $\ell(c)-Kr>0$, accept it if $u(c)+Kr\le0$, and otherwise bisect until its width is at most $\eta$, when it remains unresolved.

\begin{samepage}
\begin{proposition}[Set-inversion guarantee]
\leavevmode\label{app:parameter-computation-guarantee}
If each certified oracle call terminates, subdivision terminates, and the unions of accepted intervals $\mathcal I$ and of accepted and unresolved intervals $\mathcal O$ satisfy
\begin{equation}
 \mathcal I\subseteq\mathcal S_J\subseteq\mathcal O,
 \qquad [h(\theta)]_+\le K\eta+\omega\quad(\theta\in\mathcal O).
 \label{app:eq:parameter-enclosures}
\end{equation}
As $\eta,\omega\to0$, $\mathcal O$ converges in Hausdorff distance to nonempty $\mathcal S_J$; for empty $\mathcal S_J$ it is eventually empty.
\end{proposition}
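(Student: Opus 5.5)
The argument runs by induction over the bisection tree, checking the two inclusions and the residual bound separately, then passing to the limit. The only structural input is that $h=\max\{m,-M\}$ is $K$-Lipschitz on $J$. Indeed, each $\E_Qg_\theta$ is $K$-Lipschitz in $\theta$ uniformly over $Q\in\mathcal A$, so the infimum $m$ and the supremum $M$ are $K$-Lipschitz, and so is their maximum. Also, $\mathcal S_J=\{\theta\in J:h(\theta)\le0\}$ by Theorem~\ref{main:feasibility}: $h(\theta)\le0$ iff $m(\theta)\le0\le M(\theta)$. Since $\omega>0$ is fixed and each box is bisected only while its width exceeds $\eta$, the tree has depth at most $\lceil\log_2(|J|/\eta)\rceil$. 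Termination therefore reduces to termination of the oracle calls, which is assumed.

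For the inclusions, take a box $I=[c-r,c+r]$. If it is accepted, then for every $\theta\in I$ we have $h(\theta)\le h(c)+Kr\le u(c)+Kr\le0$, so $I\subseteq\mathcal S_J$. If it is discarded, then $h(\theta)\ge h(c)-Kr\ge\ell(c)-Kr>0$, so $I\cap\mathcal S_J=\varnothing$. The final boxes cover $J$, so every point of $\mathcal S_J$ lies in an accepted or unresolved box, giving $\mathcal I\subseteq\mathcal S_J\subseteq\mathcal O$.

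For the residual bound, the accepted boxes already satisfy $h\le0$. An unresolved box has half-width $r\le\eta/2$ and was not discarded, so $\ell(c)\le Kr$. Then for $\theta$ in that box,
\[
h(\theta)\le h(c)+Kr\le u(c)+Kr\le \ell(c)+\omega+Kr\le 2Kr+\omega\le K\eta+\omega .
\]

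For convergence, write $\epsilon=K\eta+\omega$. The residual bound gives $\mathcal S_J\subseteq\mathcal O\subseteq\{\theta\in J:h(\theta)\le\epsilon\}$, and $\mathcal O$ is compact. The main obstacle is to show that the sublevel sets $\{h\le\epsilon\}\cap J$ converge in Hausdorff distance to $\{h\le0\}\cap J$ when the latter is nonempty. I would argue by contradiction, using compactness of $J$ and continuity of $h$. Suppose there are points $\theta_k$ with $h(\theta_k)\le\epsilon_k\to0$ whose distance to $\mathcal S_J$ stays at least $\rho>0$. A convergent subsequence has a limit $\bar\theta$ with $h(\bar\theta)\le0$, so $\bar\theta\in\mathcal S_J$, which is a contradiction. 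Since $\mathcal S_J\subseteq\mathcal O$, the other half of the Hausdorff distance is zero, and the two inclusions give convergence.

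When $\mathcal S_J$ is empty, $h$ is continuous and positive on compact $J$, so $\min_J h=\mu>0$. As soon as $\epsilon<\mu$, the residual bound forces $\mathcal O=\varnothing$.
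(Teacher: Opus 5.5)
Your proposal is correct and follows essentially the same route as the paper: the Lipschitz sandwich $\ell(c)-Kr\le h(\theta)\le u(c)+Kr$ justifies the discard and accept tests. The bound $h\le 2Kr+\omega\le K\eta+\omega$ on unresolved boxes comes from $\ell(c)\le Kr$ and $u(c)-\ell(c)\le\omega$, and Hausdorff convergence and eventual emptiness follow from the same compactness-subsequence argument and $\min_J h>0$. One cosmetic point: termination of the subdivision depends only on $\eta>0$, with depth at most $\max\{0,\lceil\log_2(|J|/\eta)\rceil\}$, and not on $\omega$.
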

\end{samepage}
\begin{proof}
Every point of the interval satisfies $\ell(c)-Kr\le h(\theta)\le u(c)+Kr$, proving the tests.
There are at most $2^{1+\max\{0,\lceil\log_2(|J|/\eta)\rceil\}}-1$ calls.
On an unresolved interval $\ell(c)\le Kr$ and $2r\le\eta$, so $h(\theta)\le2Kr+\omega\le K\eta+\omega$.
If outer points stayed a positive distance from $\mathcal S_J$, compactness would supply a convergent subsequence whose limit has $h\le0$, a contradiction.
If the set is empty, $\min_Jh>0$ gives eventual emptiness.
\end{proof}
Resolution $\eta$ alone guarantees neither a distance error bound nor a component count.
An isolated feasible value remains covered even if the inner approximation is empty.
The certified oracle and $K$ are additional computational inputs.

\subsection{Discretizing the ambient transport dual}
\label{app:dual-computation}
A finite grid need not preserve valid moment bounds.
The error bounds below account for approximating both the latent space and the benchmark, when no additional restrictions are retained.

For $\mathcal H=\mathcal P(\X^\star)$ and an $L$-Lipschitz $g$, denote the upper moment by $U(g)$.
In \eqref{main:dual} it suffices to search $0\le\lambda\le L$: for $\lambda\ge L$ the inner maximum equals $g(z)$, and increasing $\lambda$ cannot improve the objective.
Choose a finite $\rho$-net $\{x_j\}_{j=1}^N$ in $\X^\star$ and a finite benchmark $\widehat Q_0=\sum_i b_i\delta_{z_i}$ with $W_1(\widehat Q_0,Q_0)\le\zeta$.
Define the LP
\begin{equation}
 v_N=\min_{0\le\lambda\le L,t}
 \left\{\lambda\delta+\sum_i b_it_i:
 t_i\ge g(x_j)-\lambda d(x_j,z_i)\ (\text{all }i,j)\right\}.
 \label{app:eq:finite-dual-lp}
\end{equation}
Then
\begin{equation}
 v_N-L\zeta\le U(g)\le v_N+2L\rho+L\zeta.
 \label{app:eq:finite-dual-certificate}
\end{equation}
To prove this, the net changes an inner maximum by a one-sided error between zero and $(L+\lambda)\rho\le2L\rho$; that maximum is $\lambda$-Lipschitz in $z$, so the benchmark change contributes at most $L\zeta$.
Take minima over $[0,L]$.
For certified solver bounds $v^-\le v_N\le v^+$ substitute $v^-$ below and $v^+$ above.
The enclosure width is at most $(v^+-v^-)+2L\rho+2L\zeta$.
Apply the same construction to $-g$ for the lower moment.
The bounds also hold when $\delta=0$ or $L=0$.
These finite convex formulations build on Wasserstein duality and reformulations~\cite{BlanchetMurthy2019,GaoKleywegt2023,MohajerinEsfahaniKuhn2018}.
The stated net and benchmark errors must be certified; compactness alone gives no useful computational complexity bound.
Retained restrictions require a constrained program, such as Appendix~\ref{app:regression-dual}.

\subsection{Computing and bounding the nesting test}
\label{app:finite-nesting}
The preceding subsections compute parameter bounds for a given envelope.
This subsection checks whether a proposed envelope lies inside the original one, as required in Step~3 of Algorithm~\ref{alg:main-dro-construction}.
The methods compute $\Phi$ or a certified upper bound $U_\Phi$; $U_\Phi\le\delta_0$ certifies nesting, with coverage certified in Step~2.

If $Q_0=\delta_{z_0}$, transport to the old benchmark is unique:
\begin{equation}
 \Phi(B,\delta;Q_0)=
 \max_{Q\in\mathcal A_\delta(B)}\E_Qd(X^\star,z_0).
 \label{eq:pointmass-phi}
\end{equation}
Thus exact acceptance is an ordinary constrained moment calculation.
Appendix~\ref{app:network-nesting-lp} supplies a related exact continuous-graph formulation for deterministic network imputations.

For a common finite state space $\X^\star=\{x_1^\star,\ldots,x_N^\star\}$, write $b,q_0$ for the masses of $B,Q_0$, respectively, and $D_{ij}=d(x_i^\star,x_j^\star)$.
Let $\mathsf H$ denote the probability vectors satisfying the retained restrictions.
Kantorovich--Rubinstein duality gives
\begin{equation}
 \Phi(B,\delta;Q_0)=\max_{q,\pi,\varphi}
     \sum_i\varphi_i(q_i-q_{0i}),
 \label{eq:finite-phi-bilinear}
\end{equation}
subject to $q\in\mathsf H$, $\pi\ge0$, $\pi\mathbf1=q$, $\pi^\top\mathbf1=b$, $\langle D,\pi\rangle\le\delta$, and $\varphi_a=0$, $|\varphi_i-\varphi_j|\le D_{ij}$ for a fixed anchor $a$.
The expression is bilinear, not generally an LP.
The normalized potential polytope is compact.
Enumerating its vertices and solving one moment LP at each vertex computes the exact maximum for small support, since for each fixed $q$ some potential vertex is optimal.

\Needspace{5\baselineskip}
For an upper bound, put $L_i=-D_{ia}$ and $U_i=D_{ia}$, replace $q_i\varphi_i$ by $z_i$, and impose the four McCormick inequalities \cite{McCormick1976}:
\begin{equation}
 \begin{aligned}
 z_i&\ge L_iq_i,&z_i&\ge\varphi_i+U_iq_i-U_i,\\
 z_i&\le U_iq_i,&z_i&\le\varphi_i+L_iq_i-L_i.
 \end{aligned}\label{eq:mccormick-phi}
\end{equation}
With linear retained restrictions, maximizing $\sum_i z_i-q_0^\top\varphi$ is an LP whose value $U_{\rm MC}$ is at least $\Phi$: every exact product is feasible because $q_i\in[0,1]$ and $\varphi_i\in[L_i,U_i]$.
Take the minimum with the triangle upper bound.
The objective at a dual-feasible LP solution, adjusted for certified numerical error, can establish acceptance; a feasible law and potential in \eqref{eq:finite-phi-bilinear} give a lower bound and cannot do so.
Equality at the acceptance boundary requires an exact or outward-certified upper bound, not a solver tolerance.
A grid of a continuous space alone gives no such upper certificate.

\begin{samepage}
\paragraph{A benchmark with two support points accepted by the LP relaxation.}
This example shows that the LP relaxation can certify nesting when the triangle test is inconclusive, allowing a smaller covering radius and tighter parameter bounds.
Both benchmarks here have two support points; Appendix~\ref{app:three-point} uses a point-mass initial benchmark.

Let $\X^\star=\{0,-1,2\}$ with ordinary distance, $\mathcal H=\mathcal P(\X^\star)$, and $\mathcal F=\operatorname{conv}\{\delta_0,\delta_{-1}\}$.
With masses ordered as $(0,-1,2)$, take $q_0=(0,2/3,1/3)$ with radius $4/3$, and $b=(1/6,5/6,0)$ with radius $5/6$.
Both radii are exact; $W_1(b,q_0)=5/6$, so the triangle sum is $5/3>4/3$.
For $q=(x,y,z)$, the new envelope is $2z\le y$.
Anchoring the potential at 0, the McCormick objective is at most
\[
 \left(\frac23-\frac y3\right)
 +\left(\frac23+\frac{2z}3\right)\le\frac43.
\]
The two component bounds follow by convexly combining their McCormick upper inequalities.
At $q=\delta_0$, potentials $\varphi(-1)=-1$, $\varphi(2)=-2$ attain $4/3$.
Thus $U_{\rm MC}=\Phi=4/3$ certifies a pair the triangle rule rejects.
For the moment $\theta-\mathbf1\{X^\star=2\}$, the old and new DRO sets are $[0,7/9]$ and $[0,1/3]$; the sharp set is $\{0\}$.
\par
\end{samepage}

%% file: appendix/finite_support_parameter_feasibility.tex
\subsection{Parameter feasibility on finite supports}
\label{app:finite-program}
The formulation below tests membership in $\Theta_{\mathrm{DRO}}$ by expressing the transport budget, retained restrictions, and moment condition as constraints on a coupling.

Suppose the candidate law has maintained support $x_1^\star,\ldots,x_N^\star$, and the benchmark is $Q_0=\sum_{i=1}^M b_i\delta_{z_i^\star}$ with $b_i\geq0$ and $\sum_i b_i=1$.
Put $c_{ij}=d(x_j^\star,z_i^\star)$ and $g_{\theta j}=g_\theta(x_j^\star)$.
Write $\mathsf H$ for the probability vectors $q$ such that $\sum_jq_j\delta_{x_j^\star}\in\mathcal H$.
The feasibility constraints are
\begin{equation}
 \begin{aligned}
 &\pi_{ij}\geq0,\qquad
   \sum_j\pi_{ij}=b_i\quad(i=1,\ldots,M),\\
 &q_j=\sum_i\pi_{ij}\quad(j=1,\ldots,N),\qquad q\in\mathsf H,\\
 &\sum_{i,j}c_{ij}\pi_{ij}\leq\delta,\qquad
   \sum_jg_{\theta j}q_j=0.
 \end{aligned}
 \label{app:eq:finite-program}
\end{equation}
To compute a moment endpoint, remove the zero-moment equality and minimize or maximize $\sum_jg_{\theta j}q_j$.
To compute the profile, retain that equality, remove the budget inequality, and minimize $\sum_{i,j}c_{ij}\pi_{ij}$.
When the restrictions on $q$ are linear equalities and inequalities, these are linear programs for fixed $\theta$.

On these finite supports, the following equivalence holds for every $\theta\in\Theta$ and finite $\delta\ge0$:
\[
 \text{\eqref{app:eq:finite-program} is feasible}
 \quad\Longleftrightarrow\quad
 \theta\in\Theta_{\mathrm{DRO}}(\delta;Q_0).
\]

\begin{proof}[Proof of the equivalence]
Feasibility gives $q\ge0$ and $\sum_jq_j=1$.
For any $Q=\sum_jq_j\delta_{x_j^\star}$,
\[
 W_1(Q,Q_0)=\min_{\pi\ge0}
 \left\{\sum_{i,j}c_{ij}\pi_{ij}:
 \pi\mathbf1=b,\quad\pi^\top\mathbf1=q\right\}.
\]
Compactness of the nonempty coupling polytope gives attainment.
Together with $\E_Qg_\theta=\sum_jg_{\theta j}q_j$, this proves the equivalence and the endpoint and profile formulations.
\end{proof}

\begin{samepage}
Benchmark and candidate supports may differ.
The finite-program equivalence permits $Q_0\notin\mathcal H$ as an extension of the main setup, where Assumption~\ref{ass:retained-class} and Algorithm~\ref{alg:main-dro-construction} require $Q_0\in\mathcal H$.
Feasibility can fail at some radii.
Continuous-model approximations require a separate error bound (Appendix~\ref{app:dual-computation}).
\par
\end{samepage}

%% file: appendix/missing_outcomes.tex
\clearpage
\section{Missing outcomes}
\label{app:missing-outcomes}
\label{app:missingness}
\label{app:missing-dro-calculation}

This example separates the effects of the benchmark and the retained restrictions.
Let $X=(D,DY)$, with binary $D,Y$, and write $p_{ab}=P_X\{(a,b)\}$, so $p_{00}+p_{10}+p_{11}=1$.
The moment is $g_\theta(d,y)=\theta-y$, with $\Theta=[0,1]$.
We first use Hamming distance and the deliberately broad class $\mathcal H=\mathcal P(\{0,1\}^2)$.

\subsection{Compatibility, coverage, and moment optimization}
Compatibility fixes the two observed worker masses and the total nonworker mass.
The only free choice is the success probability among missing outcomes:
\begin{equation}
 \mathcal F=\{Q_t:0\le t\le1\},\qquad
 Q_t=p_{00}\{(1-t)\delta_{(0,0)}+t\delta_{(0,1)}\}
       +p_{10}\delta_{(1,0)}+p_{11}\delta_{(1,1)}.
 \label{app:eq:missing-completions}
\end{equation}
Every displayed law maps to $P_X$, and every compatible law has this form.
Thus $\E_{Q_t}Y=p_{11}+p_{00}t$, proving the familiar sharp interval $[p_{11},p_{11}+p_{00}]$ and attainment of every point \citep{Manski1989}.
If $p_{00}=0$, the laws coincide and no conditional success probability for nonworkers need be defined.

Couple $Q_s$ and $Q_t$ by keeping all common mass fixed and moving $p_{00}|s-t|$ between $(0,0)$ and $(0,1)$.
Its cost equals the absolute difference of the two outcome means, a lower bound for any coupling.
Consequently
\begin{equation}
 W_1(Q_s,Q_t)=p_{00}|s-t|,\qquad
 r^\star(Q_t)=p_{00}\max\{t,1-t\}.
 \label{app:eq:missing-center-radius}
\end{equation}
The initial zero imputation $Q_0$ therefore has exact covering radius $\delta_0=p_{00}$.

For any $\delta\ge0$, the outcome mean over $\mathcal A_\delta(Q_t)$ lies within $\delta$ of $p_{11}+p_{00}t$ and in $[0,1]$.
Both endpoint bounds are attainable: flipping binary outcome mass at fixed $D$ changes its mean and transport cost by the same amount.
Writing $\mu_t=p_{11}+p_{00}t$ gives
\[
 \min_{Q\in\mathcal A_\delta(Q_t)}\E_Qg_\theta
 =\theta-\min\{1,\mu_t+\delta\},\qquad
 \max_{Q\in\mathcal A_\delta(Q_t)}\E_Qg_\theta
 =\theta-\max\{0,\mu_t-\delta\}.
\]
Theorem~\ref{main:feasibility} and $L(\theta)=1$ therefore give
\begin{equation}
 \Theta_{\mathrm{DRO}}(\delta;Q_t)
 =\mathcal T_\delta(Q_t)
 =[\max\{0,\mu_t-\delta\},\min\{1,\mu_t+\delta\}].
 \label{app:eq:missing-recentered-set}
\end{equation}

\subsection{Midpoint geometry and the retained-information comparator}
\label{app:recentering-missing}
The midpoint $Q_{1/2}$ randomizes a binary missing outcome; it does not replace that outcome by the number $1/2$.
It has radius $p_{00}/2$ and movement $W_1(Q_{1/2},Q_0)=p_{00}/2$, so the triangle certificate is exact.
Moreover, for every benchmark $B\in\mathcal H$,
\[
 p_{00}=W_1(Q_0,Q_1)
 \le W_1(Q_0,B)+W_1(B,Q_1)\le2r^\star(B).
\]
Hence the midpoint minimizes the covering radius globally:
\begin{equation}
 \inf_{B\in\mathcal H}r^\star(B)=r^\star(Q_{1/2})=p_{00}/2.
 \label{app:eq:missing-optimal-radius}
\end{equation}
At this calibrated pair, \eqref{app:eq:missing-recentered-set} becomes the sharp interval.
For $(p_{00},p_{10},p_{11})=(0.2,0.4,0.4)$, the initial interval and tube are $[0.2,0.6]$; both become $[0.4,0.6]$ after recentering.

An equally inexpensive improvement is to retain the full observable marginal from the outset:
\[
 \mathcal H_{\mathrm{obs}}
 =\{Q:Q_{(D,DY)}=P_X\}=\mathcal F.
\]
All these laws already lie in the initial radius-$p_{00}$ ball.
The initial constrained moment programs therefore give the sharp interval, and recentering leaves it unchanged.
Retaining only the $D$ marginal would not fix the two worker outcome masses and does not give this conclusion.
The comparison shows why benchmark refinement and the choice of retained restrictions should be assessed together: the broad initial class makes the effect of recentering visible, while the observable constraints remove the same excess width directly.

%% file: appendix/thresholded_networks.tex
\clearpage
\section{Thresholded networks}
\label{app:network}
\label{app:network-calculation}

This example implements capacity-based coverage, moment optimization, and certified benchmark refinement.
Analytical endpoints validate the programs without being required to solve them.
Capacity comparators show how additional retained information improves the same procedure when sharp identification is difficult.

\subsection{Measurement model and the source of sharp-computation difficulty}
There are $m\ge2$ treated core individuals, a treated contact $h$, and an untreated contact $s$, with $n=m+2$.
Orient each edge of a known weighted core graph $E$ from its smaller to its larger label.
Write $\operatorname{out}(i)$ for the edges leaving $i$ and $d_i=\sum_{e\in\operatorname{out}(i)}w_e$, $d_{\max}=\max_i d_i>0$, and $v=\sum_{e\in E}w_e$.
For $x\in[0,1]^E$, define the row-stochastic graph
\begin{equation}
 \begin{aligned}
 A_{ij}(x)&=\eta w_ex_e &&(e=\{i,j\}\in E,\ i<j),\\
 A_{is}(x)&=1-\eta\sum_{e\in\operatorname{out}(i)}w_ex_e
                                      &&(i=1,\ldots,m),\\
 A_{h1}(x)&=A_{s1}(x)=1,
 \end{aligned}
 \label{app:eq:network-construction}
\end{equation}
with all other entries zero.
Assume $0<r:=\eta d_{\max}<\min\{\xi,1-\xi\}$.
Each eligible weight lies below the reporting threshold $\xi$; each core weight to $s$ exceeds it.
Thus every $A(x)$ gives the same binary report $G=A(0)$.

The maintained hidden-role restriction is $x_e=x_e(z)=\mathbf1\{z_i\ne z_j\}$ for a binary vector $z$.
Conditional on $X=(Y,T,G)$, any distribution on
\begin{equation}
 \Gamma(G)=\{A(x(z)):z\in\{0,1\}^m\}
 \label{app:eq:network-fiber}
\end{equation}
is allowed.
The known capacities and shared roles are additional restrictions beyond thresholding.
The auxiliary labels may be swapped across observed records; all constructions use their record-specific labels.
Put $a=\E_{P_X}[T^\top(Y-\beta_0T)]>0$ and $b(T,A)=T^\top AT$.
The maintained scalar moment is $\E_Qg_\theta=a-\theta\E_Qb=0$.

Let $C(z)=\sum_e w_ex_e(z)$ and $C^\star=\max_z C(z)$.
The treated contact contributes one to exposure, the untreated contact zero, and the core rows $\eta C(z)$.
Consequently
\begin{equation}
 b(T,A(x(z)))=1+\eta C(z),\qquad
 \{\E_Qb:Q\in\mathcal F\}=[1,1+\eta C^\star].
 \label{app:eq:network-exposure}
\end{equation}
Constant roles and a maximizing cut attain the endpoints at every observation.
Mixtures preserve $P_X$ and attain the intermediate values.
With $\Theta=\mathbb R$ as in Example~\ref{ex:lim-threshold}, $a>0$ forces $\theta>0$, so $\Theta_{\mathrm{ID}}=[a/(1+\eta C^\star),a]$.
For integer weights, the choice $a=1$, $\xi=2/5$, and $\eta=[4(m+1)d_{\max}]^{-1}$ has polynomial encoding length.
The sharp lower endpoint recovers $C^\star$ exactly, so computing it is NP-hard \citep{GareyJohnsonStockmeyer1976}.
The difficulty is the common role assignment across rows.
This result concerns the scalar moment.
The motivating conditional outcome restriction $\E[Y-\beta_0T-\theta G^\star T\mid G^\star,T]=0$ is stronger and would define a different compatible class.

\paragraph{Numerical instance.}
We generate the core graph on $m=150$ nodes by independently including each unordered pair with probability $0.1$ and assigning independent weights uniformly distributed on $\{1,\ldots,10\}$ to the included edges.
Figure~\ref{fig:recentering-sets} conditions on one fixed realization, with $n=152$, $1{,}112$ eligible edges, $v=6{,}143$, and $d_{\max}=115$, with $\xi=0.4$, $\beta_0=0$, $r=1/604$, and $\eta=1/69{,}460$.
A fair coin swaps the auxiliary labels; conditional on $(T,G)$, the outcomes are independent with
\begin{equation}
 Y_i\sim\operatorname{Bernoulli}
       \left(\frac{T_i}{m+1}+\frac{1-T_i}{5}\right).
 \label{app:eq:network-observed}
\end{equation}
This observed law gives $a=1$.
Both auxiliary nodes have exposure one, but their outcome means are $1/151$ and $1/5$; hence no common $\theta$ satisfies the stronger conditional outcome model.
The illustration maintains only the scalar moment.

\subsection{Coverage and constrained moment programs}
\label{app:radius-network}
Use the capacity weights and metric
\[
 \omega_i=\frac1{10}+\frac9{10}\frac{d_i}{d_{\max}}
 \ (i\le m),\qquad \omega_h=\omega_s=1,\qquad
 d_{\mathrm{cap}}(A,B)=\max_{i,j}\frac{|A_{ij}-B_{ij}|}{\omega_i}.
\]
Let $\mathcal S=\{A\ge0:A_{ii}=0,\ A\mathbf1=\mathbf1\}$ and retain $\mathcal H=\{Q\in\mathcal P(\operatorname{supp}(P_X) \times\mathcal S):Q_X=P_X\}$, with bounded outcomes and distance
\[
 d((x,A),(x',B))=\|y-y'\|_\infty
  +10\mathbf1\{(t,g)\ne(t',g')\}+d_{\mathrm{cap}}(A,B).
\]
The class is compact and convex.
It retains the observed marginal and row normalization, while relaxing the role and threshold restrictions.
Its graph diameter is at most $10$.

Let $C(T,G)$ be a deterministic imputation independent of outcomes, and let $B_C=\mathcal L_{P_X}(X,C(T,G))$ be its joint law under $P_X$.
The same-record coupling gives
\begin{equation}
 W_1(Q,B_C)=\E_Qd_{\mathrm{cap}}(G^\star,C(T,G))
 \qquad(Q\in\mathcal H).
 \label{app:eq:network-conditional-distance}
\end{equation}
To prove the reverse bound for any coupling, the triangle inequality bounds $d_{\mathrm{cap}}(A,C(T,G))$ by $d_{\mathrm{cap}}(A,C(T',G'))+10\mathbf1\{(T,G)\ne(T',G')\}$.
Integrate and use the definition of the full distance.
This identity does not extend automatically to arbitrary randomized or outcome-dependent imputations.

Initially, $C=G$ and $Q_0=B_G$.
Every compatible graph satisfies
\[
 d_{\mathrm{cap}}(A(x(z)),G)
 =\eta\max_i\frac{\sum_{e\in\operatorname{out}(i)}w_ex_e(z)}{\omega_i}
 \le\eta\max_i\frac{d_i}{\omega_i}=r.
\]
Separate a largest-capacity row from all its outgoing neighbors to attain equality.
Hence $r^\star(Q_0)=r$: coverage can be computed from row capacities without solving maximum cut.

For the imputation family $C_t=A(t\mathbf1)$, $0\le t\le1/2$, set $B_t=B_{C_t}$ and $\delta_t=(1-t)r$.
All observed categories differ only by the auxiliary label swap; $C_t$ respects this swap.
Undoing it reduces moment optimization to a law on $\mathcal S$ with an expected transport budget.
Since exposure is linear and $d_{\mathrm{cap}}(\cdot,C_t)$ is convex, replacing that graph law by its mean preserves exposure and weakly lowers cost.
Thus the two exposure extrema are exactly the following finite linear programs, with continuous graph entries:
\begin{equation}
 \underset{A}{\operatorname{min/max}}\ T^\top AT
 \quad\text{subject to}\quad
 A\ge0,\quad A_{ii}=0,\quad A\mathbf1=\mathbf1,\quad
 |A_{ij}-(C_t)_{ij}|\le\omega_i\delta_t.
 \label{app:eq:network-moment-program}
\end{equation}
Here $T$ uses the common relabelled template.
This reduction uses linearity of exposure; the nesting calculation below has a convex objective and needs a law-level program.
For any proposed $\theta>0$, the optimized exposures $\underline b_t,\overline b_t$ give moment bounds $a-\theta\overline b_t$ and $a-\theta\underline b_t$.
The parameter search therefore uses the same moment-optimization interface as the other applications.

The following endpoint witnesses validate these programs.
Writing $S_\omega=\sum_{i=1}^m\omega_i$ and $D=d_{\mathrm{cap}}(A,C_t)$, row normalization gives $1-D\le b(T,A)\le1+t\eta v+D S_\omega$.
For the lower endpoint, put all core mass on $s$ and move $\delta_t$ of the contact $h$'s mass from node $1$ to $s$.
For the upper endpoint, move $\omega_i\delta_t$ from $s$ to $h$ in each core row of $C_t$, keeping the auxiliary rows fixed.
Both graphs are feasible: $t\eta d_i\le\omega_i\delta_t$ and $t\eta d_i+\omega_i\delta_t\le r<1$.
Therefore
\begin{equation}
 \underline b_t=1-\delta_t,\qquad
 \overline b_t=1+t\eta v+\delta_t S_\omega,\qquad
 \Theta_{\mathrm{DRO}}(\delta_t;B_t)
 =\left[\frac a{1+t\eta v+\delta_t S_\omega},
         \frac a{1-\delta_t}\right].
 \label{app:eq:network-mean-lp}
\end{equation}
\previewequationalias{app:eq:network-sets}

\subsection{Benchmark selection, tubes, and metric sensitivity}
\label{app:recentering-network}
For the same family,
\begin{equation}
 W_1(B_t,B_u)=|t-u|r,\qquad
 r^\star(B_t)=(1-t)r,\qquad
 Q_{\mathrm c}=B_{1/2}.
 \label{app:eq:network-midpoint}
\end{equation}
The distance follows from \eqref{app:eq:network-conditional-distance}.
Every cut graph differs from $C_t$ by at most $(1-t)\eta d_i$ in row $i$; the largest-capacity-row cut attains the bound.
These exact coverage and movement certificates make $\delta_t+W_1(B_t,Q_0)=r$ throughout the family, so the inexpensive triangle rule selects $t=1/2$.
The zero-cut law and the largest-row cut law are compatible and at distance $r$.
Their triangle inequality through any benchmark $B$ gives $r\le2r^\star(B)$.
Hence
\begin{equation}
 \inf_{B\in\mathcal H}r^\star(B)=r^\star(Q_{\mathrm c})=r/2.
 \label{app:eq:network-optimal-radius}
\end{equation}
The sharper nesting criterion cannot further reduce this covering radius.
Its role here is to make benchmark acceptance computable for other proposed imputations.

At the midpoint the exact DRO interval is
\begin{equation}
 \Theta_{\mathrm{DRO}}(r/2;Q_{\mathrm c})
 =\left[\frac a{1+\eta v/2+(r/2)S_\omega},\frac a{1-r/2}\right]
 \subsetneq
 \left[\frac a{1+rS_\omega},\frac a{1-r}\right].
 \label{app:eq:network-recentered-dro}
\end{equation}
Both endpoints improve.
For the lower endpoint the strict gain comes specifically from the positive floor in the metric:
\[
 rS_\omega-\eta v=\frac\eta{10}(md_{\max}-v)>0.
\]
The last inequality follows from $d_m=0$ and $d_{\max}>0$.
The chosen metric therefore affects how much parameter tightening recentering achieves.

The observed contribution is fixed in expectation over $\mathcal H$.
Replace $g_\theta$ by $a-\theta b$, which has the same expected moment.
Exposure is $(1+S_\omega)$-Lipschitz: for fixed $T$, use $b=\sum_{i:T_i=1}(1-A_{is})$; across treatment categories, $b\in[0,m+1]$ and $10(1+S_\omega)\ge m+1$.
The corresponding exposure tube is
\begin{equation}
 \mathcal T^{\mathrm{exp}}_{\delta_t}(B_t)
 =\{\theta>0:|a-\theta(1+t\eta v)|
                \le\delta_t(1+S_\omega)\theta\}.
 \label{app:eq:network-recentered-tubes}
\end{equation}
This tube uses the corrected moment; its Lipschitz bound need not apply to the original outcome moment.
When $1+t\eta v>\delta_t(1+S_\omega)$, divide $a$ by the two positive quantities $1+t\eta v\pm\delta_t(1+S_\omega)$ to obtain its endpoints.
For the numerical instance, $S_\omega=72{,}537/1{,}150$.
The moment programs and midpoint selection yield the initial and recentered DRO intervals $[0.905445,1.001658]$ and $[0.912047,1.000829]$ shown in Figure~\ref{fig:recentering-sets}; \eqref{app:eq:network-mean-lp} validates their endpoints, and \eqref{app:eq:network-recentered-tubes} gives the accompanying tubes.

\subsection{Strengthening the retained restrictions}
\label{app:network-comparators}
\label{app:residual-network}
The broad class above intentionally displays the effect of benchmark refinement.
A simple comparator retains the known capacity support:
\[
 \mathcal H_{\mathrm{cap}}
 =\{Q:Q_X=P_X,\quad G^\star=A(x),\ x\in[0,1]^E\ \ Q\text{-a.s.}\}.
\]
This is a closed convex class of laws; it drops the shared-role restriction but keeps all edge capacities and fixed auxiliary links.
Every such graph lies within $r$ of $G$ and within $r/2$ of $C_{1/2}$.
Thus the transport constraint is redundant for both calibrated pairs.
The exposure range is $[1,1+\eta v]$, attained at $x=0$ and $x=\mathbf1$, so both yield
\begin{equation}
 \Theta_{\mathrm{cap}}
 =\left[\frac a{1+\eta v},a\right].
 \label{app:eq:network-capacity-bound}
\end{equation}
The same moment programs can impose these additional linear support constraints.
There is no further gain from recentering this class.

A cut upper certificate $C^\star\le U$ gives the further valid restriction $\E_Qb\le1+\eta U$ and outer interval $[a/(1+\eta U),a]$.
For edge-disjoint triangles, every cut omits at least one edge from each triangle.
Subtracting the sum of their minimum weights from $v$ supplies such a certificate using only integer arithmetic.
These comparisons do not claim that the broad DRO relaxation dominates problem-specific relaxations; they show how inexpensive information strengthens its inputs.
In the numerical instance, a feasible cut has weight $4{,}167$, while $187$ edge-disjoint triangles have minimum weights summing to $720$.
Thus $4{,}167\le C^\star\le5{,}423$.
The sharp lower endpoint lies in $[69460/74883,69460/73627]\simeq[0.927580,0.943404]$, the bracket shown in Figure~\ref{fig:recentering-sets}; the sharp upper endpoint is $1$.
An exact maximum cut is not needed for these comparisons.

\subsection{An exact law-level nesting program}
\label{app:network-nesting}
\label{app:network-nesting-lp}
For a deterministic benchmark $B_C$ as in \eqref{app:eq:network-conditional-distance}, the old-distance identity gives
\[
 \Phi(B_C,\delta;Q_0)
 =\max_{Q\in\mathcal A_\delta(B_C)}
      \E_Qd_{\mathrm{cap}}(G^\star,G).
\]
Thus exact acceptance is another constrained moment calculation.
For $B_t$ at its calibrated radius, the triangle upper bound and the largest-row cut give $\Phi(B_t,\delta_t;Q_0)=r$.
Assume additionally that $C$ respects the example's auxiliary label swaps.
After undoing those swaps, its matrix is a common $C\in\mathcal S$, the old graph is a common $G$, and the problem is over graph laws $\mu$ with $\E_\mu d_{\mathrm{cap}}(A,C)\le\delta$.
Independence of the imputation from outcomes is essential for this reduction; arbitrary randomized benchmarks require coupling variables beyond those below.

Each row of $G$ has a single unit entry, at column $j_i$.
Row normalization therefore gives
\[
 h(A):=d_{\mathrm{cap}}(A,G)
 =\max_{1\le i\le n}\ell_i(A),\qquad
 \ell_i(A)=\frac{1-A_{ij_i}}{\omega_i}.
\]
Introduce branch probabilities $\lambda_i$, scaled graphs $Z^i$, and transport costs $\tau_i$.
The exact finite LP is
\begin{equation}
 \begin{aligned}
 \Phi(B_C,\delta;Q_0)
 =\max_{\lambda,Z,\tau}\quad&
       \sum_{i=1}^n\frac{\lambda_i-Z^i_{ij_i}}{\omega_i}\\
 \text{subject to}\quad&
       \lambda_i,\tau_i\ge0,\quad Z^i\ge0,\quad
       \sum_i\lambda_i=1,\quad\sum_i\tau_i\le\delta,\\
 & Z^i_{kk}=0,\quad\sum_jZ^i_{kj}=\lambda_i
                   &&(i,k=1,\ldots,n),\\
 &-\omega_k\tau_i\le Z^i_{kj}-\lambda_iC_{kj}
             \le\omega_k\tau_i &&(i,k,j=1,\ldots,n).
 \end{aligned}
 \label{app:eq:network-nesting-lp}
\end{equation}
No graph-space discretization is used.
The shared budget $\sum_i\tau_i\le\delta$ controls expected transport cost.
Requiring $d_{\mathrm{cap}}(A,C)\le\delta$ for every graph can exclude feasible mixtures and underestimate $\Phi$.

To prove equality, partition any feasible graph law by an index attaining $\max_i\ell_i(A)$, breaking ties measurably.
On branch $i$, let $\lambda_i$ be its probability, $Z^i$ its unnormalized matrix mean, and $\tau_i$ its integrated distance to $C$.
Convexity of the distance gives the displayed constraints, and the objective is exactly $\E_\mu h(A)$.
Conversely, from any LP solution put mass $\lambda_i$ on $Z^i/\lambda_i$ for every positive branch.
Its expected cost is at most $\sum_i\tau_i$; its expected $h$ is at least the LP objective, because $h\ge\ell_i$ on each branch.
The two inequalities prove equality.
Linear restrictions on the mean graph can be imposed on $\sum_iZ^i$; linear support restrictions apply to each scaled branch.
The LP has order $n^3$ variables and is not asserted to be inexpensive for every network size.
The numerical midpoint needs only the triangle certificate.

%% file: appendix/measurement_error_regression.tex
\section{Measurement-error regression}
\label{app:measurement-error}
\label{app:regression-calculation}

This example applies Algorithm~\ref{alg:main-dro-construction} to a regression slope with a mismeasured regressor.
Retaining the observed joint distribution and the squared-error budget makes Step~4 a tractable sharp-identification problem.
The numerical illustration follows all five steps: the exact nesting test accepts a smaller-radius benchmark, while the DRO set is already sharp.

\subsection{Model and retained information}
\label{ex:additive-measurement-error}
Let $Y$ be an accurately measured outcome, $U$ the true regressor, and $W=U+\epsilon$ its recorded value.
In the notation of Section~\ref{sec:model},
\[
 X^\star=(Y,U,\epsilon),\qquad
 X=f(X^\star)=(Y,U+\epsilon)=(Y,W).
\]
For known bounds $\bar Y,\bar W$ and error budget $M\ge0$, the maintained class $\mathcal C$ imposes $|Y|\le\bar Y$, $|U|,|W|\le\bar W$, and the error and positivity restrictions below.
A law $Q$ of $X^\star$ is compatible exactly when
\begin{equation}
 Q_{Y,W}=P_X,\qquad \E_Q\epsilon=0,\qquad
 \E_Q\epsilon^2\le M,\qquad \E_QU^2>0.
 \label{app:eq:regression-model}
\end{equation}
Here $W=U+\epsilon$, and $Q_{Y,W}=P_X$ is the condition $f_\#Q=P_X$.
The error may depend on the outcome and true regressor.
The target and moment are
\[
 \theta_0=\frac{\E_{P_{X^\star}}[UY]}{\E_{P_{X^\star}}U^2},\qquad
 g_\theta(X^\star)=U(Y-\theta U),\qquad \Theta=\mathbb R.
\]
The positive second moment gives a unique slope for each compatible law.

Keep the full state $X^\star$ in the optimization.
For $x^\star=(y,u,\epsilon)$ and $x^{\star\prime}=(y',u',\epsilon')$, write $w=u+\epsilon$ and $w'=u'+\epsilon'$, and use
\begin{equation}
 \begin{split}
 d(x^\star,x^{\star\prime})&=|y-y'|+|u-u'|+|w-w'|,\\
 \mathcal H&=\{Q:Q_{Y,W}=P_X,\ \E_Q\epsilon=0,
                  \ \E_Q\epsilon^2\le M\},
 \qquad \mu=\E_{P_X}W.
 \end{split}
 \label{ex:regression-feasibility}
\end{equation}
Laws in $\mathcal H$ have the stated bounded support and $\E_QU=\mu$.
Assume $\E_{P_X}|W|>\sqrt M$.
Every $Q\in\mathcal H$ satisfies
\begin{equation}
 \E_QU^2\ge(\E_Q|U|)^2
 \ge(\E_{P_X}|W|-\sqrt M)^2=:\kappa>0.
 \label{app:eq:regression-denominator}
\end{equation}
The positivity restriction is therefore automatic, and $\mathcal H=\mathcal F$.
This retains existing identifying information without imposing additional assumptions.

Use the shrinkage imputations $c_\alpha(w)=(1-\alpha)w+\alpha\mu$, with $0\le\alpha\le1$ and $\alpha^2\E_{P_X}(W-\mu)^2\le M$.
Then
\begin{equation}
 Q_\alpha=\mathcal L_{P_X}
       (Y,c_\alpha(W),W-c_\alpha(W))\in\mathcal H,
 \qquad Q_0=\mathcal L_{P_X}(Y,W,0).
 \label{app:eq:regression-full-benchmarks}
\end{equation}
For every $Q\in\mathcal H$,
\begin{equation}
 W_1(Q,Q_\alpha)=\E_Q|U-c_\alpha(W)|.
 \label{app:eq:regression-transport-identity}
\end{equation}
Indeed, coupling each $(Y,U,\epsilon)$ with $(Y,c_\alpha(W),W-c_\alpha(W))$ gives the upper bound.
The function $|u-c_\alpha(w)|$ is $1$-Lipschitz for $d$ and vanishes on the benchmark, giving the reverse bound.
The transport constraint is therefore a single expected-cost budget.
In particular, $W_1(Q,Q_0)=\E_Q|\epsilon|\le\sqrt M$ for every compatible $Q$, so $\delta_0=\sqrt M$ ensures coverage.

More generally, every calibrated pair satisfies
\begin{equation}
 \delta\ge r^\star(Q_\alpha)
 \quad\Longrightarrow\quad
 \mathcal A_\delta(Q_\alpha)=\mathcal H=\mathcal F.
 \label{app:eq:regression-envelope-exact}
\end{equation}
Hence the transport constraint can be dropped from Step~4, and the DRO set equals the sharp identified set.
Benchmark selection can still reduce the radius and improve the Lipschitz tube.
In particular,
\begin{equation}
 |g_\theta(x^\star)-g_\theta(x^{\star\prime})|
 \le\bar W|y-y'|+(\bar Y+2\bar W|\theta|)|u-u'|
 \le L(\theta)d(x^\star,x^{\star\prime}),
 \label{app:eq:regression-lipschitz}
\end{equation}
where $L(\theta)=\max\{\bar W,\bar Y+2\bar W|\theta|\}$.
The general Step~5 tube is
\begin{equation}
 |\E_{Q_\alpha}[YU]-\theta\E_{Q_\alpha}U^2|
 \le\delta L(\theta).
 \label{ex:regression-tube}
\end{equation}

\subsection{Computing the residual moment in Step 4}
\label{app:regression-dual}
Retaining $P_X$ fixes the observable contribution to the moment.
With $h_\theta(Y,W)=W(Y-\theta W)$ and $e_\theta(X^\star)=\epsilon(2\theta W-Y)-\theta\epsilon^2$, substitution of $U=W-\epsilon$ gives, for every $Q\in\mathcal H$,
\begin{samepage}
\begin{equation}
 \E_Qg_\theta=\E_{P_X}h_\theta+\E_Qe_\theta.
 \label{eq:pop-eiv-residual}
\end{equation}
\previewequationalias{pop:eiv-residual-identity}
\par
\end{samepage}
Hence Step~4 adds the known $\E_{P_X}h_\theta$ to the minimum and maximum residual expectations over the selected envelope, as in Appendix~\ref{app:population-residuals}.
These are exactly the original moment bounds.
The following dual computes them for continuous as well as discrete observed laws.

\Needspace{6\baselineskip}
\begin{proposition}[Regression moment computation]
\label{app:prop:regression-dual}
For any calibrated pair in \eqref{app:eq:regression-envelope-exact}, the upper moment bound is
\begin{equation}
 \overline\psi_\theta
 =\inf_{\substack{\eta\ge0\\\nu\in\mathbb R}}
 \left\{\eta M+
 \E_{P_X}\max_{|u|\le\bar W}
 [u(Y-\theta u)-\nu(W-u)-\eta(W-u)^2]\right\}.
 \label{app:eq:regression-constrained-dual}
\end{equation}
Applying the same formula to $-g_\theta$ and negating gives the lower bound $\underline\psi_\theta$.
Both are convex minimizations with analytic inner maxima.
\end{proposition}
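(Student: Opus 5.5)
The starting point is \eqref{app:eq:regression-envelope-exact}: for any calibrated pair the envelope equals $\mathcal H$. The upper moment bound is therefore
\[
 \overline\psi_\theta=\sup_{Q\in\mathcal H}\E_Qg_\theta(X^\star).
\]
This is an optimization over laws of $(Y,W,U)$ with fixed $(Y,W)$-marginal $P_X$. It carries one linear equality $\E_Q(W-U)=0$ and one linear inequality $\E_Q(W-U)^2\le M$, and $U$ ranges over $[-\bar W,\bar W]$.

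Disintegrating $Q$ along $P_X$, any $Q\in\mathcal H$ is $P_X(dy,dw)K(du\mid y,w)$ for a Markov kernel $K$ supported on $[-\bar W,\bar W]$. The primal becomes a linear program in $K$ with two scalar moment constraints. I would introduce multipliers $\nu\in\mathbb R$ for the mean constraint and $\eta\ge0$ for the second-moment constraint, and form the Lagrangian
\[
 \eta M+\E_Q\bigl[U(Y-\theta U)-\nu(W-U)-\eta(W-U)^2\bigr].
\]
For fixed $(\eta,\nu)$, the supremum over kernels is attained pointwise. A measurable maximizer exists because the integrand is a concave quadratic in $u$ on a compact interval with a closed-form argmax, so no selection lemma is needed. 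This yields exactly the bracket in \eqref{app:eq:regression-constrained-dual}, and weak duality ($\le$) is immediate.

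The main step is strong duality, i.e.\ no gap. I would use Sion's minimax theorem as in the proof of Theorem~\ref{main:tube}. The set of laws with marginal $P_X$ and $U$ supported on $[-\bar W,\bar W]$ is weakly compact and convex by compactness of the support. The Lagrangian is affine and continuous in $Q$, and affine in $(\eta,\nu)$. Sion therefore swaps $\sup_Q\inf_{\eta,\nu}$ with $\inf_{\eta,\nu}\sup_Q$, giving equality.

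One subtlety is that $\inf_{\eta\ge0,\nu}$ of the Lagrangian equals $\E_Qg_\theta$ when $Q$ satisfies the constraints and $-\infty$ otherwise. Sion only requires one side to be compact, so unbounded multipliers pose no problem. If needed, I would also note Slater feasibility: $Q_\alpha$ with $\alpha$ small, or the law with $U=W$, gives $\E\epsilon=0$ and $\E\epsilon^2=0\le M$. This would supply attainment of the dual in the case $M>0$, but equality itself only needs Sion.

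For the lower bound, apply the same argument to $-g_\theta$. Convexity of the outer problem holds because the objective is a supremum of affine functions of $(\eta,\nu)$. The inner maxima are analytic because for $\theta\ge$ or $<$ the expression is a quadratic in $u$ with coefficient $-\theta+\eta$, maximized over an interval at the vertex or at an endpoint.
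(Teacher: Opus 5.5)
Your proposal follows the paper's proof: reduce to $\mathcal H$ via \eqref{app:eq:regression-envelope-exact}, dualize the mean and second-moment error constraints with $(\nu,\eta)$, interchange by compact (Sion) minimax over laws with marginal $P_X$, and maximize pointwise conditional on $(Y,W)$. One small correction: the $u^2$ coefficient of the bracket is $-(\theta+\eta)$, not $-\theta+\eta$, so the bracket is concave only when $\theta+\eta\ge0$; otherwise the maximum sits at $\pm\bar W$, which is why the paper admits the vertex $(y+\nu+2\eta w)/\{2(\theta+\eta)\}$ only when $\theta+\eta>0$ and it lies in $[-\bar W,\bar W]$. The closed-form, measurable maximizer your argument needs still exists in every case.
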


\begin{proof}
By \eqref{app:eq:regression-envelope-exact}, optimize over laws of $(Y,W,U)$ with $(Y,W)$ marginal $P_X$, $\E_Q(W-U)=0$, and $\E_Q(W-U)^2\le M$.
Multipliers $\nu$ and $\eta\ge0$ enforce the last two restrictions.
Compact minimax, as in Theorem~\ref{main:tube}, gives the dual; maximization conditional on $(Y,W)$ gives its displayed form.

For fixed $(y,w)$, the inner expression is quadratic in $u$.
Its maximum occurs among
\begin{equation}
 -\bar W,\quad\bar W,\qquad
 \frac{y+\nu+2\eta w}{2(\theta+\eta)}.
 \label{app:eq:regression-dual-candidates}
\end{equation}
The last candidate applies only when $\theta+\eta>0$ and it lies in $[-\bar W,\bar W]$.
For $-g_\theta$, replace $y$ and $\theta$ by $-y$ and $-\theta$.
Thus both bounds require only a convex optimization over two multipliers, observed-law integration, and analytic inner maximization.
No grid for the true regressor is needed.
\end{proof}

Under \eqref{app:eq:regression-denominator}, the affine endpoint result in Appendix~\ref{app:affine-computation} obtains parameter bounds by searching for the zeros of these moment bounds.
Thus the same program supplies both residual optimization and Algorithm~\ref{alg:main-dro-construction}'s parameter search.

\subsection{Applying Algorithm~\ref{alg:main-dro-construction}}
\label{app:recentering-eiv}
We apply the five steps to a continuous observed law.

\paragraph{Numerical data-generating process.}
Let
\begin{equation}
 \begin{aligned}
 U&\sim\tfrac12\operatorname{Unif}[-0.9,-0.882]
       +\tfrac12\operatorname{Unif}[0.882,0.9],\\
 \epsilon&=U/9,\qquad W=U+\epsilon,\\
 Y\mid U&\sim\operatorname{Unif}
       [\tfrac23U-0.4,\;\tfrac23U+0.4].
 \end{aligned}
 \label{app:eq:regression-continuous-dgp}
\end{equation}
Then $|Y|,|U|,|W|\le1$, $\theta_0=2/3$, and the error magnitude $|\epsilon|=|U|/9$ is about $11\%$ of $|U|$.
Set $\bar Y=\bar W=1$ and $M=1/100$; $\E\epsilon^2=7351/750000\approx0.009801<M$.
The generating relations specify one compatible law; identification imposes only \eqref{app:eq:regression-model}.
The observed law has zero means and
\begin{equation}
 \E_{P_X}|W|=\frac{99}{100},\qquad
 \E_{P_X}W^2=\frac{7351}{7500},\qquad
 \E_{P_X}[WY]=\frac{7351}{12500}.
 \label{app:eq:regression-continuous-moments}
\end{equation}
The recorded regressor lies near its support boundaries, permitting radius reductions through inward benchmark movement.

\paragraph{Steps 1--2: initialize and propose benchmarks.}
Use $Q_0$ with $\delta_0=\sqrt M=1/10$ and initial tube \eqref{ex:regression-tube}.
Here $c_\alpha(w)=(1-\alpha)w$; admissibility requires $\alpha^2(7351/7500)\le1/100$.
By \eqref{app:eq:regression-transport-identity}, $\overline s(\alpha)=99\alpha/100$ and
\begin{equation}
 \begin{split}
 \overline r(\alpha)=r^\star(Q_\alpha)
 &=\max_{Q\in\mathcal F}\E_Q|U-c_\alpha(W)|\\
 &=\inf_{\substack{\eta\ge0\\\nu\in\mathbb R}}
 \left\{\eta M+\E_{P_X}\max_{|u|\le1}
 [|u-c_\alpha(W)|-\nu(W-u)-\eta(W-u)^2]\right\}.
 \end{split}
 \label{app:eq:regression-coverage-dual}
\end{equation}
Compact minimax gives this dual.
Its inner maximum uses two quadratic pieces, and any dual choice certifies an upper radius.
The constrained and unrestricted programs \eqref{rec:unrestricted-proposal} yield
\begin{equation}
 \begin{array}{lll}
 \alpha_\triangle=4/245,
 & B_\triangle=Q_{\alpha_\triangle},
 & \delta_\triangle=1027/12250\approx0.083837,\\[2pt]
 \alpha_{\rm free}\approx0.030744,
 & B_{\rm free}=Q_{\alpha_{\rm free}},
 & \delta_{\rm free}\approx0.077644.
 \end{array}
 \label{app:eq:regression-continuous-proposals}
\end{equation}
Both are admissible.
Their optimality with $\delta_B=\overline r(\alpha)$ is verified below.

\paragraph{Step 3: nesting and selection.}
The triangle proposal satisfies $\delta_\triangle+99\alpha_\triangle/100=1/10$.
The unrestricted proposal has a smaller radius, but its triangle bound is approximately $0.108080>\delta_0$.
Exact acceptance follows from \eqref{app:eq:regression-envelope-exact}:
\begin{equation}
 \Phi(B_{\rm free},\delta_{\rm free};Q_0)
 =\max_{Q\in\mathcal F}\E_Q|\epsilon|
 =\sqrt M=\frac1{10}.
 \label{app:eq:regression-continuous-nesting-test}
\end{equation}
Cauchy--Schwarz gives the upper bound; the compatible law $\epsilon=\operatorname{sgn}(W)/10$ attains it.
Algorithm~\ref{alg:main-dro-construction} therefore selects $(Q_{\mathrm c},\delta_{\mathrm c}) =(Q_{\alpha_{\rm free}},\delta_{\rm free})$, with
\begin{equation}
 \mathcal A_{\delta_{\mathrm c}}(Q_{\mathrm c})
 =\mathcal A_{1/10}(Q_0)=\mathcal F.
 \label{rc:eiv-nesting}
\end{equation}
The radius falls from $0.1$ to approximately $0.077644$, while \eqref{app:eq:regression-denominator} gives $\kappa=7921/10000$ throughout.

\paragraph{Step 4: compute parameter bounds.}
Use the residual identity \eqref{eq:pop-eiv-residual} and dual \eqref{app:eq:regression-constrained-dual} for both moment signs.
Symmetrizing candidate laws under $(Y,W,U)\mapsto(-Y,-W,-U)$ preserves their moment and squared error and enforces $\E_Q\epsilon=0$; hence $\nu=0$ suffices.
Exact integration of the piecewise-polynomial conditional maxima and searches for the moment-bound zeros give
\begin{equation}
 \begin{aligned}
 \Theta_{\mathrm{DRO}}(1/10;Q_0)
 &=\Theta_{\mathrm{DRO}}(\delta_{\mathrm c};Q_{\mathrm c})
   =\Theta_{\mathrm{ID}}=[\theta_-,\theta_+],\\
 [\theta_-,\theta_+]&\approx[0.5890455,\;0.6718791].
 \end{aligned}
 \label{app:eq:regression-sharp-set}
\end{equation}
\previewequationalias{app:eq:regression-continuous-outer-bounds}
The witnesses below establish sharpness.
Recentering leaves this set unchanged because both envelopes equal $\mathcal F$.

\Needspace{8\baselineskip}
\paragraph{Step 5: refine and intersect the tubes.}
Since $|Y-3W/5|\le2/5$, writing $c=(1-\alpha)W$ gives
\[
 U(Y-\theta U)-c(Y-\theta c)
 =(U-c)\{Y-\theta(U+c)\},
\]
whose second factor is bounded in absolute value by $2/5+|3/5-(1-\alpha)\theta|+|\theta|$.
Thus \eqref{app:eq:regression-transport-identity} implies
\begin{equation}
 \begin{aligned}
 C_\alpha(\theta)&=\frac25+
       \left|\frac35-(1-\alpha)\theta\right|+|\theta|,\\
 |\E_Qg_\theta-\E_{Q_\alpha}g_\theta|
     &\le C_\alpha(\theta)W_1(Q,Q_\alpha),\qquad Q\in\mathcal H.
 \end{aligned}
 \label{app:eq:regression-benchmark-sensitivity}
\end{equation}
This benchmark-specific bound satisfies $C_\alpha(\theta)\le1+(2-\alpha)|\theta|\le L(\theta)$.
The resulting refined tube is
\begin{equation}
 \widetilde{\mathcal T}_\delta(Q_\alpha)
 =\left\{\theta:
 \left|(1-\alpha)\frac{7351}{12500}
       -\theta(1-\alpha)^2\frac{7351}{7500}\right|
       \le\delta C_\alpha(\theta)\right\}.
 \label{rc:eiv-tube}
\end{equation}
For calibrated pairs it contains $\Theta_{\mathrm{ID}}$ and lies inside \eqref{ex:regression-tube}.
Solving the two positive-slope branches, meeting at $\theta=3/[5(1-\alpha)]$, excludes negative slopes in both the initial and selected tubes and shows that the selected tube lies inside the initial one.
Their intersection is therefore the selected tube, shown in Figure~\ref{fig:regression-comparison}.

\begin{figure}[htbp]
\centering
\input{figures/measurement_error_regression_bounds}
\caption{Algorithm~\ref{alg:main-dro-construction} outputs for the continuous regression example.
The returned DRO set equals the sharp identified set.
The tube intersection uses the benchmark-specific refinement \eqref{app:eq:regression-benchmark-sensitivity}.}
\label{fig:regression-comparison}
\end{figure}

\Needspace{7\baselineskip}
\paragraph{Verification of the benchmark proposals.}
Write $h=1/50$, $s=1/10$, $v=|W|\sim\operatorname{Unif}[1-h,1]$, and $t=\operatorname{sgn}(W)\epsilon$.
Then $v-1\le t\le v+1$ and $\E_Qt^2\le s^2$; sign symmetry permits $\nu=0$ in \eqref{app:eq:regression-coverage-dual}.
For $0\le\alpha\le(s-h)^2/[4s(1-h)]=4/245$,
\begin{equation}
 |t-\alpha v|\le\frac{t^2}{2s}+\frac s2-\alpha v.
 \label{rc:eiv-radius-certificate}
\end{equation}
For $t\ge\alpha v$, this follows from $(t-s)^2\ge0$; otherwise, from $(s+v-1)^2\ge4s\alpha v$, since $(s+v-1)^2/v$ is minimized at $v=1-h$.
The compatible witness $t=s$ attains the bound, giving $\overline r(\alpha)=r^\star(Q_\alpha)=s-99\alpha/100$.

For $\alpha>4/245$, instead set $t=v-1$ on $v\in[1-h,1-h+hp]$ and choose a positive constant elsewhere to keep $\E_Qt^2=s^2$.
Small $p>0$ gives compatible laws by sign symmetry.
Bound $|t-\alpha v|$ below by $\alpha v-t$ on the first region and $t-\alpha v$ elsewhere.
The resulting lower bound on coverage plus $99\alpha/100$ equals $s$ at $p=0$ and has positive right derivative $2\alpha(1-h)-(s-h)^2/(2s)$.
Thus the triangle-feasible indices are exactly $[0,4/245]$, with minimum radius at the right endpoint.

For the unrestricted optimum, define
\[
 v_*=\sqrt{1-h+h^2/2},\qquad
 t_*=\left\{\frac{hs^2-[h^3-(1-v_*)^3]/3}{1-v_*}\right\}^{1/2}.
\]
Then
\begin{equation}
 \alpha_{\rm free}=\frac{(t_*-1+v_*)^2}{4t_*v_*},\qquad
 \delta_{\rm free}
 =\frac{t_*(1-v_*)+[h^2-(1-v_*)^2]/2}{h}.
 \label{app:eq:regression-free-certificate}
\end{equation}
The compatible law $t=t_*$ for $v>v_*$ and $t=v-1$ otherwise exhausts the budget and satisfies $\E[v\mathbf1\{v>v_*\}]=\E[v\mathbf1\{v<v_*\}]$.
The corresponding signs in $|x|\ge\pm x$ bound coverage below by $\delta_{\rm free}$ for every $\alpha$.
At $\alpha_{\rm free}$, the dual choice $\eta=1/(2t_*)$ gives
\[
 \max\left\{t_*/2-\alpha v,\;
             \alpha v+(1-v)-(1-v)^2/(2t_*)\right\}.
\]
Its branches switch at $v_*$ and are attained by the same law, proving global optimality.

\Needspace{10\baselineskip}
\paragraph{Verification of the sharp endpoints.}
For the upper endpoint, $\E_{P_X}Y^2=76159/187500$ and
\begin{equation}
 \theta_+=\frac{\E_{P_X}[WY]+\sqrt{M\E_{P_X}Y^2}}
                   {\E_{P_X}W^2-M},\qquad
 U_+=\frac{\sqrt{\E_{P_X}Y^2}\,W+\sqrt M\,Y}
                {\sqrt{\E_{P_X}Y^2}+2\sqrt M\,\theta_+}.
 \label{app:eq:regression-sharp-upper}
\end{equation}
This law has mean-zero error, $\E(W-U_+)^2=M$, and $|U_+|<0.956$.
It maximizes $uy-\theta_+u^2-\eta(w-u)^2$ pointwise for $\eta=\sqrt{\E_{P_X}Y^2}/(2\sqrt M)$ and has zero expected moment.
Hence every compatible slope is at most $\theta_+$.

For the lower endpoint, let $\eta_-\approx0.5011841$ be the unique solution of
\[
 \frac1{7500}
 +250\int_{49/50}^1 v\bigl[\tfrac25-(2\eta_--\tfrac35)v\bigr]_+\,dv
 =\frac1{100},
\]
where $[x]_+=\max\{x,0\}$.
Set
\begin{equation}
 U_-=\operatorname{sgn}(W)
       \bigl(1-2\mathbf1\{WY>2\eta_-W^2\}\bigr),
 \qquad \theta_-=\E_{P_X}[U_-Y].
 \label{app:eq:regression-sharp-lower}
\end{equation}
The budget equation and sign symmetry give compatibility.
Since $\eta_-<\theta_-$, convexity makes $U_-$ maximize $\theta_-u^2-uy-\eta_-(w-u)^2$ on $[-1,1]$.
Thus $\E_Qg_{\theta_-}\ge0$ for every compatible $Q$.
Both endpoint laws attain zero; their mixtures attain every intermediate slope.

%% file: figures/measurement_error_regression_bounds.tex
\begin{tikzpicture}[x=1cm,y=1cm,font=\footnotesize,line cap=round]
\node[font=\small\bfseries] at (7.25,.75) {Measurement-error regression};

\node[anchor=east] at (3.25,0) {Sharp ID};
\draw[black!65,line width=2pt] (6.335682,0) -- (7.578187,0);
\fill[black!65] (6.335682,0) circle (1.6pt);
\fill[black!65] (7.578187,0) circle (1.6pt);

\node[anchor=east] at (3.25,-.64) {DRO output};
\draw[inkblue,line width=2pt] (6.335682,-.64) -- (7.578187,-.64);
\fill[inkblue] (6.335682,-.64) circle (1.6pt);
\fill[inkblue] (7.578187,-.64) circle (1.6pt);

\node[anchor=east] at (3.25,-1.28) {Tube intersection};
\draw[orange!85!black,line width=2pt] (5.499890,-1.28) -- (8.331035,-1.28);
\fill[orange!85!black] (5.499890,-1.28) circle (1.6pt);
\fill[orange!85!black] (8.331035,-1.28) circle (1.6pt);

\draw[black!55,line width=.4pt] (3.5,-1.74) -- (11,-1.74);
\foreach \x/\v in {3.5/0.4,5/0.5,6.5/0.6,8/0.7,9.5/0.8,11/0.9}{
  \draw[black!55] (\x,-1.74) -- (\x,-1.81);
  \node[anchor=north,font=\scriptsize] at (\x,-1.86) {\v};
}
\node at (7.25,-2.37) {$\theta$};
\end{tikzpicture}